\newcommand{\T}{{\mathsf T}}%
\newcommand{\ST}{{\mathsf S}}
\newcommand{\rulename}[1]{\text{\scriptsize[\textsc{#1}]}}
\newcommand{\pp}{{\sf p}}
\newcommand{\q}{\pq}
\newcommand{\pq}{{\sf q}}
\newcommand{\pr}{{\sf r}}
\newcommand{\x}{x}
\newcommand{\y}{y}
\newcommand{\z}{z}
\newcommand{\h}{h}
\newcommand{\val}{\kf{v}}
\newcommand{\sep}{\ensuremath{~~\mathbf{|\!\!|}~~ }}
\newcommand{\kf}[1]{\ensuremath{\mathsf{#1}}}
\newcommand{\pc}{\ensuremath{~|~}}
\newcommand{\ty}{\textbf{t}}
\newcommand{\N}{\ensuremath{\mathcal M}}
\newcommand{\M}{\ensuremath{\mathcal M}}
\newcommand{\pa}[2]{#1 \triangleleft  #2}
\newcommand{\set}[1]{\{#1\}}
\newcommand{\true}{\kf{true}}
\newcommand{\false}{\kf{false}}
\newcommand{\sub}[2]{\set{#1/#2}}
\newcommand{\fend}{\mathtt{end}}
\newcommand{\subst}[2]{\{\nicefrac{#1}{#2}\}}%
\newcommand{\msgLabel}[1]{\mathit{{#1}}}%
\newcommand{\msgPayload}[1]{\mathsf{{#1}}}%
\newcommand{\procin}[3]{#1 ? #2.#3}%
\newcommand{\procinNoSuf}[2]{{#1} ? {#2}}
\newcommand{\procinconcur}[3]{\|#1\|_{#2}.#3}
\newcommand{\procinconcurNoSuf}[1]{\|#1\|}
\newcommand{\procout}[4]{#1 ! #2 \langle #3 \rangle . #4 }%
\newcommand{\procoutNoSuf}[3]{{#1} ! {#2}\langle{#3}\rangle}
\newcommand{\PP}{\ensuremath{P}}
\newcommand{\Q}{\PQ}
\newcommand{\PQ}{\ensuremath{Q}}
\newcommand{\PR}{\ensuremath{R}}
\newcommand{\PRio}{\ensuremath{R^{\mathit{?!}}}}
\newcommand{\cond}[3]{\kf{if}~ #1 ~\kf{then} ~#2 ~\kf{else}~#3}
\newcommand{\inact}{\mathbf{0}}
\newcommand{\emptyqueue}{\varnothing}
\newcommand{\tend}{\mathtt{end}}
\newcommand{\tbool}{\mathtt{bool}}
\newcommand{\tnat}{\mathtt{nat}}
\newcommand{\tin}[3]{#1?#2(#3)}
\newcommand{\tout}[3]{#1!#2 \langle #3 \rangle}
\DeclareMathOperator*{\tinternal}{%
  \vphantom{\sum}\mathlarger{\mathlarger{\mathlarger{\mathlarger{\raisebox{-3pt}{$\Sigma$}}}}}%
}%
\DeclareMathOperator*{\texternal}{%
  \vphantom{\sum}\mathlarger{\mathlarger{\mathlarger{\mathlarger{\raisebox{-3pt}{$\Sigma$}}}}}%
}%
\newcommand{\tqueue}{\sigma}
\newcommand{\temptyqueue}{\epsilon}
\renewcommand{\S}{\ST}%
\newcommand{\subt}{\leqslant}
\newcommand{\red}{\longrightarrow}
\newcommand{\recvLabel}[3]{{#1}{:}{#2}?{#3}}
\newcommand{\sendLabel}[3]{{#1}{:}{#2}!{#3}}
\newcommand{\ifLabel}[1]{{#1}{:}\kf{if}}
\newcommand{\redLabel}[1]{\xrightarrow{#1}}
\newcommand{\redSend}[3]{\xrightarrow{\sendLabel{#1}{#2}{#3}}}
\newcommand{\redRecv}[3]{\xrightarrow{\recvLabel{#1}{#2}{#3}}}
\newcommand{\redIf}[1]{\xrightarrow{\ifLabel{#1}}}
\newcommand{\reds}{\mathrel{\longrightarrow^{\!*}}}
\newcommand{\redPlus}{\mathrel{\longrightarrow^{\!+}}}
\newcommand{\nred}{{\,\,\not\!\!\longrightarrow}}%
\newcommand{\equivv}{\Rrightarrow}
\newcommand{\cinferrule}[3][]{
  \infer=[#1]{#3}{#2}%
}
\newcommand{\inferrule}[3][]{
  \infer[\!\!{#1}]{#3}{#2}%
}
\newcommand{\inferruleR}[3][]{
  \infer[\!\!{#1}]{#3}{#2}%
}
\newcommand{\cinfer}[3][]{
  \infer=[#1]{#3}{#2}%
}%
\newcommand{\dom}[1]{\ensuremath{dom( #1)}}
\newcommand{\EmptyQueue}{\varnothing}
\newcommand{\Queue}{h}
\newcommand{\msg}[3]{(#1,#2(#3))}
\begin{document}
\title{On Asynchronous Multiparty Session Types for Federated Learning}
\author{Ivan Proki\'c\inst{1}\orcidID{0000-0001-5420-1527} \and
Simona Proki\'c\inst{1}\orcidID{0000-0002-7161-3926} \and
Silvia Ghilezan\inst{1,2}\orcidID{0000-0003-2253-8285} \and
Alceste Scalas\inst{3}\orcidID{0000-0002-1153-6164} \and
Nobuko Yoshida\inst{4}\orcidID{0000-0002-3925-8557}}
\authorrunning{Ivan Proki\' c et al.}

\institute{Faculty of Technical Sciences, University of Novi Sad, Serbia \and
 Mathematical Institute of the Serbian Academy of Sciences and Arts, Belgrade, Serbia, 
 \email{\{prokic,simona.k,gsilvia\}@uns.ac.rs}\\
 \and
 Technical University of Denmark, DK,  
\email{alcsc@dtu.dk}\\ \and
University of Oxford, UK,  
\email{nobuko.yoshida@cs.ox.ac.uk}}
\maketitle              
\begin{abstract}
This paper improves the session typing theory to support the modelling and verification of processes that implement federated learning protocols. 
To this end, we build upon the asynchronous ``bottom-up'' session typing approach by adding support for input/output operations directed towards multiple participants at the same time. 
We further enhance the flexibility of our typing discipline and allow for safe process replacements by introducing a session subtyping relation tailored for this setting. 
We formally prove safety, deadlock-freedom, liveness, and session fidelity properties for our session typing system. %
Moreover, we highlight the nuances of our session typing system, which (compared to previous work) reveals interesting interplays and trade-offs between safety, liveness, and the flexibility of the subtyping relation.

\keywords{Multiparty session types  \and Federated learning \and $\pi$-calculus \and Type systems.}
\end{abstract}

\section{Introduction}\label{sec:intro}
Asynchronous multiparty session types \cite{HYC16} provide a formal approach to the verification of message-passing programs. 
The key idea is to express \emph{protocols as types}, and use type checking to verify whether one or more communicating processes correctly implement some desired protocols. 
To enhance usability of session type theory in real-world applications, many extensions and variations of the approach have been proposed over the years \cite{HuY17,ScalasY19,GheriLSTY22,MajumdarMSZ21,Stutz23,LiSWZ23,StutzD25}.   
However, these extensions remain insufficient for several important applications, including Federated learning (FL). 
FL is a distributed machine learning setting where clients train a model while keeping the training data decentralized \cite{McMahanMRHA17,BhuyanM22,BeltranPSBBPPC23,YuanWSYB24,popovic2023,ProkicGKPPK23}.  
In FL, communication protocols follow key patterns that must be expressed for proper modeling and verification. 
For instance, some initial investigations \cite{popovic2023,ProkicGKPPK23} note that using  existing session type theories to model and verify FL protocols can be challenging due to presence of ``arbitrary order of message arrivals''. We now explain the nature of these challenges. 

\emph{Modelling an asynchronous centralised federated learning protocol.} 
Originally, FL considered centralised approach \cite{McMahanMRHA17,BhuyanM22,popovic2023,ProkicGKPPK23},  where, in \emph{phase 1}, a central \emph{server} distributes the learning model to the \emph{clients}. In \emph{phase 2}, the clients receive the model, train it locally, and send the updated version back to the server. Finally, the server aggregates the updates and obtains an improved model. 
As a concrete example, consider a single round of a \emph{generic centralised one-shot federated learning algorithm (FLA)} \cite{popovic2023,ProkicGKPPK23} having one server and two clients. 
We may model the processes for server $\pp$ and two clients $\pq$ and $\pr$ with a value passing labeled $\pi$-calculus as follows
--- where $\msgLabel{ld}$ and $\msgLabel{upd}$ are message labels meaning ``local data'' and ``update,'' $\sum$ represents a choice, and %
and $\procoutNoSuf{\pq}{\msgLabel{ld}}{\msgPayload{\ldots}}$ (resp.~$\procinNoSuf{\pq}{\msgLabel{ld}(\ldots)}$) means ``send (resp.~receive) the message $\msgLabel{ld}$ to (resp.~from) participant $\pq$.''
\begin{align*}
\label{intro:fla_centr}
\PP & = \procoutNoSuf{\pq}{\msgLabel{ld}}{\msgPayload{data}}\,.\,
    \procoutNoSuf{\pr}{\msgLabel{ld}}{\msgPayload{data}}. 
    	\sum\biggl\{%
\begin{array}{@{}l@{}}
    \procinNoSuf{\pq}{\msgLabel{upd}(\x)}\,.\,
    \procinNoSuf{\pr}{\msgLabel{upd}(\y)}
 \\
    \procinNoSuf{\pr}{\msgLabel{upd}(\y)}\,.\,
    \procinNoSuf{\pq}{\msgLabel{upd}(\x)}
\end{array}
\biggr\} \\
\PQ & = \PR  =
\procinNoSuf{\pp}{\msgLabel{ld}(x)}.\procoutNoSuf{\pp}{\msgLabel{upd}}{\msgPayload{data}}
\end{align*}
In phase 1, $\pp$'s process $\PP$ above sends its local data (i.e., a machine learning model) to $\pq$ and then to $\pr$.  
In phase 2, server $\pp$ receives the update $\msgLabel{upd}$ from $\pq$
and also $\pr$; this can happen in two possible orders depending on whether the data is received from $\pq$ or $\pr$ first (i.e., arbitrary order of message arrivals),
and this is represented by the two branches of the choice $\sum$. Clients $\pq$ and $\pr$ receive server's local data and reply with update. 

Furthermore, suppose we have implemented the above process specifications and verified that they behave as intended. Now, we want to upgrade the participant $\pq$ to support for \emph{multi-model FL} \cite{BhuyanM22}, in which  a client can train multiple models, but only one per round (due to computational constraints). Assuming $\pq$ can train two models $\msgLabel{ld}$ and $\msgLabel{ld'}$, we may model an updated process $\PQ'$. 
\label{intro:process_update_for_multi_model_fl}
\begin{align*}
\PQ' = 
	\sum\biggl\{%
\begin{array}{@{}l@{}}
\procinNoSuf{\pp}{\msgLabel{ld}(x)}.\procoutNoSuf{\pp}{\msgLabel{upd}}{\msgPayload{data}} \\
\procinNoSuf{\pp}{\msgLabel{ld'}(y)}.\procoutNoSuf{\pp}{\msgLabel{upd'}}{\msgPayload{data'}}
\end{array}
\biggr\}
\end{align*}
This rises a question: \emph{Can we safely substitute $\PQ$ with $\PQ'$ in the protocol without needing to re-verify the entire protocol?} 

\emph{Modelling an asynchronous decentralised federated learning protocol.} 
Now let us consider a class of asynchronous decentralised federated learning (DFL) protocols that rely on a fully connected network topology (in which direct links exist between all pairs of nodes) 
\cite{BeltranPSBBPPC23,YuanWSYB24,popovic2023,ProkicGKPPK23}. 
Concretely, we consider a single round of a \emph{generic decentralised one-shot federated learning algorithm (FLA)} \cite{popovic2023,ProkicGKPPK23} having three nodes, 
where there is no central point of control (we provide a message sequence chart in \Cref{app:intro}). 
Each participant in this algorithm follows the same behaviour divided in three phases. 
In \emph{phase 1}, each participant sends its local 
data (i.e., a machine learning model) to all other participants.  
In \emph{phase 2}, each participant receives the other participants' local data, trains the algorithm, and sends back the updated data.
Finally, in \emph{phase 3}, each participant receives the updated data from other participants and aggregates the updates. 

Again, we may model the process for participant $\pp$ with a value passing labeled $\pi$-calculus. 
\label{intro:fla}%
\begin{align*}
\PP_{12} & = \procoutNoSuf{\pq}{\msgLabel{ld}}{\msgPayload{data}}\,.\,
    \procoutNoSuf{\pr}{\msgLabel{ld}}{\msgPayload{data}}.
\sum\biggl\{%
\begin{array}{@{}l@{}}
    \procinNoSuf{\pq}{\msgLabel{ld}(\x)}\,.\,
    \procoutNoSuf{\pq}{\msgLabel{upd}}{\msgPayload{data}}\,.\,
    \procinNoSuf{\pr}{\msgLabel{ld}(\y)}\,.\,
    \procoutNoSuf{\pr}{\msgLabel{upd}}{\msgPayload{data}}.
    \PP_3
 \\
    \procinNoSuf{\pr}{\msgLabel{ld}(\y)}\,.\,
    \procoutNoSuf{\pr}{\msgLabel{upd}}{\msgPayload{data}}\,.\,
    \procinNoSuf{\pq}{\msgLabel{ld}(\x)}\,.\,
    \procoutNoSuf{\pq}{\msgLabel{upd}}{\msgPayload{data}}.
    \PP_3
\end{array}
\biggr\} \\
\PP_3 & = \sum\biggl\{%
    \begin{array}{@{}l@{}}
        \procinNoSuf{\pq}{\msgLabel{upd}(\x)}\,.\,
        \procinNoSuf{\pr}{\msgLabel{upd}(\y)}
    \\%
        \procinNoSuf{\pr}{\msgLabel{upd}(\y)}\,.\,
        \procinNoSuf{\pq}{\msgLabel{upd}(\x)}
    \end{array}
    \biggr\}
\end{align*}
In phase 1, $\pp$'s process $\PP_{12}$ above sends its local data to $\pq$ and then to $\pr$.  
In phase 2 (starting with the sum), participant $\pp$ receives local data and replies the update $\msgLabel{upd}$ to $\pq$ and also $\pr$; this can happen in two possible orders depending on whether the data is received from $\pq$ or $\pr$ first. 
Finally, in phase 3 (in process $\PP_3$), $\pp$ receives update data from $\pq$ and $\pr$ in any order (again, this is represented with two choice branches).

Notice that each participant here exhibits an arbitrary order of message arrivals, and that no single participant guides the protocol. This can  be challenging to model using session type theories that rely on global types.

\emph{Contributions and outline of the paper.}
We present a novel ``bottom-up'' asynchronous session typing theory (in the style of  \cite{ScalasY19}, i.e., that does not require global types) that supports input/output operations directed towards multiple participants. This enables the modeling of arbitrary  message arrival orders. %
We demonstrate that our approach enables the modelling and enhances verification of processes implementing  
federated learning protocols (both centralized and decentralized) %
by abstracting protocol behavior to the level of types. 
We enhance flexibility of our typing discipline and allow for safe process replacements  
by introducing a session subtyping relation tailored for this setting. 
Furthermore, we formalise and prove safety, deadlock-freedom, liveness, and session fidelity properties for well-typed processes, revealing interesting dependencies between these properties in the presence of a subtyping relation.

The paper is organized as follows: 
\Cref{sec:msc} introduces the asynchronous session calculus; 
\Cref{sec:tsu} presents the types and subtyping relation; \Cref{sec:tsy} details the type system, our main results, and the implementation of federated learning protocols; and finally, \Cref{sec:related} concludes with a discussion of related and future work.

\section{The Calculus}\label{sec:msc}
In this section we present the syntax 
and operational semantics 
of the value-passing labeled $\pi$-calculus used in this work (we omit session creation and  delegation).
\begin{figure}[t]
\(
  \begin{array}{@{}rcll@{}}
  \val & \Coloneqq & 
    \x, \y, \z, \ldots 
    \sep 1, 2, \ldots  
    \sep \true, \false & \text{\em (variables, values)} \\[1mm]
    \N & \Coloneqq &  
       \pa\pp\PP \pc \pa\pp \h 
       \sep \N\pc\N  & \text{\em (participant, parallel)}\\[1mm]
    \h & \Coloneqq & 
       \emptyqueue 
       \sep \left(\q,\ell(\val)\right) 
       \sep \h\cdot \h  &\text{\em (empty, message, concatenation)}\\[1mm]
    \PP,\PQ & \Coloneqq & 
      		\sum_{i\in I}\procin{\pp_i}{\ell_i(\x_i)}{\PP_i}
      		\sep \sum_{i\in I} \procout{\pp_i}{\ell_i}{\val_i}{\PP_i} & \text{\em (external choice, internal choice)}\\[1mm]
            & & \cond{\val} \PP \PQ & \text{\em (conditional)} \\[1mm]
            & & X 
            \sep \mu X.\PP 
            \sep \inact  & \text{\em (variable, recursion, inaction)}
  \end{array}
  \)
\caption{\label{tab:sessions}Syntax of sessions, processes, and queues.} %
\vspace{-4mm}%
\end{figure}
The syntax of 
the calculus is defined in Table~\ref{tab:sessions}. 
Values can be either variables ($\x, \y, \z$) or constants (positive integers or booleans).

Our \textbf{asynchronous multiparty sessions}
(ranged over by $\N, \N',\ldots$) 
represent a parallel composition of \textbf{participants} (ranged over by $\pp,\pq,\ldots$) assigned with their 
\textbf{process} $\PP$ and \textbf{output message queue} $\h$ %
(notation: $\pa\pp\PP \pc \pa\pp \h$). %
Here, $\pa\pp \h$ denotes that $\h$ is the output message queue of participant $\pp$, and the \textbf{queued message} $\left(\q,\ell(\val)\right)$ represents that participant $\pp$ has sent message labeled $\ell$ with payload $\val$ to $\q$. 
In the syntax of processes, 
the \textbf{external choice} $\sum_{i\in I}\procin{\pp_i}{\ell_i(\x_i)}{\PP_i}$ denotes receiving from participant $\pp_i$ a message labelled $\ell_i$ with a value that should replace variable $\x_i$ in $\PP_i$, for any $i\in I$. %
The \textbf{internal choice}  $\sum_{i\in I} \procout{\pp_i}{\ell_i}{\val_i}{\PP_i}$ denotes sending to participant $\pp_i$ a message labelled $\ell_i$ with value $\val_i$, and then continuing as $\PP_i$, for any $i\in I$; %
we will see that, when the internal choice has more than one branch, then one of them is picked nondeterministically. %
In both external and internal choices, we assume $(\pp_i, \ell_i) \not= (\pp_j, \ell_j)$, for all $i, j \in I$, such that $i\not= j$. The \textbf{conditional} construct, process \textbf{variable}, \textbf{recursion}, and \textbf{inaction} $\inact$ are standard; we will sometimes omit writing $\inact$.
As usual, we assume that recursion is guarded, i.e., in the process $\mu X.\PP$,
the variable $X$ can occur in $\PP$ only under an internal or external
choice.

We show how this syntax can be used to compactly model two federated learning protocols \cite{popovic2023,ProkicGKPPK23}: a centralized (\Cref{ex:centralised}) and a decentralized (\Cref{ex:decentralised}). First, we set up some notation. 

We define a \textbf{concurrent input} macro $\procinconcur{\PR_j}{j\in I}{\PQ}$ to represent a process that awaits a series of incoming messages arriving in arbitrary order:%
\[
\begin{array}{r@{\;}c@{\;}l@{\qquad}r@{\;}c@{\;}l}
  \PR   & \Coloneqq & \procinNoSuf{\pp}{\msgLabel{l}(x)} \sep \procinNoSuf{\pp}{\msgLabel{l}(x)}.\PRio
  &%
  \PRio & \Coloneqq & \procinNoSuf{\pp}{\msgLabel{l}(x)} \sep \procoutNoSuf{\pp}{\msgLabel{l}}{\val} \sep \procinNoSuf{\pp}{\msgLabel{l}(x)}.\PRio \sep \procoutNoSuf{\pp}{\msgLabel{l}}{\val}.\PRio
\end{array}
\]
\vspace{-7mm}
\[
\begin{array}{l}
\procinconcur{\PR_i}{i\in I}{\PQ} \;=\;
\sum_{i\in I} 
\PR_i.
\procinconcur{\PR_j}{j\in I{\setminus}\{i\}}{\PQ}
\quad \text{and} \quad 
\procinconcur{\PR_j}{j\in \{i\}}{\PQ} \;=\; \PR_i.\PQ
\end{array}
\]
where $\PR_i$ are all process prefixes starting with an input, possibly followed by other inputs or outputs.
The concurrent input $\procinconcur{\PR_i}{i\in I}{\PQ}$ 
specifies an external choice of any input-prefixed $\PR_i$, for $i\in I$, 
followed by another choice of input-prefixed process with an index from $I{\setminus}\{i\}$; the composition continues until all $i \in I$ are covered, with process $\PQ$ added at the end. In this way, concurrent input process $\procinconcur{\PR_i}{i\in I}{\PQ}$ can perform the actions specified by any $\PR_i$ (for $i\in I$) in an arbitrary order, depending on which inputs become available first; afterwards, process $\PQ$ is always executed. 
For instance, using our concurrent input macro, we may represent the FLA process 
$\PP_3$ from \Cref{sec:intro} as  
$
\PP_3 =
\procinconcurNoSuf{ 
\procinNoSuf{\pq}{\msgLabel{upd}(\x)},\,
\procinNoSuf{\pr}{\msgLabel{upd}(\y)}
}
$.

\begin{example}[Modelling a centralised FL protocol implementation]
  \label{ex:centralised} 
As specified in \cite{popovic2023,ProkicGKPPK23}, the \emph{generic centralized one-shot federated learning protocol}
is implemented with one server and $n-1$ clients (\Cref{sec:intro} presented the processes of the protocol for $n=3$). 
We may model an implementation of this protocol as a session $\M=\prod_{i\in I} (\pa{\pp_i}  \PP_i \;\pc\;
\pa{\pp_i}\EmptyQueue)$, where $I=\{1,2\ldots, n\}$, and where $\pp_1$ plays the role of the server, while the rest of participants are clients, defined with:
\begin{align*}
\PP_1 &= 
\procoutNoSuf{\pp_2}{\msgLabel{ld}}{\msgPayload{data}}.\ldots\procoutNoSuf{\pp_n}{\msgLabel{ld}}{\msgPayload{data}}.\procinconcurNoSuf{\procinNoSuf{\pp_2}{\msgLabel{upd}(x_2)}, \ldots, \procinNoSuf{\pp_n}{\msgLabel{upd}(x_n)}}
\\ 
\PP_i &= 
\procinNoSuf{\pp_1}{\msgLabel{ld}(x)}.\procoutNoSuf{\pp_1}{\msgLabel{upd}}{\msgPayload{data}} \qquad\text{ for } i=2,\ldots, n
\end{align*}
Notice that, after sending the data to all the clients, the server $\PP_1$ then receives the updates from all of them in an arbitrary order. The clients first receive the data and then reply the update back to the server.
\end{example}

\begin{example}[Modelling a decentralised FL protocol implementation]
  \label{ex:decentralised}
As specified in \cite{popovic2023,ProkicGKPPK23}, 
an implementation of the \emph{generic decentralized one-shot federated learning protocol}
comprises $n$ participant processes acting both as servers and clients (\Cref{sec:intro} presented one  process of the protocol for $n=3$).   
We may model an implementation of this protocol with a session $\M=\prod_{i\in I} (\pa{\pp_i}  \PP_i \;\pc\;
\pa{\pp_i}\EmptyQueue)$, where $I=\{1,2\ldots, n\}$, and where process $\PP_1$ is defined as follows: (the rest of the processes are defined analogously)
\begin{align*}
\PP_1 &= 
\procoutNoSuf{\pp_2}{\msgLabel{ld}}{\msgPayload{data}}.\ldots 
\procoutNoSuf{\pp_n}{\msgLabel{ld}}{\msgPayload{data}}.\\
& \hspace{5mm}
\procinconcurNoSuf{
\procin{\pp_2}{\msgLabel{ld}(\x_2)}{
\procoutNoSuf{\pp_2}{\msgLabel{upd}}{\msgPayload{data}}}, \ldots,  
\procinNoSuf{\pp_n}{\msgLabel{ld}(\x_n)}.
\procoutNoSuf{\pp_n}{\msgLabel{upd}}{\msgPayload{data}}
}. \\ 
& \hspace{5mm} 
\procinconcurNoSuf{
\procinNoSuf{\pp_2}{\msgLabel{upd}(\y_2)}, \ldots,  
\procinNoSuf{\pp_n}{\msgLabel{upd}(\y_n)}
}
\end{align*}
Process $\PP_1$ specifies that participant $\pp_1$ first sends its local data to all other participants. 
Then, $\pp_1$ receives local data from all other participants and replies the update concurrently (i.e., in an arbitrary order). 
Finally, $\pp_1$ receives the updates from all other participants, again in an arbitrary order. 
\end{example}

\begin{figure}[t]{}
\(%
\begin{array}{@{}ll@{}}
\rulename{r-send} &       
\pa\pp \sum_{i\in I}  \procout{\pq_i}{\ell_i}{\val_i}{\PP_i} \pc \pa\pp\h_{\pp} \pc \N \\ 
& \hspace{10mm} \redSend{\pp}{\pq_k}{\ell_k}\; \pa\pp\PP_k \pc \pa\pp\h_\pp\cdot(\q_k,\ell_k(\val_k)) \pc \N \qquad (k \in I)  
\\[1mm]
\rulename{r-rcv}
&
\pa\pp\sum_{i\in I} \procin{\pq_i}{\ell_i(\x_i)}\PP_i \pc \pa\pp\h_\pp\pc \pa{\q}\Q \pc \pa{\pq} (\pp,\ell(\val)) \cdot \h \pc \N
\\
& \hspace{5mm}\redRecv{\pp}{\pq}{\ell}\; \pa\pp \PP_k\subst{\val}{\x_k} \pc \pa\pp\h_\pp\pc \pa{\q}\Q \pc \pa{\pq} \h \pc \N \quad (\exists k{\in} I: (\pq, \ell)=(\pq_k, \ell_k))  
\\[1mm]
\rulename{r-cond-T} & 
\pa\pp{\cond{\true}{\PP}{\Q}}  \pc \pa\pp\h \pc \N \;\redIf{\pp}\; \pa\pp\PP \pc\pa\pp\h \pc  \N 
\\[1mm]
\rulename{r-cond-F} & 
\pa\pp{\cond{\false}{\PP}{\Q}}  \pc \pa\pp\h \pc \N \;\redIf{\pp}\; \pa\pp\Q \pc \pa\pp\h \pc \N
\\[1mm]
\rulename{r-struct} & 
\N_1\equivv \N_1' \;\;\;\text{and}\;\;\; \N_1'\red \N_2' \;\;\;\text{and}\;\;\; \N_2'\equivv\N_2 \quad\implies\quad 
\N_1\red\N_2
\end{array}
\)
\caption{\label{tab:main:reduction}Reduction relation on sessions.}%
\vspace{-4mm}
\end{figure}

\emph{Session Reductions.} The \textbf{reduction relation} for our process calculus
is defined in \Cref{tab:main:reduction}. 
Rule \rulename{r-send} specifies sending one of the messages from the internal choice, while the other choices are discarded;
the message $\ell_k$ with payload $\val_k$ sent to participant $\pq_k$ is appended to the participant $\pp$'s output queue, and $\pp$'s process becomes the continuation $\PP_k$. 
Dually, rule \rulename{r-rcv} defines how a participant $\pp$ can receive a message, directed towards $\pp$ with a supported label, 
from the head of the output queue of the sender $\pq$;
after the reduction, the message is removed from the queue and the received message then substitutes the placeholder variable $x_k$ in the continuation process $\PP_k$.
Observe that, by rules \rulename{r-send} and \rulename{r-rcv}, output queues are used in FIFO order. %
Rules \rulename{r-cond-t} and \rulename{r-cond-f} define how to reduce a conditional process: the former when the condition is true, the latter when it is false. 
Rule \rulename{r-struct} closes the reduction relation under a standard precongruence relation $\equivv$ (defined in \Cref{app:processes}) that allows reordering of messages in queues and unfolding recursive processes, combining  the approach of \cite{UdomsrirungruangY25} with \cite{GhilezanPPSY23}.

\emph{Properties.} In \Cref{def:session-properties} below
we follow the approach of  \cite{GhilezanPPSY23} and define how ``good'' sessions are expected to run by using behavioural properties.
We define three behavioural properties. 
The first one is \emph{safety}, ensuring that a session never has mismatches between the message labels supported by external choices
and the labels of incoming messages. Since, in our sessions, one participant can choose to receive messages from multiple senders at once,
our definition of safety requires external choices to support all possible message labels from all senders' queues.
The \emph{deadlock freedom} property requires that a session can get stuck (cannot reduce further) only in case it terminates.
The \emph{liveness} property ensures all pending inputs eventually receive a message and all queued messages are eventually received, 
if reduction steps are performed in a \emph{fair} manner. 

The fair path definition says that whenever a participant $\pp$ is ready to perform an output in a session $\M_i$, then there is a session $\M_k$ (reached by reducing $\M_i$) where $\pp$ actually performs an output. %
Dually, if a participant $\pp$ in $\M_i$ is ready to receive a message which is already available at the top of the sender’s output queue, then $\pp$ will receive that message in $\M_k$. 
This avoids unfair paths, e.g., in which two participants recursively exchange messages, while a third participant forever waits to send a message. 
Our fair and live properties for session types with standard choices matches the liveness defined in 
\cite[Definition 2.2]{GhilezanPPSY23}. Also, our fairness aligns with ``fairness of components'' according to \cite{GlabbeekH19} - where a ``component'' is a process in a session. Moreover, by the syntax and semantics of our processes, fairness of components coincides with justness \cite{GlabbeekH19}.

\begin{definition}[Session behavioral properties]
  \label{def:session-properties}%
  A \textbf{session path} %
  is a (possibly infinite) sequence of sessions %
  $(\M_i)_{i \in I}$, %
  where $I = \{0,1,2,\ldots, n\}$ (or $I = \{0,1,2,\ldots\}$) is a set of consecutive natural numbers, %
  and, $\forall i \in I{\setminus}\{n\}$ (or $\forall i \in I)$, $\M_i \red \M_{i+1}$. %
We say that a path $(\M_i)_{i \in I}$ is \textbf{safe} iff, %
  $\forall i \in I$:
  \begin{enumerate}[leftmargin=9mm,label={\sf\textbf{(SS)}},ref={\sf\textbf{SS}}]
  \item\label{item:session-safe}%
    if\; $\M_i \equivv \pa\pp\sum_{j\in J} \procin{\pq_j}{\ell_j(\x_j)}\PP_j  \pc \pa\pp\h \pc \pa\pq{\PQ} \pc \pa\pq{\h_\pq} \pc \M'$ with $\pq\in\{\pq_j\}_{j\in J}$ and $\h_\pq\equiv \msg{\pp}{\ell}{\val}\cdot\h_\pq'$, %
    \;then\; $(\pq, \ell)=(\pq_j, \ell_j)$, for some $j\in J$
  \end{enumerate}
\noindent
We say that a path $(\M_i)_{i \in I}$ is \textbf{deadlock-free} iff: 
  \begin{enumerate}[leftmargin=9mm,label={\sf\textbf{(SD)}},ref={\sf\textbf{SD}}]
  
  \item\label{item:session-deadlock-free}%
    if $I=\{0,1,2,\ldots, n\}$  and $\M_n\nred$ %
    \;then\; $\M_n \equivv  \pa\pp{\inact} \pc \pa\pp{\emptyqueue}$
  \end{enumerate}
  \noindent%
  We say that a path $(\M_i)_{i \in I}$ is \textbf{fair} iff, %
  $\forall i \in I$:
  \begin{enumerate}[leftmargin=9mm,label={\sf\textbf{(SF{\arabic*})}},ref={\sf\textbf{SF{\arabic*}}}]
  \item\label{item:session-fairness:send}%
    if $\M_i \redSend{\pp}{\pq}{\ell} \M'$, %
        then  $\exists k, j$ such that $I\ni k\geq  i$ %
    and $\M_k \redSend{\pp}{\pq_j}{\ell_j} \M_{k+1}$
  \item\label{item:session-fairness:recv}%
    if $\M_i \redRecv{\pp}{\pq}{\ell} \M'$, %
 $\exists k, j$ such that $I\ni k\geq  i$ %
    and $\M_k \redRecv{\pp}{\pq_j}{\ell_j} \M_{k+1}$
  \item\label{item:session-fairness:if}%
    if $\M_i \redIf{\pp} \M'$, %
      $\exists k$ such that $I\ni k\geq  i$ %
    and $\M_k \redIf{\pp} \M_{k+1}$
  \end{enumerate}
  \noindent%
  We say that a session path $(\M_i)_{i \in I}$ is \textbf{live} iff, %
  $\forall i \in I$:
  \begin{enumerate}[leftmargin=9mm,label={\sf\textbf{(SL{\arabic*})}},ref={\sf\textbf{SL{\arabic*}}}]
  \item\label{item:session-liveness:if}%
    if\; $\M_i \equivv \pa\pp{\cond{\val}{\PP}{\PQ}} \pc \pa\pp\h \pc \M'$, %
    \;then\; 
      $\exists k$ such that $I\ni k\geq  i$ %
  and,  for some $\M''$, we have either\; %
    $\M_k \equivv \pa\pp{\PP} \pc \pa\pp\h \pc \M''$
    \;or\; %
    $\M_k \equivv \pa\pp{\PQ} \pc \pa\pp\h \pc \M''$%
  \item\label{item:session-liveness:out}%
    if\; $\M_i \equivv \pa\pp\sum_{j\in J}{\procout{\pq_j}{\ell_j}{\val_j}{\PP_j}} \pc \pa\pp\h \pc \M'$, %
    \;then\;
      $\exists k$ such that $I\ni k\geq  i$ %
    \;and\; %
    $\M_k   \redSend{\pp}{\pq_j}{\ell_j} \M_{k+1}$, for some $j\in J$
  \item\label{item:session-liveness:send}%
    if\; $\M_i \equivv \pa\pp\PP \pc \pa\pp\msg{\pq}{\ell}{\val}\cdot\h \pc \M'$, %
    \;then\; 
      $\exists k$ such that $I\ni k\geq  i$ %
    \;and\; %
    $\M_k \redRecv{\pq}{\pp}{\ell} \M_{k+1}$
  \item\label{item:session-liveness:recv}%
    if\; $\M_i \equivv \pa\pp\sum_{j\in J} \procin{\pq_j}{\ell_j(\x_j)}\PP_j  \pc \pa\pp\h \pc \M'$, %
    \;then\; 
     $\exists k$ such that $I\ni k\geq  i$,  
    \;and\; %
    $\M_k \redRecv{\pp}{\pq}{\ell} \M_{k+1}$, where $(\pq, \ell)\in \{(\pq_j, \ell_j)\}_{j\in J}$
  \end{enumerate}
  
  \noindent%
  We say that a session $\M$ is \textbf{safe}/\textbf{deadlock-free} %
  iff  %
  all paths beginning with $\M$ %
  are safe/deadlock-free. %
  We say that a session $\M$ is \textbf{live} %
  iff %
  all fair paths beginning with $\M$ %
  are live. 
\end{definition}

Like in standard session types \cite{ScalasY19}, in our calculus safety does not imply liveness nor deadlock-freedom (\Cref{ex:safe-but-not-live-session}).
Also, liveness implies deadlock-freedom, since liveness requires that session cannot get stuck before all external choices are performed and queued messages are eventually received. The converse is not true: e.g., a session in which two participants recursively exchange messages is deadlock free (as it never gets stuck), even if a third participant waits forever to receive a message that nobody will send. 
This session is not live, since in any fair path the third participant never fires its actions. 

However, \emph{unlike} standard session types, in our calculus liveness does \emph{not} imply safety: %
this is illustrated in \Cref{ex:live-but-not-safe-session} below.%

\begin{example}[A safe but non-live session]
\label{ex:safe-but-not-live-session}
Consider session 
\begin{align*}
\M &=  \pa{\pp}  \sum \{
\procin{\pq}{\ell_1(\x)}{\procinNoSuf{\pr}{\ell_2(\y)}}, 
\; \procinNoSuf{\pr}{\ell_2(\x)}, 
\; \procinNoSuf{\pr}{\ell_3(\x)}
\} \;\pc\;
\pa{\pp}\EmptyQueue \\
&  \quad\;\pc\;
\pa{\pq}\inact
\;\pc\;
\pa{\pq}\msg{\pp}{\ell_1}{\val_1}
\;\pc\;
\pa{\pr}\inact
\;\pc\;
\pa{\pr}\msg{\pp}{\ell_2}{\val_2}
\end{align*}
In the session, participant $\pp$ is ready to receive an input either from $\pq$ or $\pr$, 
which, in turn, both have enqueued messages for $\pp$. 
The labels of enqueued messages are safely 
supported in $\pp$'s receive. 
Session $\M$ has two possible reductions, where $\pp$ receives   
either from $\pq$, and    
$
\M \red \pa{\pp}  
\procinNoSuf{\pr}{\ell_2(\y)}
\;\pc\;
\pa{\pp}\EmptyQueue 
\;\pc\;
\pa{\pr}\inact
\;\pc\;
\pa{\pr}\msg{\pp}{\ell_2}{\val_2}
\red 
\pa{\pp}  
\inact
\;\pc\;
\pa{\pp}\EmptyQueue 
$;  
or from $\pr$, in which case  
$
\M \red  
\pa{\pq}\inact
\;\pc\;
\pa{\pq}\msg{\pp}{\ell_1}{\val_1}
$.  

Hence, session $\M$ is safe, since in all reductions inputs and matching enqueued messages have matching labels. 
However, $\M$ is not live, since the second path above starting with $\M$ is not live, for $\pq$'s output queue contains an orphan message that cannot be received.
Notice also that $\M$ is not deadlock-free since the final session in the second path cannot reduce further but is not equivalent to a terminated session.
\end{example}

\begin{example}[A live but unsafe session]
\label{ex:live-but-not-safe-session} 
Consider the following session:
\begin{align*}
\M' &=  \pa{\pp}  \sum \{
\procin{\pq}{\ell_1(\x)}{\procinNoSuf{\pr}{\ell_2(\y)}}, \; \procinNoSuf{\pr}{\ell_3(\x)}
\} \;\pc\;
\pa{\pp}\EmptyQueue \\
&  \quad\;\pc\;
\pa{\pq}\inact
\;\pc\;
\pa{\pq}\msg{\pp}{\ell_1}{\val_1}
\;\pc\;
\pa{\pr}\inact
\;\pc\;
\pa{\pr}\msg{\pp}{\ell_2}{\val_2}
\end{align*}
The session $\M'$ is not safe since the message in $\pr$'s output queue has label $\ell_2$ which is not supported in $\pp$'s
external choice (that only supports label $\ell_3$ for receiving from $\pr$). However, $\M'$ has a single reduction path
where $\pp$ receives from $\pq$, and then $\pp$ receives 
from $\pr$; 
then, the session ends with
all processes being $\inact$ and all queues empty, which implies $\M'$ is live (and thus, also deadlock-free).
\end{example}

\section{Types and Typing Environments}\label{sec:tsu}
We now introduce our (local) session types, 
typing contexts and their properties, and subtyping. Our types are a blend of asynchronous multiparty session types \cite{GhilezanPPSY23}
and the separated choice multiparty sessions (SCMP) types from
\cite{PetersY24}. 

\begin{definition}\label{def:types}\label{def:sorts}\label{def:typing-env} 
The \textbf{sorts $\ST$} are defined as 
$\ST \Coloneqq \tnat  \!\!\sep\!\! \tbool$, 
and \textbf{(local) session types $\T$} are defined as:
\[
  \begin{array}{@{}r@{\;\;}c@{\;\;}l@{\qquad}r@{\;\;}c@{\;\;}l@{}}
    \T  &\Coloneqq &   \tinternal_{i\in I}\tout{\pp_i}{\ell_i}{\ST_i}.\T_i 
    \sep \texternal_{i\in I}\tin{\pp_i}{\ell_i}{\ST_i}.\T_i   \sep \tend  \sep      \mu\ty. \T   \sep     \ty 
  \end{array}
\]
where $(\pp_i, \ell_i) \not= (\pp_j, \ell_j)$, for all $i, j \in I$ such that $i \neq j$. The \textbf{queue types $\tqueue$} and \textbf{typing environments $\Gamma$} are defined as: 
\[
  \begin{array}{@{}r@{\;\;}c@{\;\;}l@{\qquad\qquad}r@{\;\;}c@{\;\;}l@{}}
     \tqueue &\Coloneqq&
 \temptyqueue \sep \tout\pp\ell\ST \sep \tqueue \cdot \tqueue &
    \Gamma &\Coloneqq& \emptyset \sep \Gamma, \pp:(\tqueue, \T)
  \end{array}
\]
\end{definition}

\emph{Sorts} are the types of values, which can be natural numbers ($\tnat$) or booleans ($\tbool$). 
The \emph{internal choice} session type $\tinternal_{i\in I}\tout{\pp_i}{\ell_i}{\ST_i}.\T_i$ describes an output of message $\ell_i$ with sort $\ST_i$ towards participant $\pp_i$ and then evolving to type $\T_i$, for some $i\in I$. 
Similarly, $\texternal_{i\in I}\tin{\pp_i}{\ell_i}{\ST_i}.\T_i$ stands for \emph{external choice}, i.e., receiving a message $\ell_i$ with sort $\ST_i$ from participant $\pp_i$, for some $i\in I$. 
The type $\tend$ denotes the terminated session type,
$\mu\ty. \T$ is a recursive type, and 
$\ty$ is a recursive type variable. 
We assume that all recursions are guarded and follow a form of Barendregt convention: every $\mu\ty.T$ binds a syntactically distinct $\ty$. 

The \emph{queue type} represents the type of the messages contained in an output queue; it can be empty ($\temptyqueue$), or it can contain message $\ell$ with sort $\ST$ for participant $\pp$ ($\tout\pp\ell\ST$), or it can be a concatenation of two queue types ($\tqueue \cdot \tqueue$). 
The \emph{typing environment} assigns a pair of queue/session type to a participant ($\pp:(\tqueue, \T)$). 
We use $\Gamma(\pp)$ to denote the type that $\Gamma$ assigns to $\pp.$

\emph{Typing Environment Reductions.} Now we define 
typing environment reductions,  
which relies on a structural congruence relation $\equiv$ %
over session types, queue types, and typing environments (defined in \Cref{app:types}).

\begin{definition}
\label{def:typing-env-reductions}%
  The typing environment reduction\,  $\redLabel{\;\alpha\;}$, with $\alpha$ being either\,
    $\recvLabel{\pp}{\pq}{\ell}$
    \,or $\sendLabel{\pp}{\pq}{\ell}$
    \,(for some $\pp,\pq,\ell$), %
  is inductively defined as follows:

\smallskip%
\noindent%
\centerline{%
\small\rm%
\(%
  \begin{array}{@{}l@{\;}l}
\rulename{e-rcv} &
{\pp_k}{:}\,(\tout{\pq}{\ell_k}{\ST_k}{\cdot} \tqueue, \T_{\pp}),\pq{:}\,(\tqueue_{\pq}, \texternal_{i\in I}\tin{\pp_i}{\ell_i}{\ST_i}.{\T_i}), \Gamma \\
&\hspace{10mm} \redRecv{\pq}{\pp_k}{\ell_{k}}\; {\pp_k}{:}\,(\tqueue, \T_\pp), \pq{:}\,(\tqueue_{\pq}, \T_k),\Gamma %
\qquad (k\!\in\! I)\\
\rulename{e-send}
   & {\pp:(\tqueue, \tinternal_{i\in I}\tout{\pq_i}{\ell_i}{\ST_i}.{\T_i}),\Gamma \;\;\redSend{\pp}{\pq_k}{\ell_{k}}\;\; \pp:(\tqueue{\cdot} \tout{\pq_k}{\ell_k}{\ST_k}\phantom{}, 
   \T_k),\Gamma} \qquad
   {(k\in I) }\\[1mm]
\rulename{e-struct}&
\quad \Gamma \equiv \Gamma_1 \redLabel{\;\alpha\;} \Gamma_1' \equiv \Gamma'%
\;\implies\;%
\Gamma \redLabel{\;\alpha\;} \Gamma'%
   \end{array}
\)%
}

\noindent%
We use $\Gamma \red \Gamma'$ %
  instead of $\Gamma \redLabel{\;\alpha\;} \Gamma'$ %
  when $\alpha$ is not relevant, and 
  $\reds$ for the reflexive and transitive closure of $\red$. $\Gamma \red$ denotes $\Gamma\red \Gamma'$, for some $\Gamma'$.
\end{definition}

In rule \rulename{e-rcv} an environment has a reduction step labeled 
${\pq}{:}{\pp_k}{?}{\ell_{k}}$ if %
participant $\pp_k$ has at the head of its queue a message for $\pq$ %
with label $\ell_k$ and payload sort $\ST_k$, %
and $\pq$ has an external choice type that includes participant $\pp_k$ 
with label $\ell_k$ and a corresponding sort $\ST_k$; %
the environment evolves by consuming $\pp_k$'s message %
and activating the continuation $\T_k$ in $\pq$'s type. %
Rule \rulename{e-send} %
specifies reduction if $\pp$ %
has an internal choice type where $\pp$ sends a message %
toward $\pq_k$, for some $k\in I$; the reduction is labelled ${\pp}{:}{\pq_k}{!}{\ell_{k}}$ %
and the message is placed at the end of $\pp$'s queue. %
Rule \rulename{e-struct} is standard closure of the reduction relation %
under structural congruence.

\emph{Properties.} Similarly to~\cite{ScalasY19}, %
we define behavioral properties also for typing environments %
(and their reductions). 
As for processes, we use three properties for typing environments. 
\emph{Safety} ensures that the typing environment never has label or sort mismatches. 
In our setting, where one participant can receive from more than one queue, 
this property ensures safe receptions of queued messages.
The second property, \emph{deadlock freedom},
ensures that typing environment which can not reduce further must be terminated.
The third property is \emph{liveness}, which,
for the case of standard session types (as in \cite{GhilezanPPSY23}), matches the liveness defined in 
\cite[Definition 4.7]{GhilezanPPSY23}:
it ensures all pending inputs eventually receive a message and all queued messages are eventually received, 
if reduction steps are performed \emph{fairly}.

Notably, our definition of deadlock-freedom and liveness also require safety.
The reason for this is that liveness and deadlock-freedom properties for typing environments, if defined without assuming safety, are not preserved by the
subtyping (introduced in the next section, see
\Cref{ex:dfree-and-live-not-preserved-subtyping}). We use the predicate over
typing environments $\fend(\Gamma)$ (read ``\textbf{$\Gamma$ is terminated}'')
that holds iff, for all $p\in\dom{\Gamma}$, we have $\Gamma(\pp) \!\equiv\!
(\temptyqueue,\tend)$.

\begin{definition}[Typing environment properties]
  \label{def:env-path-properties}%
  A \emph{typing environment path} %
  is a 
  (possibly infinite) sequence of typing environments %
  $(\Gamma_i)_{i \in I}$, %
  where $I = \{0,1,2,\linebreak\ldots, n\}$ (or $I = \{0,1,2,\ldots\}$) is a set of consecutive natural numbers, %
  and, $\forall i \in I{\setminus}\{n\}$ (or $\forall i \in I$), $\Gamma_i \red \Gamma_{i+1}$. %
    We say a path $(\Gamma_i)_{i \in I}$ is \textbf{safe} iff, %
  $\forall i \in I$:
  \begin{enumerate}[leftmargin=9mm,label={\sf\textbf{(TS)}},ref={\sf\textbf{TS}}]
  \item\label{item:type:path:safe}%
   if $\Gamma_i(\pp)\equiv (\tqueue_\pp, \tinternal_{j\in J}\tin{\pq_j}{\ell_j}{\ST_j}.{\T_j})$ and $\Gamma_i(\pq_k)\equiv (\tout{\pp}{\ell_k}{\ST_k}{\cdot} \tqueue_\pq, \T_{\pq})$, %
   with $\pq_k\in\{\pq_j\}_{j\in J}$, %
   then 
   $\exists j\in J: (\pq_j, \ell_j, \ST_j)=(\pq_k, \ell_k, \ST_k)$
  \end{enumerate}
    We say that a path $(\Gamma_i)_{i \in I}$ is \textbf{deadlock-free} iff: 
  \begin{enumerate}[leftmargin=9mm,label={\sf\textbf{(TD)}},ref={\sf\textbf{TD}}]
  \item\label{item:type:path:deadlock-free}%
    if $I=\{0,1,\ldots,n\}$ and $\Gamma_n\nred$ %
    then $\fend(\Gamma_n)$
    \end{enumerate}
We say that a path $(\Gamma_i)_{i \in I}$ is \textbf{fair} iff, %
  $\forall i \in I$:
  \begin{enumerate}[leftmargin=9mm,label={\sf\textbf{(TF{\arabic*})}},ref={\sf\textbf{TF{\arabic*}}}]
  \item\label{item:fairness:send}%
    if $\Gamma_i \redSend{\pp}{\pq}{\ell} \Gamma'$, %
    then 
      $\exists k, \pq', \ell'$ such that $I\ni k\geq  i$ %
    and $\Gamma_k \redSend{\pp}{\pq'}{\ell'} \Gamma_{k+1}$
  \item\label{item:fairness:recv}%
    if $\Gamma_i \redRecv{\pp}{\pq}{\ell} \Gamma'$, %
    then 
      $\exists k, \pq', \ell'$ such that $I\ni k\geq  i$ %
    and $\Gamma_k \redRecv{\pp}{\pq'}{\ell'} \Gamma_{k+1}$
  \end{enumerate}

  \noindent%
  We say that a path $(\Gamma_i)_{i \in I}$ is \textbf{live} iff, %
  $\forall i \in I$:
  \begin{enumerate}[leftmargin=9mm,label={\sf\textbf{(TL{\arabic*})}},ref={\sf\textbf{TL{\arabic*}}}]
  \item\label{item:liveness:send}%
    if $\Gamma_i(\pp) \equiv \left(\tout\pq{\ell}\ST \cdot \tqueue \,,\, \T\right)$, %
    then
    $\exists k$ such that $I\ni k\geq  i$ %
    \;and\; %
    $\Gamma_k \redRecv{\pq}{\pp}{\ell} \Gamma_{k+1}$
  \item\label{item:liveness:recv}%
    if $\Gamma_i(\pp) \equiv \left(\tqueue_\pp \,,\, \texternal_{j \in J}{\tin{\pq_j}{\ell_j}{\ST_j}.{\T_j}}\right)$, %
    then 
    $\exists k, \pq, \ell$ such that $I\ni k\geq  i$, 
    $(\pq,\ell)\in \{(\pq_j,\ell_j)\}_{j\in J}$,%
    \;and\; %
    $\Gamma_k \redRecv{\pp}{\pq}{\ell} \Gamma_{k+1}$
  \end{enumerate}
  
  \noindent%
  A typing environment $\Gamma$ is \textbf{safe} iff all paths starting with $\Gamma$ are safe. 
  A typing environment $\Gamma$ is \textbf{deadlock-free} iff all paths starting with $\Gamma$ are safe and deadlock-free. 
  We say that a typing environment $\Gamma$ is \textbf{live} %
  iff %
  it is safe and all fair paths beginning with $\Gamma$ %
  are live. 
\end{definition}

Since our deadlock-freedom and liveness for typing environments subsumes safety, we do not have the situation in which typing environment is deadlock-free and/or live but not safe. 
Still, we can have typing environments that are safe but not deadlock-free or live. 

\begin{example}
\label{ex:typing-environment-properties}
Consider typing environment 
\begin{align*}
\Gamma &= \{ 
\pp{:}\,(\temptyqueue, 
\sum 
\{
\tin{\pq}{\ell_1}{\ST_1}.\tin{\pr}{\ell_2}{\ST_2}, 
\; \tin{\pr}{\ell_2}{\ST_2}, 
\; \tin{\pr}{\ell_3}{\ST_3}
\}, \\
& \hspace{10mm}{\pq}{:}\,(\tout{\pp}{\ell_1}{\ST_1}, \tend), 
{\pr}{:}\,(\tout{\pp}{\ell_2}{\ST_2}, \tend) 
\}
\end{align*}
Here, $\Gamma$ is safe by \Cref{def:env-path-properties},
but is not live nor deadlock-free, because the path in which $\pp$ receives from $\pr$ message $\ell_2$ leads to
a typing environment that cannot reduce further but is not terminated.
Now consider typing environment 
\begin{align*}
\Gamma' = \{ 
\pp{:}\,(\temptyqueue, 
\sum 
\{
\tin{\pq}{\ell_1}{\ST_1}.\tin{\pr}{\ell_2}{\ST_2}, 
\; \tin{\pr}{\ell_3}{\ST_3}
\}, 
{\pq}{:}\,(\tout{\pp}{\ell_1}{\ST_1}, \tend), 
{\pr}{:}\,(\tout{\pp}{\ell_2}{\ST_2}, \tend) 
\}
\end{align*}
$\Gamma'$ is not safe by \Cref{def:env-path-properties} (hence, also not deadlock-free nor live)
since there is a mismatch between $\pr$ sending $\ell_2$ to $\pp$, while $\pp$
can only receive $\ell_3$ from $\pr$. 
\end{example}

We now prove that all three behavioral properties are closed under structural congruence \emph{(in \Cref{app:subtyping})} and reductions \emph{(proof in \Cref{app:subtyping})}.

\begin{restatable}{proposition}{PropRedPresevesProperties}
\label{lem:subtyping-and-reduction}
  \label{lem:move-preserves-safety}%
  If\, $\Gamma$ is safe/deadlock-free/live and $\Gamma \red \Gamma'$, %
  then $\Gamma'$ is safe/ deadlock-free/live.
\end{restatable}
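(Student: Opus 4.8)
The plan is to prove the three preservation claims (safety, deadlock-freedom, liveness) simultaneously, by case analysis on the reduction step $\Gamma \red \Gamma'$ together with an appeal to closure under structural congruence $\equiv$ (which is proved separately in the appendix), so that we may assume the step is an instance of \rulename{e-rcv} or \rulename{e-send} applied directly (not up to $\equiv$). The key observation making all three cases uniform is that a typing-environment path starting at $\Gamma'$, say $\Gamma' = \Gamma'_0 \red \Gamma'_1 \red \cdots$, can be prefixed with the given step to yield a path $\Gamma = \Gamma_{-1} \red \Gamma'_0 \red \Gamma'_1 \red \cdots$ starting at $\Gamma$; since $\Gamma$ has the relevant property, so does this longer path, and the property for the original $\Gamma'$-path then follows because each of \ref{item:type:path:safe}, \ref{item:type:path:deadlock-free}, \ref{item:liveness:send}, \ref{item:liveness:recv} is a condition quantified over indices $i$ of the path and is monotone under taking suffixes (indices only shift by one).

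First I would handle \textbf{safety}. Let $(\Gamma'_i)_{i\in I}$ be any path from $\Gamma'$; form the extended path $(\Gamma_i)_{i \in I \cup\{-1\}}$ with $\Gamma_{-1} = \Gamma$. Since $\Gamma$ is safe, every $\Gamma_i$ (in particular every $\Gamma'_i$) satisfies \ref{item:type:path:safe}; done. Next, \textbf{deadlock-freedom}: by definition $\Gamma'$ deadlock-free means $\Gamma'$ is safe \emph{and} all its paths are deadlock-free; safety we just obtained. For the deadlock-free clause \ref{item:type:path:deadlock-free}, take a finite maximal path from $\Gamma'$ ending in $\Gamma'_n$ with $\Gamma'_n \nred$; prepending the step gives a finite maximal path from $\Gamma$ ending at the same $\Gamma'_n$, and since $\Gamma$ is deadlock-free we get $\fend(\Gamma'_n)$. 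Finally, \textbf{liveness}: again $\Gamma'$ live requires safety (obtained) plus that every \emph{fair} path from $\Gamma'$ is live. Here the one subtlety is fairness: I need that if $(\Gamma'_i)$ is a fair path from $\Gamma'$ then the prepended path $(\Gamma_i)$ is a fair path from $\Gamma$. This holds because the fairness conditions \ref{item:fairness:send}, \ref{item:fairness:recv} ask, for each position $i$ that \emph{enables} a send/receive, that \emph{some later} position $k \ge i$ actually fires a send/receive of the same kind; adding one earlier step at index $-1$ does not remove any later witnesses, and for the new index $-1$ itself the witness is the very step $\Gamma_{-1}\red\Gamma'_0$ (if it is a send/receive — and it is, being an instance of \rulename{e-send}/\rulename{e-rcv}), or trivially nothing needs checking. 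So $(\Gamma_i)$ is fair, hence live, hence $(\Gamma'_i)$ is live.

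The step I expect to be the main obstacle is the interaction with structural congruence and the closure-under-$\equiv$ lemma: the reduction $\Gamma \red \Gamma'$ may only be derivable via \rulename{e-struct}, i.e. $\Gamma \equiv \Gamma_1$, $\Gamma_1 \redLabel{\alpha} \Gamma_1' \equiv \Gamma'$ with the inner step a genuine \rulename{e-rcv}/\rulename{e-send}. One must then combine the companion result ``properties are preserved by $\equiv$'' with the core argument above to transport safety/deadlock-freedom/liveness along the two $\equiv$-steps. A related minor point to get right is that the definitions of safe/deadlock-free/live paths are stated over paths indexed by an \emph{initial segment of $\mathbb{N}$}; prepending a step technically requires re-indexing, which is harmless but should be stated. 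I would also double check the fairness-witness bookkeeping for the prepended index, since that is exactly where a sloppy argument could silently assume the new first step is irrelevant when in fact it is the unique witness needed for a pending action that was already enabled in $\Gamma$.
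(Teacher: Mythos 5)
Your proof is correct and follows essentially the same route as the paper's: every path from $\Gamma'$ is the suffix of a path from $\Gamma$ obtained by prepending the given step, and the path properties are quantified over all indices, so they transfer to suffixes. You are in fact more careful than the paper's (very terse) proof on the one genuinely delicate point — that a fair path from $\Gamma'$ prepends to a fair path from $\Gamma$, which is needed to invoke liveness of $\Gamma$.
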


\subsection{Subtyping}
\label{sec:subtyping}

In \Cref{def:subtyping} below we formalise a standard ``synchronous'' subtyping
relation as in \cite{PetersY24}, not dealing with the possible reorderings of outputs and inputs allowed by asynchronous subtyping \cite{GhilezanPPSY23}.

\begin{definition}
\label{def:subtyping}%
  \;The \emph{subtyping} $\subt$ %
  is coinductively defined as:
\[
  \small
\begin{array}{@{}c@{}}
 \cinfer[\rulename{s-out}]{\forall i \in I \quad \T_i \subt \T_i' \quad \{\pp_i\}_{i\in I}=\{\pp_i\}_{i\in I \cup J}}{
  \tinternal_{i\in I}\tout{\pp_i}{\ell_i}{\ST_i}.\T_i
  \subt
   \tinternal_{i\in I\cup J}\tout{\pp_i}{\ell_i}{\ST_i}.\T_i'
 }
 \\[2mm]
\cinfer[\rulename{s-in}]{\forall i \in I \quad \T_i \subt \T_i' \quad \{\pp_i\}_{i\in I}=\{\pp_i\}_{i\in I \cup J}}{
  \texternal_{i\in I\cup J}\tin{\pp_i}{\ell_i}{\ST_i}.\T_i
  \subt
   \texternal_{i\in I}\tin{\pp_i}{\ell_i}{\ST_i}.\T_i'
 }
\\[2mm]
 \cinferrule[\rulename{s-muL}]{
 \T_1\subst{\mu \ty.\T_1}{\ty} \subt \T_2}
 {\mu\ty.\T_1 \subt \T_2}
 \qquad
  \cinferrule[\rulename{s-muR}]{
 \T_1 \subt \T_2\subst{\mu \ty.\T_2}{\ty}}
 {\T_1 \subt \mu\ty.\T_2}
 \qquad
  \cinferrule[\rulename{s-end}]{}
 {\tend \subt \tend}
\end{array}
\]
Pair of queue/session types are related via subtyping\; $(\tqueue_1,\T_1)\subt (\tqueue_2, \T_2)$ \;iff\; 
$\tqueue_1=\tqueue_2$ and $\T_1 \subt \T_2$. 
We define $\Gamma\subt\Gamma'$ \;iff\; $\dom\Gamma=\dom{\Gamma'}$ and $\forall \pp \in\dom\Gamma: \Gamma(\pp)\subt\Gamma'(\pp)$.
\end{definition}

The subtyping rules allow less branches for the subtype in the internal choice (in \rulename{s-out}), and more branches in the external choice (in \rulename{s-in}). 
The side condition in both rules ensures the set of participants specified in the subtype and supertype choices is the same. 
Hence, subtyping allows flexibility in the set of labels, not in the set of participants. 
Subtyping holds up to unfolding (by \rulename{s-muL} and \rulename{s-muR}), as usual for coinductive subtyping \cite[Chapter 21]{PierceBC:typsysfpl}. 
By rule \rulename{s-end}, $\tend$ is related via subtyping to itself.

\begin{example}
\label{ex:subtype}
Consider the typing environments $\Gamma$ and $\Gamma'$ from \Cref{ex:typing-environment-properties}. Since
\(
\sum \{
\tin{\pq}{\ell_1}{\ST_1}.\tin{\pr}{\ell_2}{\ST_2}, 
\; \tin{\pr}{\ell_2}{\ST_2}, 
\; \tin{\pr}{\ell_3}{\ST_3}
\}
\;\subt\; 
\sum 
\{
\tin{\pq}{\ell_1}{\ST_1}.\tin{\pr}{\ell_2}{\ST_2}, 
\; \tin{\pr}{\ell_3}{\ST_3}
\}
\) 
holds by \rulename{s-in}, we also have that $\Gamma \subt \Gamma'$ holds.
\end{example}

The following two lemmas show that subtyping of safe typing environments is a simulation and that subtyping  preserves all typing environment properties.
\emph{(Proofs in \Cref{app:subtyping}.)}

\begin{restatable}{lemma}{LemSubtypingAndReduction}
\label{lem:subtyping-and-reduction}
If $\Gamma'\subt \Gamma$, $\Gamma$ is safe, and $\Gamma' \redLabel{\;\alpha\;} \Gamma'_1$, then there is $\Gamma_1$ such that $\Gamma'_1\subt \Gamma_1$ and $\Gamma \redLabel{\;\alpha\;} \Gamma_1$.
\end{restatable}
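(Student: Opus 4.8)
The plan is to prove Lemma~\ref{lem:subtyping-and-reduction} by case analysis on the derivation of the reduction $\Gamma' \redLabel{\;\alpha\;} \Gamma'_1$, i.e., on which of the rules \rulename{e-rcv}, \rulename{e-send}, \rulename{e-struct} was the last one applied. Throughout, I will use the fact (from \Cref{def:subtyping}) that $\Gamma'\subt\Gamma$ means $\dom{\Gamma'}=\dom\Gamma$ and componentwise $\Gamma'(\pp)\subt\Gamma(\pp)$, where pairs are related only when their queue components are \emph{equal} and the session types are related by $\subt$. A key auxiliary observation I would establish first is an \emph{inversion} property of $\subt$ on session types modulo unfolding: if $\T'\subt\T$ and $\T'$ unfolds to an internal choice $\tinternal_{i\in I'}\tout{\pp_i}{\ell_i}{\ST_i}.\T'_i$, then $\T$ unfolds to an internal choice $\tinternal_{i\in I}\tout{\pp_i}{\ell_i}{\ST_i}.\T_i$ with $I\subseteq I'$, the same participant set $\{\pp_i\}_{i\in I}=\{\pp_i\}_{i\in I'}$, matching labels/sorts on the common indices, and $\T'_i\subt\T_i$ for $i\in I$; dually for external choices, where the supertype has \emph{fewer} branches. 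This follows by analysing how \rulename{s-out}/\rulename{s-in} interleave with \rulename{s-muL}/\rulename{s-muR}, and I would phrase it as a small lemma (or invoke it as standard for coinductive session subtyping, cf.\ the cited \cite{PetersY24,PierceBC:typsysfpl}). I will also use that $\equiv$ on types/environments is compatible with $\subt$, so congruence steps are harmless.

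Case \rulename{e-send}: here $\Gamma' \equiv \pp{:}(\tqueue, \tinternal_{i\in I'}\tout{\pq_i}{\ell_i}{\ST_i}.\T'_i),\Delta'$ and it fires towards $\pq_k$ for some $k\in I'$, yielding $\Gamma'_1 = \pp{:}(\tqueue\cdot\tout{\pq_k}{\ell_k}{\ST_k}, \T'_k),\Delta'$ with label $\alpha=\sendLabel{\pp}{\pq_k}{\ell_k}$. Since $\Gamma'\subt\Gamma$ and queue components must coincide, $\Gamma(\pp)$ has queue $\tqueue$ and session type some $\T$ with $\tinternal_{i\in I'}\ldots\subt\T$; by the inversion lemma $\T$ unfolds to an internal choice $\tinternal_{i\in I}\tout{\pq_i}{\ell_i}{\ST_i}.\T_i$ with $I\subseteq I'$ --- but by \rulename{s-out} the participant sets agree, $\{\pq_i\}_{i\in I}=\{\pq_i\}_{i\in I'}$, and crucially $k$ need \emph{not} lie in $I$. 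This is the delicate point: if the chosen branch $k\notin I$, there is still some index $k'\in I$ with $\pq_{k'}=\pq_k$ (same participant set), but possibly $\ell_{k'}\neq\ell_k$, so $\Gamma$ cannot match the label of the send. I would resolve this by using safety of $\Gamma$: actually re-examining \rulename{s-out}, since the subtype has \emph{at most} the branches of the supertype ($I\subseteq I'$ is backwards) --- let me restate: \rulename{s-out} has $\tinternal_{i\in I}\ldots\subt\tinternal_{i\in I\cup J}\ldots$, so the \emph{subtype} $\Gamma'(\pp)$ has index set contained in that of the supertype $\Gamma(\pp)$; hence every branch available to $\Gamma'$, in particular $k$, is available to $\Gamma$ with the same participant, label and sort, and $\T'_k\subt\T_k$. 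So $\Gamma\redSend{\pp}{\pq_k}{\ell_k}\Gamma_1$ with $\Gamma_1 = \pp{:}(\tqueue\cdot\tout{\pq_k}{\ell_k}{\ST_k},\T_k),\Delta$; the queues still coincide and $\T'_k\subt\T_k$, $\Delta'\subt\Delta$, giving $\Gamma'_1\subt\Gamma_1$ as required. Here \emph{safety of $\Gamma$ is not needed}.

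Case \rulename{e-rcv}: $\Gamma'\equiv \pp_k{:}(\tout{\pq}{\ell_k}{\ST_k}\cdot\tqueue,\T_{\pp_k}'),\pq{:}(\tqueue_\pq,\texternal_{i\in I'}\tin{\pp_i}{\ell_i}{\ST_i}.\T'_i),\Delta'$ firing $\alpha=\recvLabel{\pq}{\pp_k}{\ell_k}$ for some $k\in I'$, giving $\Gamma'_1=\pp_k{:}(\tqueue,\T'_{\pp_k}),\pq{:}(\tqueue_\pq,\T'_k),\Delta'$. By $\Gamma'\subt\Gamma$: the queue component of $\Gamma(\pp_k)$ is \emph{identical} to $\Gamma'(\pp_k)$'s (queues must be equal), so $\pp_k$'s queue in $\Gamma$ also starts with $\tout{\pq}{\ell_k}{\ST_k}$; and $\texternal_{i\in I'}\ldots\subt(\text{session type of }\Gamma(\pq))$, so by the inversion lemma the supertype $\Gamma(\pq)$ unfolds to an external choice $\texternal_{i\in I}\tin{\pp_i}{\ell_i}{\ST_i}.\T_i$ with $I\subseteq I'$ (supertype has fewer inputs), same participant set, matching labels/sorts, and $\T'_i\subt\T_i$ for $i\in I$. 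Now the obstacle reappears with teeth: the fired index $k$ may be in $I'\setminus I$, so the message $\ell_k$ from $\pp_k$ is \emph{not} among the labels $\Gamma(\pq)$ can receive from $\pp_k$. This is exactly where \textbf{safety of $\Gamma$} enters: $\Gamma(\pq)$ is (modulo $\equiv$) an external choice including participant $\pp_k$ (same participant set as the subtype), and $\Gamma(\pp_k)$'s queue head is a message for $\pq$ with label $\ell_k$, sort $\ST_k$; so by \ref{item:type:path:safe} applied to $\Gamma$ there must be $j\in I$ with $(\pp_j,\ell_j,\ST_j)=(\pp_k,\ell_k,\ST_k)$. Since labels are distinct per-participant in a choice, this $j$ is unique and moreover $j=k$ when viewed in $I'$ (same participant/label), so $\T_j=\T_k$ in the supertype matches the subtype branch with $\T'_k\subt\T_k$. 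Hence $\Gamma\redRecv{\pq}{\pp_k}{\ell_k}\Gamma_1$ with $\Gamma_1=\pp_k{:}(\tqueue,\T_{\pp_k}),\pq{:}(\tqueue_\pq,\T_k),\Delta$, and $\Gamma'_1\subt\Gamma_1$ follows componentwise ($\T'_{\pp_k}\subt\T_{\pp_k}$ since $\Gamma'(\pp_k)\subt\Gamma(\pp_k)$ with equal queues; $\T'_k\subt\T_k$; $\Delta'\subt\Delta$). This is the step I expect to be the main obstacle, and safety is precisely the hypothesis that unblocks it.

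Case \rulename{e-struct}: the reduction is $\Gamma'\equiv\Gamma'_a\redLabel{\alpha}\Gamma'_b\equiv\Gamma'_1$. Since $\equiv$ is compatible with $\subt$, from $\Gamma'\subt\Gamma$ we get $\Gamma'_a\subt\Gamma$; apply the induction hypothesis (or the already-treated \rulename{e-rcv}/\rulename{e-send} cases, as the derivation is smaller) to obtain $\Gamma_a$ with $\Gamma'_b\subt\Gamma_a$ and $\Gamma\redLabel{\alpha}\Gamma_a$; then $\Gamma'_1\equiv\Gamma'_b\subt\Gamma_a$, and taking $\Gamma_1=\Gamma_a$ closes the case (using $\Gamma_1\equiv\Gamma_1$ via \rulename{e-struct} on the $\Gamma$ side if one wants the congruence made explicit). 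Finally I would remark that the statement is exactly the ``simulation'' property advertised before the lemma, and that it immediately feeds the subsequent result that subtyping preserves all environment properties (safety/deadlock-freedom/liveness), since any path from $\Gamma'$ can now be matched step-by-step by a path from $\Gamma$ with the subtyping invariant maintained.
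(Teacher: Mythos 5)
Your proposal is correct and follows essentially the same route as the paper: the same inversion property of $\subt$ on internal/external choices (the paper's \Cref{lem:subtyping-shape-inp-out}), the same observation that the \rulename{e-send} case goes through without safety because the subtype's output branches are a subset of the supertype's, and the same use of clause \ref{item:type:path:safe} in the \rulename{e-rcv} case to recover the matching branch when the fired input index lies outside the supertype's (smaller) external choice. The only cosmetic difference is that you treat \rulename{e-struct} as an explicit case while the paper absorbs it by reasoning modulo $\equiv$ throughout.
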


\begin{restatable}{lemma}{lemSubtypingPreservesSafety}
  \label{lem:subtyping-preserves-safety}%
  If\, $\Gamma$ is safe/deadlock-free/live 
  and $\Gamma' \subt \Gamma$, %
  then $\Gamma'$ is safe/deadlock-free/live.
\end{restatable}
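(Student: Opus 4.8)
The plan is to obtain all three parts from the simulation property already established in \Cref{lem:subtyping-and-reduction}, together with one structural observation about subtyping. The observation is that, whenever $(\tqueue_1,\T_1)\subt(\tqueue_2,\T_2)$, we have $\tqueue_1=\tqueue_2$ and, up to unfolding of recursions (i.e.\ up to $\equiv$), $\T_1$ and $\T_2$ share the same top-level construct over the same participant set: both are $\tend$; or both are internal choices, with $\T_1$ possibly offering fewer labels and the shared branches keeping their labels and sorts (their continuations again related by $\subt$); or both are external choices, with $\T_1$ possibly offering \emph{more} labels and the shared branches behaving likewise. This is read off \rulename{s-out}, \rulename{s-in}, \rulename{s-end}, \rulename{s-muL}, \rulename{s-muR} (using that recursion is guarded). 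I will use two consequences. \textbf{(i)} \emph{Enabledness transfers ``backwards'':} if \rulename{e-send} fires for $\pp$ in $\Gamma$, or \rulename{e-rcv} fires for $\pp$ (consuming a message at the head of some queue) in $\Gamma$, then the same move is available in every $\Gamma'\subt\Gamma$, since $\Gamma'$'s internal choices stay non-empty, its external choices still contain the needed branch, and its queue heads are unchanged. \textbf{(ii)} If $\fend(\Gamma)$ and $\Gamma'\subt\Gamma$, then $\fend(\Gamma')$, because each component type below $\tend$ must unfold to $\tend$ and each queue stays empty. Finally, given $\Gamma'\subt\Gamma$ with $\Gamma$ safe, I will lift an arbitrary path $(\Gamma'_i)_i$ from $\Gamma'$ to a path $(\Gamma_i)_i$ from $\Gamma$, one step at a time via \Cref{lem:subtyping-and-reduction}, so that $\Gamma_0=\Gamma$, $\Gamma'_i\subt\Gamma_i$ for all $i$, and the two paths carry the \emph{same} sequence of transition labels; \Cref{lem:move-preserves-safety} and induction keep each $\Gamma_i$ safe, so the simulation lemma applies at every step (for infinite paths one iterates the construction).

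\emph{Safety.} If \ref{item:type:path:safe} failed at some $\Gamma'_i$, it would be witnessed by a participant $\pp$ whose receiving choice ranges over a set of senders containing $\pq_k$, and by $\pq_k$ carrying $\tout{\pp}{\ell_k}{\ST_k}$ at its queue head. By the structural observation $\Gamma_i(\pp)$ is a receiving choice over the same senders and $\Gamma_i(\pq_k)$ has the same queue head, so safety of $\Gamma_i$ gives a branch $(\pq_k,\ell_k,\ST_k)$ in $\Gamma_i(\pp)$; by \rulename{s-in} that branch, with the same label and sort, is also present in $\Gamma'_i(\pp)$ --- contradicting the failure of \ref{item:type:path:safe}. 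Hence every path from $\Gamma'$ is safe.

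\emph{Deadlock-freedom and liveness.} In both cases $\Gamma'$ is safe by the previous paragraph. For deadlock-freedom, take a finite path $(\Gamma'_i)_{i\le n}$ from $\Gamma'$ with $\Gamma'_n\nred$: the contrapositive of \textbf{(i)} gives $\Gamma_n\nred$, so $\fend(\Gamma_n)$ since $\Gamma$ is deadlock-free, and then $\fend(\Gamma'_n)$ by \textbf{(ii)}. For liveness, take a fair path $(\Gamma'_i)_i$ from $\Gamma'$ with lift $(\Gamma_i)_i$. First, $(\Gamma_i)_i$ is fair: if a send (resp.\ receive) by $\pp$ is enabled at $\Gamma_i$, then by \textbf{(i)} it is enabled at $\Gamma'_i$, so \ref{item:fairness:send} (resp.\ \ref{item:fairness:recv}) for $(\Gamma'_i)_i$ makes some later step $k\ge i$ an actual send (resp.\ receive) by $\pp$, and since the paths share labels so is step $k$ of $(\Gamma_i)_i$. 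As $\Gamma$ is live, $(\Gamma_i)_i$ is live. Transferring \ref{item:liveness:send} and \ref{item:liveness:recv} back along the shared labels: if $\Gamma'_i(\pp)$ has $\tout{\pq}{\ell}{\ST}$ at its queue head then so does $\Gamma_i(\pp)$, so \ref{item:liveness:send} for $(\Gamma_i)_i$ produces a step $\redRecv{\pq}{\pp}{\ell}$, which is also a step of $(\Gamma'_i)_i$; if $\Gamma'_i(\pp)$ is an external choice over $\{(\pq_j,\ell_j)\}_{j\in J'}$ then $\Gamma_i(\pp)$ is one over a subset $\{(\pq_j,\ell_j)\}_{j\in J}$, so \ref{item:liveness:recv} for $(\Gamma_i)_i$ produces a step $\redRecv{\pp}{\pq}{\ell}$ with $(\pq,\ell)\in\{(\pq_j,\ell_j)\}_{j\in J}\subseteq\{(\pq_j,\ell_j)\}_{j\in J'}$, again a step of $(\Gamma'_i)_i$. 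Hence every fair path from $\Gamma'$ is live, so $\Gamma'$ is live.

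The step I expect to be the crux is consequence \textbf{(i)}, the ``backward'' transfer of enabledness: it is what makes irreducibility and fairness move in the direction needed for deadlock-freedom and liveness, and it is exactly where one must use that subtyping never collapses an internal choice to the empty choice, leaves queue types untouched, and only grows external choices --- the matching ``forward'' direction being precisely \Cref{lem:subtyping-and-reduction}. Everything else is bookkeeping: propagating $\subt$ along the simulated path, plus the easy fact that $\T\subt\tend$ forces $\T\equiv\tend$.
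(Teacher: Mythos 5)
Your proposal is correct and follows essentially the same route as the paper's proof: the lifting of a path from $\Gamma'$ to a label-matching path from $\Gamma$ via \Cref{lem:subtyping-and-reduction} together with \Cref{lem:move-preserves-safety}, the one-step safety transfer, and the ``backward'' enabledness transfer (your consequence \textbf{(i)}) are exactly the paper's \Cref{lem:subtyping-preserves-initial-safe} and \Cref{lem:supertyping-and-reduction}, used in the same way for the deadlock-freedom contrapositive and for fairness/liveness of the lifted path. No gaps.
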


\begin{remark}
If in \Cref{def:env-path-properties} safety is not assumed in deadlock-freedom and liveness, then \Cref{lem:subtyping-preserves-safety} does not hold (see \Cref{ex:subtype}). 
The same observation also holds for deadlock-freedom in the synchronous case \cite[Remark 3.2]{PetersY24}.
\end{remark}

\section{The Typing System and Its Properties}\label{sec:tsy}
\label{sec:type-system}

This section introduces the type system 
that assigns types 
to processes, queues, and sessions. 
Our typing system is an extension and adaptation of the one in \cite{GhilezanPPSY23}. 

\begin{definition}[Typing system]
\label{def:type-system}%
A \textbf{\emph{shared typing environment} $\Theta$}, which assigns sorts to expression variables, and (recursive) session types to process variables, is defined as 
$%
   \; \Theta \;\;\Coloneqq\;\; \emptyset \sep \Theta, X:\T \sep \Theta, \x:\ST
$. 

Our type system is inductively defined by the rules 
in \Cref{figure:typesystem}, with 4 judgements --- which cover respectively
values and variables $\val$, processes $\PP$, message queues $\h$, and sessions $\N$:
\[
\Theta \vdash \val:\ST \qquad\qquad
\Theta \vdash \PP:\T \qquad\qquad
\vdash \h:\tqueue \qquad\qquad
\Gamma \vdash \N
\]
\end{definition}

By \Cref{def:type-system}, the typing judgment $\Theta \vdash \val:\ST$ means that value $\val$ is of sort $\ST$ under environment $\Theta$.
The judgement $\Theta \vdash \PP:\T$ says that process $\PP$ behaves as prescribed with type $\T$ and uses its variables as given in $\Theta$. 
The judgement $\vdash \h:\tqueue$ says that the messages in the queue $\h$ correspond to the queue type $\tqueue$. 
Finally, $\Gamma \vdash \N$ means that all participant processes in session $\N$ behave as prescribed by the session types in $\Gamma$.

\begin{figure}[t!]
\(
\begin{array}{@{}c@{}}
 \inferrule[\rulename{t-nat}]{ }{\Theta \vdash 1,2,\ldots: \tnat}
 \quad
 \inferrule[\rulename{t-Bool}]{ }{\Theta\vdash \true, \false:\tbool} \quad
 \inferrule[\rulename{t-var}]{ }{\Theta, x:\ST \vdash x:\ST}
  \\[2mm]
    \inferrule[\rulename{t-nul}]{ }{\vdash \emptyqueue: \temptyqueue}
       \qquad
   \inferrule[\rulename{t-elm}]{\ \vdash\val:\ST}{\vdash  (\pq,\ell(\val)):\tout\pq\ell\ST}
       \qquad
   \inferrule[\rulename{t-queue}]{\vdash \h_1:\tqueue_1 \quad \vdash \h_2:\tqueue_2}{\vdash \h_1 \cdot \h_2: \tqueue_1\cdot\tqueue_2}\\[2mm]
  \inferruleR[\rulename{t-$\inact$}]{$\;$}{\Theta \vdash \inact: \tend}
  \qquad
\inferruleR[\rulename{t-out}]{ \forall i\in I\;\;\; \Theta \vdash \val_i:\ST_i \quad \Theta \vdash \PP_i:\T_i }{\Theta  \vdash \sum_{i\in I}\procout{\pp_i} {\ell_i}{\val_i}{\PP_i}: \tinternal_{i\in I}\tout{\pp_i}{\ell_i}{\ST_i}.{\T_i}}
  \\[2mm]%
\inferruleR[\rulename{t-in}]{\forall i\in I\;\;\; \Theta, x_i:\ST_i \vdash \PP_i:\T_i}{\Theta \vdash  \sum_{i\in I}\procin{\pp_i}{\ell_i(\x_i)}{\PP_i}: \texternal_{i\in I}\tin{\pp_i}{\ell_i}{\ST_i}.{\T_i}}
  \\[2mm]
\inferruleR[\rulename{t-cond}]{\Theta \vdash \val :\tbool
\quad \Theta  \vdash \PP_i:\T \ \text{\tiny $(i=1,2)$} 
}{\Theta \vdash \cond\val{\PP_1}{\PP_2}:\T}
\\[2mm]
  \inferruleR[\rulename{t-rec}]{\Theta, X:\T  \vdash \PP:\T} 
  {\Theta  \vdash  \mu X.\PP:  \T}
  \qquad
  \inferruleR[\rulename{t-var}]{$\;$}{\Theta, X:\T  \vdash X:\T}
  \qquad
  \inferruleR[\rulename{t-sub}]{\Theta \vdash \PP:\T \quad \T\subt \T' }{\Theta \vdash \PP:\T'}
\\[2mm]
\inferruleR[\rulename{t-sess}]{\Gamma=\{\pp_i{:}\,(\tqueue_i, \T_i) \;|\; i\in I\} \qquad %
  \forall i\in I 
  \qquad \emptyset \vdash \PP_i:\T_i \qquad  
  \vdash \h_i:\tqueue_i
}{
  \Gamma \vdash \prod_{i\in I} (\pa{\pp_i}{\PP_i} \pc \pa{\pp_i}{\h_i})
}
\end{array}
\)

\caption{\label{figure:typesystem}%
  Typing rules for values, queues,
  and for processes and sessions.%
}%
\vspace{-4mm}
\end{figure}

We now illustrate the typing rules in \Cref{figure:typesystem}. 
The first row gives rules for typing natural and boolean values and expression variables. 
The second row provides rules for typing message queues: an empty queue has the empty queue type (by \rulename{t-nul})
while a queued message is typed if its payload has the correct sort (by \rulename{t-elm}), and a queue obtained with concatenating two message queues is typed by concatenating their queue types (by \rulename{t-queue}).

Rule \rulename{t-$\inact$} types a terminated process $\inact$ with $\tend$. 
Rules \rulename{t-out}/\rulename{t-in} type internal/external choice processes with the corresponding internal/external choice types: 
in each branch the payload $\val_i$/$x_i$ and the continuation process $\PP_i$ have the corresponding sort $\S_i$ and continuation type $\T_i$;
observe that in \rulename{t-in}, 
each continuation process $\PP_i$ is typed 
under environment $\Theta$ extended with the input-bound variable $x_i$ having sort $\S_i$.
Rule \rulename{t-cond} types conditional: both branches must have the same type, 
and the condition must be boolean.
Rule \rulename{t-rec} types a recursive process $\mu X.\PP$ with $\T$ if $\PP$ has type $\T$ in an environment where that $X$ has type $\T$.
Rule \rulename{t-var} types recursive process variables. 
Finally, \rulename{t-sub} is the \emph{subsumption rule}: a process $\PP$ of type $\T$ can be typed with any supertype $\T'$. 
This rule makes our type system
equi-recursive \cite{PierceBC:typsysfpl}, as 
 $\subt$ relates types up-to unfolding (by \Cref{def:subtyping}).
 Rule \rulename{t-sess} types a session under environment $\Gamma$ if each participant's queue and process have corresponding queue and process types
 in an empty $\Theta$ (which forces processes to be closed).

\begin{example}[Types for a centralised FL protocol implementation]
  \label{ex:centralised-types} 
Consider session $\M$ from \Cref{ex:centralised}. 
Take that value $\msgPayload{data}$ and variables $\x, \x_2, \ldots, \x_n$ are of sort $\ST$. 
We may derive $\vdash \PP_1 : \T_1$ and $\vdash \PP_i: \T_2$ (for $i=1,2,\ldots,n$), where (with a slight abuse of notation using the concurrent input macro  for types):
\begin{align*}
\T_1 &= 
\procoutNoSuf{\pp_2}{\msgLabel{ld}}{\ST}.\ldots\procoutNoSuf{\pp_n}{\msgLabel{ld}}{\ST}.\procinconcurNoSuf{\procinNoSuf{\pp_2}{\msgLabel{upd}(\ST)}, \ldots, \procinNoSuf{\pp_n}{\msgLabel{upd}(\ST)}}
\\ 
\T_2 &= 
\procinNoSuf{\pp_1}{\msgLabel{ld}(\ST)}.\procoutNoSuf{\pp_1}{\msgLabel{upd}}{\ST} 
\end{align*}
Hence, with $\Gamma =\{\pp_1:(\temptyqueue, \T_1)\}\cup \{\pp_i:(\temptyqueue, \T_2)\,|\, i=2,\ldots,n\}$ we obtain $\Gamma\vdash \M$.  
Observe that $\Gamma$ is deadlock-free and live: after $\pp_1$ sends the data to all other participants, each participant receives and replies with an update, that $\pp_1$ finally receives in arbitrary order.

Furthermore, considering again the upgrade process $\PQ'$ from \Cref{sec:intro} that allows multi-model FL. We may derive $\vdash \PQ': \T'_2$, for 
\[
\T'_2= \sum\{\procinNoSuf{\pp_1}{\msgLabel{ld}(\ST)}.\procoutNoSuf{\pp_1}{\msgLabel{upd}}{\ST}, \, \procinNoSuf{\pp_1}{\msgLabel{ld'}(\ST)}.\procoutNoSuf{\pp_1}{\msgLabel{upd'}}{\ST}\}
\] 
By our subtyping relation we have $\T_2' \subt \T_2$. 
If we denote with $\Gamma'$ typing environment where one of the clients has the upgraded type $\T_2'$, and the rest is same as in $\Gamma$, we also have $\Gamma' \subt \Gamma$. Hence, by \Cref{lem:subtyping-preserves-safety} we have that $\Gamma'$ is also deadlock-free and live, confirming that $\PQ$ can safely be replaced with $\PQ'$ in this, but also in any other, session. 
\end{example}

\begin{example}[Types for a decentralised FL protocol implementation]
  \label{ex:decentralised-types}
  Consider session $\M$ from \Cref{ex:decentralised}. 
Take that value $\msgPayload{data}$ and variables $\x_2, \ldots, \x_n, \y_2, \ldots, \y_n$ are of sort $\ST$.  
We may derive $\vdash \PP_1 : \T_1$ where (again with a slight abuse of notation using the concurrent input macro for types):
\begin{align*}
\T_1 &= 
\procoutNoSuf{\pp_2}{\msgLabel{ld}}{\ST}.\ldots 
\procoutNoSuf{\pp_n}{\msgLabel{ld}}{\ST}.
\procinconcurNoSuf{
\procin{\pp_2}{\msgLabel{ld}(\ST)}{
\procoutNoSuf{\pp_2}{\msgLabel{upd}}{\ST}}, \ldots,  
\procinNoSuf{\pp_n}{\msgLabel{ld}(\ST)}.
\procoutNoSuf{\pp_n}{\msgLabel{upd}}{\ST}
}. \\ 
& \hspace{10mm} 
\procinconcurNoSuf{
\procinNoSuf{\pp_2}{\msgLabel{upd}(\ST)}, \ldots,  
\procinNoSuf{\pp_n}{\msgLabel{upd}(\ST)}
}
\end{align*}
and similarly for other process we may derive the corresponding types so that $\vdash \PP_i:\T_i$, for $i=1, 2, \ldots, n$. 
Thus, with $\Gamma = \{\pp_i:(\temptyqueue, \T_i)\,|\, i=1,\ldots,n\}$ we obtain $\Gamma\vdash \M$. 
Notice that $\Gamma$ is safe, i.e., has no label or sort mismatches: participant $\pp_i$ always receives from $\pp_j$ message $\msgLabel{ld}$ before $\msgLabel{up}$ (as they are enqueued in this order). 
This also implies $\Gamma$ is deadlock-free and live: a participant first asynchronously sends all its $\msgLabel{ld}$ messages (phase 1), then awaits to receive $\msgLabel{ld}$ messages from queues of all other participants (phase 2), and only then awaits to receive $\msgLabel{up}$ messages (phase 3). Hence, all $n(n-1)$ of $\msgLabel{ld}$ and also of $\msgLabel{up}$ messages are sent/received, and a reduction path always ends with a terminated typing environment. 
\end{example}

\begin{remark}[On the decidability of typing environment properties]
\label{remark:decidability}
    Under the ``bottom-up approach'' to session typing, typing environment
    properties such as deadlock freedom and liveness are generally undecidable,
    since two session types with unbounded message queues are sufficient to
    encode a Turing machine~\cite[Theorem~2.5]{DBLP:journals/corr/abs-1211-2609}.
    Still, many practical protocols have bounded buffer sizes --- and in
    particular, the buffer sizes for the FL protocols in \cite{ProkicGKPPK23}
    are given in the same paper. Therefore,
    \Cref{ex:centralised-types,ex:decentralised-types} yield finite-state typing
    environment whose behavioural properties are decidable and easily verified,
    e.g., via model checking.
\end{remark}

\emph{Properties of our typing system.} 
We now illustrate and prove the properties of our typing system in \Cref{def:type-system}.
We aim to prove that if a session $\M$ is typed with safe/deadlock-free/live typing environment $\Gamma$, then so is $\M$.  
Along the way, we illustrate the subtle interplay between our deadlock freedom, liveness, and safety properties (\Cref{def:env-path-properties}) and subtyping (\Cref{def:subtyping}). 

First, we introduce the subtyping-related \Cref{lem:supertype-session} below, which holds directly by rule \rulename{t-sub} and is important for proving Subject Reduction later on (\Cref{thm:SR}). Based on this, in \Cref{ex:dfree-and-live-not-preserved-subtyping} we show why our definitions of deadlock-free and live typing environments (\Cref{def:env-path-properties}) also require safety. 
 
\begin{lemma}
\label{lem:supertype-session}
If $\Gamma \vdash \N$ and $\Gamma\subt \Gamma'$,  then $\Gamma' \vdash \N$.
\end{lemma}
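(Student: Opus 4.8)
The plan is to prove \Cref{lem:supertype-session} by induction on the derivation of $\Gamma \vdash \N$, noting that the last rule applied must be \rulename{t-sess}, since that is the only rule with a session judgement $\Gamma \vdash \N$ in its conclusion. So in fact the statement reduces to a single (non-inductive) argument: unpack the hypotheses that \rulename{t-sess} gives us, and re-derive the conclusion with $\Gamma'$ in place of $\Gamma$ by inserting an application of the subsumption rule \rulename{t-sub} at each participant.

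Concretely, from $\Gamma \vdash \N$ via \rulename{t-sess} we know $\N = \prod_{i\in I}(\pa{\pp_i}{\PP_i} \pc \pa{\pp_i}{\h_i})$, that $\Gamma = \{\pp_i : (\tqueue_i, \T_i) \mid i \in I\}$, and that for every $i \in I$ we have $\emptyset \vdash \PP_i : \T_i$ and $\vdash \h_i : \tqueue_i$. From $\Gamma \subt \Gamma'$ and \Cref{def:subtyping} we know $\dom{\Gamma} = \dom{\Gamma'}$, so $\Gamma' = \{\pp_i : (\tqueue_i', \T_i') \mid i \in I\}$ with $(\tqueue_i, \T_i) \subt (\tqueue_i', \T_i')$ for each $i$; and by the definition of subtyping on queue/session-type pairs this means $\tqueue_i' = \tqueue_i$ and $\T_i \subt \T_i'$. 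Now for each $i$, applying \rulename{t-sub} to $\emptyset \vdash \PP_i : \T_i$ and $\T_i \subt \T_i'$ yields $\emptyset \vdash \PP_i : \T_i'$, while $\vdash \h_i : \tqueue_i = \tqueue_i'$ already holds. Feeding these back into \rulename{t-sess} gives $\Gamma' \vdash \N$, as required.

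There is essentially no hard part here: the lemma is, as the text says, almost immediate from rule \rulename{t-sub}. The only point deserving a moment of care is checking that subtyping on pairs forces the queue types to be \emph{equal} (not merely related), so that the queue-typing judgements $\vdash \h_i : \tqueue_i$ transfer verbatim without needing any analogue of subsumption for queues; this is exactly guaranteed by the clause ``$(\tqueue_1,\T_1)\subt (\tqueue_2, \T_2)$ iff $\tqueue_1=\tqueue_2$ and $\T_1 \subt \T_2$'' in \Cref{def:subtyping}. One should also confirm that the side condition $(\pp_i, \ell_i) \neq (\pp_j, \ell_j)$ built into the syntax of types is preserved under the subtyping steps, but this is automatic since \rulename{s-out} and \rulename{s-in} only ever shrink or preserve the index set and leave the participant/label pairs of the remaining branches untouched.
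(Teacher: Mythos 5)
Your proof is correct and is exactly the argument the paper intends: the paper gives no separate proof, merely noting that the lemma ``holds directly by rule \rulename{t-sub},'' and your write-up is the straightforward elaboration of that remark (invert \rulename{t-sess}, apply \rulename{t-sub} pointwise using $\T_i \subt \T_i'$, note the queue types are forced to be equal, and reassemble with \rulename{t-sess}).
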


\begin{example}
\label{ex:dfree-and-live-not-preserved-subtyping}
Consider the safe but not deadlock-free nor live session $\N$ from \Cref{ex:safe-but-not-live-session}, 
and the typing environments $\Gamma$ and $\Gamma'$ from \Cref{ex:typing-environment-properties}. 
It is straightforward to show that $\Gamma\vdash \N$, by \Cref{def:type-system}.
Observe that, as shown in \Cref{ex:subtype}, we have $\Gamma \subt \Gamma'$;
therefore, by \Cref{lem:supertype-session} we also have $\Gamma' \vdash \N$. 
As noted in \Cref{ex:typing-environment-properties}, the typing environment $\Gamma'$ is not safe.
Now, suppose that in \Cref{def:env-path-properties}, deadlock-freedom and liveness of typing environments were defined without requiring safety (as for deadlock-freedom and liveness of sessions in \Cref{def:session-properties}); also notice that $\Gamma'$ has a single path that is deadlock-free and live, but not safe.
Therefore, this change in \Cref{def:env-path-properties} would cause $\Gamma'$, an unsafe but deadlock-free and live typing environment, to type $\N$, a session that is not deadlock-free nor live. This would hamper our goal of showing that if a session is typed by a deadlock-free/live typing environment, then the session is deadlock free/live.
\end{example}

Next we show our main theoretical results: subject reduction (a reduction of a typed session can be followed by its safe/deadlock-free/live typing environment); session fidelity (if the typing environemnt reduces so can the session); and type safety, deadlock-freedom, and liveness (if typing environment is safe/deadlock-free/live, then so is the session). \emph{(Proofs are in \Cref{app:type-system}.)}
\begin{restatable}[Subject Reduction]{theorem}{ThmSubjectReduction}
  \label{thm:SR}
  Assume\; $\Gamma \vdash \N$ %
  \;with $\Gamma$ safe/deadlock-free/live %
  and\; $\N \red \N'$. %
  \;Then, there is safe/deadlock-free/live type environment $\Gamma'$ %
  such that\; $\Gamma \reds \Gamma'$ \;and\; $\Gamma'\vdash\N'$.
\end{restatable}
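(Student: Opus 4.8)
The plan is to prove subject reduction by induction on the derivation of $\N\red\N'$, first reducing the interesting cases to a ``syntax-directed'' version of the statement in which the typing environment tracks the processes with no subtyping slack, and then recovering the general claim via the simulation and subsumption lemmas already established. As a preprocessing step, from $\Gamma\vdash\N$ I invert rule \rulename{t-sess} and, at the root of each participant's process derivation, peel off every application of \rulename{t-sub} (collapsing the resulting chain of subtypings by transitivity of $\subt$). This yields $\Gamma_0\vdash\N$ with $\Gamma_0\subt\Gamma$, in which every participant's session type is the one synthesised \emph{directly} by the syntax-directed typing rule for its process and every queue type is the (essentially unique) type of its queue; by \Cref{lem:subtyping-preserves-safety}, $\Gamma_0$ inherits whichever of safety / deadlock-freedom / liveness $\Gamma$ enjoys, and in particular $\Gamma_0$ is safe, since deadlock-freedom and liveness subsume safety. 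Rule \rulename{r-struct} is the inductive case and is dispatched by a routine invariance of typing under the process precongruence $\equivv$ --- which holds up to structural congruence $\equiv$ on typing environments, since $\equivv$ only reorders queue messages with distinct recipients, rearranges parallel components, and unfolds recursion --- together with closure of the environment properties under $\equiv$; this also lets me assume, when inverting, that the acting participant's process is not headed by a $\mu$.

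For the syntax-directed claim I show: if $\Gamma_0\vdash\N$ with $\Gamma_0$ safe and $\N\red\N'$ by one of \rulename{r-send}, \rulename{r-rcv}, \rulename{r-cond-T}, \rulename{r-cond-F}, then $\Gamma_0\vdash\N'$ or $\Gamma_0\red\Gamma_0'$ for some $\Gamma_0'\vdash\N'$. In \rulename{r-send} the acting participant has type $\tinternal_{i\in I}\tout{\pq_i}{\ell_i}{\ST_i}.\T_i$ with $I$ exactly the branches of the internal choice, so \rulename{e-send} fires the same branch $k\in I$, appending the sent message to the queue type and installing $\T_k$; re-typing $\N'$ uses \rulename{t-elm}/\rulename{t-queue} for the enlarged queue and the premise $\emptyset\vdash\PP_k:\T_k$ of \rulename{t-out}. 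For \rulename{r-cond-T}/\rulename{r-cond-F} no environment step is needed, since inverting \rulename{t-cond} gives the taken branch the same type as the whole conditional. The case \rulename{r-rcv} is the crux: the receiver $\pp$ has type $\texternal_{i\in I}\tin{\pq_i}{\ell_i}{\ST_i}.\T_i$ with $I$ the branches of its external choice (and the fired branch $k\in I$ by the premise of \rulename{r-rcv}), while the sender $\pq$'s queue has head $(\pp,\ell_k(\val))$, so $\pq$'s queue type begins with $\tout{\pp}{\ell_k}{\ST}$ where $\ST$ is the sort of $\val$. To apply \rulename{e-rcv} I need $\ST=\ST_k$, and this is precisely where safety of $\Gamma_0$ enters: since $\pq\in\{\pq_i\}_{i\in I}$, condition \Cref{item:type:path:safe} forces $(\pq,\ell_k,\ST)=(\pq_j,\ell_j,\ST_j)$ for some $j\in I$, whence $j=k$ and $\ST=\ST_k$. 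Then \rulename{e-rcv} shortens $\pq$'s queue type and installs $\T_k$ at $\pp$, and re-typing $\N'$ uses inversion of \rulename{t-queue} and the standard value-substitution lemma (from $\emptyset,\x_k{:}\ST_k\vdash\PP_k:\T_k$ and $\emptyset\vdash\val:\ST_k$ to $\emptyset\vdash\PP_k\subst{\val}{\x_k}:\T_k$).

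To lift back to $\Gamma$: if the syntax-directed step produced no environment move, then $\Gamma_0\vdash\N'$ and $\Gamma_0\subt\Gamma$ give $\Gamma\vdash\N'$ by \Cref{lem:supertype-session}, and $\Gamma\reds\Gamma$ in zero steps with $\Gamma$ already safe / deadlock-free / live; otherwise $\Gamma_0\red\Gamma_0'$, and applying the simulation lemma \Cref{lem:subtyping-and-reduction} to $\Gamma_0\subt\Gamma$ (using that $\Gamma$ is safe) yields $\Gamma'$ with $\Gamma\red\Gamma'$ and $\Gamma_0'\subt\Gamma'$, hence $\Gamma'\vdash\N'$ by \Cref{lem:supertype-session} and $\Gamma'$ safe / deadlock-free / live by \Cref{lem:move-preserves-safety} --- and folding in the $\equiv$-steps from the \rulename{r-struct} layer gives the required $\Gamma\reds\Gamma'$. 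I expect \rulename{r-rcv} to be the real obstacle: because rule \rulename{s-in} lets the environment's external choice carry \emph{fewer} branches than the process, a direct attempt to mirror the reduction in $\Gamma$ may find no matching branch, and it is exactly the hypothesis that $\Gamma$ is safe that rules this out --- the same phenomenon already flagged in \Cref{ex:dfree-and-live-not-preserved-subtyping} as the reason safety is baked into the definitions of deadlock-free and live environments. The remaining points are routine: associativity of queue concatenation (absorbed into $\equiv$), guardedness of recursion, and checking that peeling \rulename{t-sub} really reaches a syntax-directed root, which it does since \rulename{t-sub} is the only non-syntax-directed rule and $\subt$ is transitive.
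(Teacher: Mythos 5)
Your proposal is correct and follows essentially the same route as the paper's proof: both reduce to a syntax-directed environment $\Gamma_0\subt\Gamma$ obtained by inverting \rulename{t-sub}, perform the environment step there (using safety of $\Gamma_0$ to justify the label/sort match in the \rulename{r-rcv} case), and lift back to $\Gamma$ via the simulation lemma (\Cref{lem:subtyping-and-reduction}), \Cref{lem:supertype-session}, and preservation of the properties under $\subt$ and $\red$. The only difference is presentational --- you peel all subsumptions in one preprocessing pass, while the paper inverts per reduction case --- and you correctly identify \rulename{r-rcv} as the point where safety is indispensable.
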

\vspace{-3mm}
\begin{restatable}[Type Safety]{theorem}{ThmTypeSafety}
  \label{thm:type-safety}
  If\; $\Gamma \vdash \N$ %
  \;and $\Gamma$ is safe, then 
  $\N$ is safe.
\end{restatable}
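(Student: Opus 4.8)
The plan is to reduce Type Safety to Subject Reduction (Theorem~\ref{thm:SR}) together with the observation that the session-level safety property \ref{item:session-safe} is "reflected" by the type-level safety property \ref{item:type:path:safe} through the typing judgement. Concretely, suppose $\Gamma \vdash \N$ with $\Gamma$ safe, and let $(\M_i)_{i \in I}$ be any path beginning with $\M_0 = \N$. I want to show this path is safe, i.e.\ that \ref{item:session-safe} holds at every index $i \in I$. The key auxiliary claim is: \emph{whenever $\Gamma \vdash \N$ with $\Gamma$ safe, $\N$ itself satisfies the single-step safety condition \ref{item:session-safe}}. Granting this claim, the argument is an easy induction along the path: $\M_0 = \N$ satisfies \ref{item:session-safe} by the claim applied to $\Gamma$; and for the inductive step, if $\M_i$ is typed by a safe $\Gamma_i$, then by Subject Reduction the reduct $\M_{i+1}$ is typed by some safe $\Gamma_{i+1}$ with $\Gamma_i \reds \Gamma_{i+1}$, so the claim applied to $\Gamma_{i+1}$ gives that $\M_{i+1}$ satisfies \ref{item:session-safe}. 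Since every $\M_i$ on every path satisfies \ref{item:session-safe}, $\N$ is safe.

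\medskip
\noindent
So the real work is the auxiliary claim. Here I would argue as follows. Assume $\Gamma \vdash \N$ with $\Gamma$ safe, and suppose $\N \equivv \pa\pp\sum_{j\in J} \procin{\pq_j}{\ell_j(\x_j)}\PP_j \pc \pa\pp\h \pc \pa\pq{\PQ} \pc \pa\pq{\h_\pq} \pc \M'$ with $\pq \in \{\pq_j\}_{j \in J}$ and $\h_\pq \equiv \msg{\pp}{\ell}{\val}\cdot\h_\pq'$. I need to show $(\pq,\ell) = (\pq_j,\ell_j)$ for some $j \in J$. First, since typing is preserved by the precongruence $\equivv$ (a lemma I would invoke from the appendix, analogous to "closure under structural congruence" stated for typing environments in the excerpt just before \Cref{lem:subtyping-and-reduction}; at the process level this is the standard subject-congruence lemma), there is a typing environment $\Gamma_0 \equiv \Gamma$ — hence also safe, by \Cref{lem:subtyping-preserves-safety}/closure of safety under $\equiv$ — such that $\Gamma_0 \vdash \pa\pp\sum_{j\in J} \procin{\pq_j}{\ell_j(\x_j)}\PP_j \pc \pa\pp\h \pc \pa\pq{\PQ} \pc \pa\pq{\h_\pq} \pc \M'$. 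Now I invert rule \rulename{t-sess}: $\pp$ is assigned $(\tqueue_\pp, \TT_\pp)$ with $\emptyset \vdash \sum_{j\in J} \procin{\pq_j}{\ell_j(\x_j)}\PP_j : \TT_\pp$, and $\pq$ is assigned $(\tqueue_\pq, \TT_\pq)$ with $\vdash \h_\pq : \tqueue_\pq$. Inverting the process typing derivation for the external choice: it must end with \rulename{t-in} possibly followed by applications of \rulename{t-sub}; by the shape of the subtyping rule \rulename{s-in}, $\TT_\pp$ is an external choice $\texternal_{j \in J'}\tin{\pq_j}{\ell_j}{\ST_j}.\TT_j$ with $J \subseteq J'$ and the \emph{same set of participants} on both sides (the side condition $\{\pq_i\}_{i\in I}=\{\pq_i\}_{i\in I\cup J}$ in \rulename{s-in}). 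In particular $\pq \in \{\pq_j\}_{j \in J'}$. Inverting the queue typing $\vdash \h_\pq : \tqueue_\pq$ for $\h_\pq \equiv \msg{\pp}{\ell}{\val}\cdot\h_\pq'$ (via \rulename{t-queue} and \rulename{t-elm}, up to $\equiv$ on queue types) gives $\tqueue_\pq \equiv \tout{\pp}{\ell}{\ST}\cdot\tqueue_\pq'$ for some sort $\ST$. Then $\Gamma_0(\pp) \equiv (\tqueue_\pp, \texternal_{j\in J'}\tin{\pq_j}{\ell_j}{\ST_j}.\TT_j)$ and $\Gamma_0(\pq) \equiv (\tout{\pp}{\ell}{\ST}\cdot\tqueue_\pq', \TT_\pq)$ with $\pq \in \{\pq_j\}_{j\in J'}$, so safety property \ref{item:type:path:safe} of $\Gamma_0$ applies and yields some $j \in J'$ with $(\pq_j,\ell_j,\ST_j) = (\pq,\ell,\ST)$.

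\medskip
\noindent
The last gap to close is that $J' \supseteq J$ could, a priori, produce a matching branch index $j$ in $J' \setminus J$ rather than in $J$ — but \rulename{s-in} requires exactly the same set of \emph{participants} in the sub- and super-type, only allowing extra \emph{labels} in the supertype (the subtype, i.e.\ the process's actual type, has \emph{more} input branches). Here the process's own type is $\T$ with $\T \subt \TT_\pp$; so $\T$ — which by \rulename{t-in} has branch index set exactly $J$ — has \emph{at least} the branches of $\TT_\pp$ for the shared participant set, hence some branch $j_0 \in J$ with $(\pq_{j_0},\ell_{j_0}) = (\pq_j,\ell_j) = (\pq,\ell)$. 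That gives $(\pq,\ell) = (\pq_{j_0},\ell_{j_0})$ with $j_0 \in J$, which is exactly \ref{item:session-safe}.

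I expect the main obstacle to be the careful inversion through the subsumption rule \rulename{t-sub}: one must track precisely how \rulename{s-in} relates the branch sets and payload sorts of $\T$, of $\TT_\pp$, and of the actual queued message, and argue the matching branch can be taken \emph{within} $J$ (the set written in the process) rather than merely within the larger branch set of the type in $\Gamma$. Everything else — closure of typing and of safety under $\equivv$/$\equiv$, inversion of \rulename{t-sess}, \rulename{t-queue}, \rulename{t-elm} — is routine, and the path induction is immediate once Subject Reduction is available.
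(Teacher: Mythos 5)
Your proposal is correct and follows essentially the same route as the paper: propagate a safe typing environment along the path via Subject Reduction, invert the typing of the offending session through \rulename{t-sess}, \rulename{t-sub}, \rulename{t-in} and the queue rules, and invoke the type-level safety clause \ref{item:type:path:safe} --- the paper merely phrases the argument as a contradiction and handles the branch-set mismatch by passing to a safe \emph{subtype} environment $\Gamma''_i \subt \Gamma'_i$ whose external choice carries exactly the process's branch set $J$ (safe by \Cref{lem:subtyping-preserves-safety}), rather than applying \ref{item:type:path:safe} to the supertype in $\Gamma$ and pulling the branch back as you do. One slip to fix: in your second paragraph the inclusion is reversed --- by \rulename{s-in} the \emph{subtype} has more input branches, so the type recorded in $\Gamma$ has branch set $J' \subseteq J$, not $J \subseteq J'$ --- but your final paragraph already uses the correct direction, so the matching index produced by \ref{item:type:path:safe} lies in $J'$ and hence automatically in $J$, and the ``gap'' you worry about never actually arises.
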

\vspace{-3mm}
\begin{restatable}[Session Fidelity]{theorem}{ThmSessionFidelity}
\label{thm:session-fidelity}
Let \;$\Gamma\vdash \N$. 
If \;$\Gamma\red$, then $\exists \Gamma', \N'$ such that \;$\Gamma\red\Gamma'$ and \;$\N\redPlus \N'$\; and \;$\Gamma'\vdash \N'$.
\end{restatable}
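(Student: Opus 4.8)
The plan is to invert the derivation of $\Gamma \vdash \N$ and then case-analyse the reduction witnessing $\Gamma \red$. By rule \rulename{t-sess}, $\N \equivv \prod_{i\in I}(\pa{\pp_i}{\PP_i} \pc \pa{\pp_i}{\h_i})$, $\Gamma = \{\pp_i{:}(\tqueue_i,\T_i)\}_{i\in I}$, with $\emptyset \vdash \PP_i : \T_i$ and $\vdash \h_i : \tqueue_i$ for all $i$. Since $\Gamma \red$, modulo \rulename{e-struct} the step is derived by \rulename{e-send} or \rulename{e-rcv}, and it is labelled, so it pins down the participant(s) and the message label/sort involved. Before the case analysis I would establish two auxiliary facts. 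First, a \emph{generation/normalisation lemma} for processes that absorbs the subsumption rule \rulename{t-sub}: if $\emptyset \vdash \PP : \T$ with $\T \equiv \tinternal_{i\in I}\tout{\pq_i}{\ell_i}{\ST_i}.\T_i$ (resp.\ $\T \equiv \texternal_{i\in I}\tin{\pp_i}{\ell_i}{\ST_i}.\T_i$, resp.\ $\T \equiv \tend$), then through finitely many recursion unfoldings and conditional reductions $\PP$ rewrites to $\sum_{i\in I_0}\procout{\pq_i}{\ell_i}{\val_i}{\PP_i}$ with $\emptyset \ne I_0 \subseteq I$, $\{\pq_i\}_{i\in I_0} = \{\pq_i\}_{i\in I}$, $\emptyset \vdash \val_i : \ST_i$ and $\emptyset \vdash \PP_i : \T_i$ for $i\in I_0$ (resp.\ to $\sum_{i\in I_1}\procin{\pp_i}{\ell_i(\x_i)}{\PP_i}$ with $I \subseteq I_1$, the branches for $i\in I$ carrying exactly $(\pp_i,\ell_i,\ST_i)$ and $\x_i{:}\ST_i \vdash \PP_i : \T_i$; resp.\ to $\inact$). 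This follows by inversion of \rulename{t-out}/\rulename{t-in}/\rulename{t-$\inact$} combined with transitivity of $\subt$ and the shapes forced by \rulename{s-out}/\rulename{s-in}/\rulename{s-muL}/\rulename{s-muR}/\rulename{s-end}; closedness of $\PP$ guarantees that every conditional encountered at the top has a boolean-literal guard, so it can fire by \rulename{r-cond-t}/\rulename{r-cond-f}, and guardedness of recursion guarantees the rewriting terminates at a choice or at $\inact$. Second, conditional reductions and recursion unfoldings preserve process typing (immediate from \rulename{t-cond} and equi-recursiveness), so performing the rewriting of the first fact inside the session (via \rulename{r-cond-t}/\rulename{r-cond-f}/\rulename{r-struct}) keeps $\Gamma \vdash (\cdot)$ invariant.

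For the \rulename{e-send} case the step is $\pp{:}(\tqueue, \tinternal_{i\in I}\tout{\pq_i}{\ell_i}{\ST_i}.\T_i), \Gamma_0 \;\redSend{\pp}{\pq_k}{\ell_k}\; \pp{:}(\tqueue\cdot\tout{\pq_k}{\ell_k}{\ST_k}, \T_k), \Gamma_0$. Applying the generation lemma to $\PP_\pp$, I normalise $\N$ (preserving $\Gamma \vdash (\cdot)$ by the second fact) so that $\pp$'s process becomes an internal choice over some $\emptyset \ne I_0 \subseteq I$ with the same set of receivers; I pick any $k_0 \in I_0$, fire \rulename{r-send} to get $\N'$ with $\N \redPlus \N'$, and take $\Gamma'$ to be the \rulename{e-send} result for index $k_0$ (a legal step since $k_0 \in I$). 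Then $\Gamma' \vdash \N'$ holds by \rulename{t-sess}: the continuation $\PP_{k_0}$ has type $\T_{k_0}$ (its direct type from the lemma is a subtype of $\T_{k_0}$, so apply \rulename{t-sub}), and $\h_\pp \cdot (\pq_{k_0}, \ell_{k_0}(\val_{k_0}))$ has type $\tqueue\cdot\tout{\pq_{k_0}}{\ell_{k_0}}{\ST_{k_0}}$ by \rulename{t-elm}/\rulename{t-queue} using $\emptyset \vdash \val_{k_0} : \ST_{k_0}$; all remaining participants are untouched. For the \rulename{e-rcv} case the step involves a receiver $\pq$ with $\Gamma(\pq) \equiv (\tqueue_\pq, \texternal_{i\in I}\tin{\pp_i}{\ell_i}{\ST_i}.\T_i)$ and, for some $k \in I$, a sender $\pp_k$ with $\Gamma(\pp_k) \equiv (\tout{\pq}{\ell_k}{\ST_k}\cdot\tqueue, \T_{\pp_k})$, and it produces $\pq{:}(\tqueue_\pq, \T_k)$ and $\pp_k{:}(\tqueue, \T_{\pp_k})$. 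Inversion of the queue typing $\vdash \h_{\pp_k} : \tout{\pq}{\ell_k}{\ST_k}\cdot\tqueue$ gives $\h_{\pp_k} = (\pq,\ell_k(\val))\cdot\h'$ with $\emptyset \vdash \val : \ST_k$ and $\vdash \h' : \tqueue$; the generation lemma (plus the second fact) lets me normalise $\N$ so that $\pq$'s process is an external choice containing the branch $\procin{\pp_k}{\ell_k(\x_k)}{\PP_k}$ with $\x_k{:}\ST_k \vdash \PP_k : \T_k$. Firing \rulename{r-rcv} yields $\N \redPlus \N'$ in which $\pq$'s process is $\PP_k\subst{\val}{\x_k}$ and $\pp_k$'s queue is $\h'$; with $\Gamma'$ the corresponding \rulename{e-rcv} result, $\Gamma' \vdash \N'$ follows from \rulename{t-sess} by the substitution lemma ($\emptyset \vdash \PP_k\subst{\val}{\x_k} : \T_k$) and $\vdash \h' : \tqueue$, all other components being unchanged.

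The main obstacle is the generation/normalisation lemma: it is where the equi-recursive, subsumption-laden type system must be reconciled with the need to expose a choice (or $\inact$) at the head of a process. One has to chase arbitrarily many nested \rulename{t-sub} applications using transitivity of $\subt$, bound the number of required unfoldings and conditional reductions (relying on guardedness), and track how branch sets shrink for outputs and grow for inputs under \rulename{s-out}/\rulename{s-in}. Everything after that is bookkeeping; note in particular that we may follow \emph{any} reduction of $\Gamma$ --- the statement only asks for $\Gamma \red \Gamma'$ for \emph{some} $\Gamma'$ --- which is exactly what makes the \rulename{e-send} case work even when the process offers strictly fewer output branches than its type, and that, unlike Subject Reduction (\Cref{thm:SR}), no safety/deadlock-freedom/liveness assumption on $\Gamma$ is used here.
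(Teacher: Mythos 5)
Your proposal is correct and follows essentially the same route as the paper's proof: invert \rulename{t-sess}, case-split on \rulename{e-send}/\rulename{e-rcv}, exploit that subtyping makes the process's output branches a subset and its input branches a superset of the type's, and use the freedom to pick \emph{any} matching environment reduction $\Gamma \red \Gamma'$. Your explicit generation/normalisation lemma for exposing a choice through recursion unfoldings and conditional reductions is in fact more careful than the paper's terse ``by inversion on the typing rules'' (and correctly accounts for why the statement needs $\N \redPlus \N'$ rather than a single step).
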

\vspace{-3mm}
\begin{restatable}[Deadlock freedom]{theorem}{ThmSessionDeadlockFreedom}
\label{thm:deadlock-freedom}
If $\Gamma \vdash \N$ and $\Gamma$ is deadlock-free, then $\N$ is deadlock-free.
\end{restatable}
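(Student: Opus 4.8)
The goal is Theorem~\ref{thm:deadlock-freedom}: if $\Gamma \vdash \N$ with $\Gamma$ deadlock-free, then $\N$ is deadlock-free. By \Cref{def:session-properties}, this means showing that every session path $(\N_i)_{i\in I}$ starting from $\N$ with $I = \{0,\ldots,n\}$ finite and $\N_n \nred$ satisfies $\N_n \equivv \pa\pp{\inact}\pc\pa\pp{\emptyqueue}$ (i.e.\ $\N_n$ is terminated). The plan is to transfer the reduction path from the session level to the typing-environment level using Subject Reduction (\Cref{thm:SR}), then exploit deadlock-freedom of the environment, and finally reflect the conclusion back down to the session.

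First I would take an arbitrary session path $\N = \N_0 \red \N_1 \red \cdots \red \N_n$ with $\N_n \nred$. By iterating Subject Reduction (\Cref{thm:SR}) along this path, I obtain a corresponding sequence of typing environments $\Gamma = \Gamma_0$, and for each step $\N_i \red \N_{i+1}$ a deadlock-free $\Gamma_{i+1}$ with $\Gamma_i \reds \Gamma_{i+1}$ and $\Gamma_{i+1} \vdash \N_{i+1}$; in particular $\Gamma_n$ is deadlock-free and $\Gamma_n \vdash \N_n$, and concatenating gives a reduction path $\Gamma \reds \Gamma_n$. The second step is the crux: I must show that $\Gamma_n \nred$. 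This is the contrapositive of \emph{Session Fidelity} (\Cref{thm:session-fidelity}): if $\Gamma_n \red$, then $\N_n \redPlus \N_n'$ for some $\N_n'$, contradicting $\N_n \nred$. Hence $\Gamma_n \nred$. Now, since $\Gamma_n$ is deadlock-free and $\Gamma \reds \Gamma_n$, the path $\Gamma_0 \red \cdots \red \Gamma_n$ (a finite path with $\Gamma_n \nred$) must satisfy property \ref{item:type:path:deadlock-free}, so $\fend(\Gamma_n)$ holds — that is, for every $\pp \in \dom{\Gamma_n}$, $\Gamma_n(\pp) \equiv (\temptyqueue, \tend)$.

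The final step is to reflect $\fend(\Gamma_n)$ back down to $\N_n$. Since $\Gamma_n \vdash \N_n$, by inversion on rule \rulename{t-sess} we have $\N_n = \prod_{i} (\pa{\pp_i}{\PP_i} \pc \pa{\pp_i}{\h_i})$ with, for each $i$, $\emptyset \vdash \PP_i : \T_i$ and $\vdash \h_i : \tqueue_i$ where $\Gamma_n(\pp_i) = (\tqueue_i, \T_i)$. From $\Gamma_n(\pp_i) \equiv (\temptyqueue, \tend)$ I get $\tqueue_i \equiv \temptyqueue$ and $\T_i \equiv \tend$ (up to the structural congruence on types). From $\vdash \h_i : \temptyqueue$ I conclude $\h_i = \emptyqueue$ (inspecting the queue typing rules \rulename{t-nul}, \rulename{t-elm}, \rulename{t-queue} — only an empty queue, or a concatenation of empty queues, can have the empty queue type). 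From $\emptyset \vdash \PP_i : \tend$ (using \rulename{t-sub} to absorb unfoldings, since $\tend \subt \tend$ only), inversion on the process typing rules shows $\PP_i$ must be $\inact$, or a guarded recursion whose unfolding is $\inact$; in all cases $\PP_i \equivv \inact$ via the precongruence (which unfolds recursion). Therefore each component is structurally equivalent to $\pa{\pp_i}{\inact} \pc \pa{\pp_i}{\emptyqueue}$, and so $\N_n \equivv \pa\pp{\inact}\pc\pa\pp{\emptyqueue}$, establishing \ref{item:session-deadlock-free}.

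\textbf{Main obstacle.} The delicate point is the $\fend(\Gamma_n) \Rightarrow \N_n$ terminated direction, specifically handling processes typed with $\tend$ that are not syntactically $\inact$ — e.g.\ $\mu X.\mathbf{0}$, or processes where \rulename{t-sub} has been applied with the equi-recursive subtyping. This requires a careful inversion lemma for the judgement $\emptyset \vdash \PP : \tend$ showing that such $\PP$ is $\equivv$-equivalent to $\inact$ (leaning on guardedness of recursion and the fact that the only subtype/supertype of $\tend$ under $\subt$ is $\tend$ itself, by rule \rulename{s-end} and the shape of the other subtyping rules). The transfer steps (iterating \Cref{thm:SR}, contraposing \Cref{thm:session-fidelity}) are routine once these results are in hand; one should also double-check that the definition of deadlock-freedom for environments — which bakes in safety — poses no issue here, since \Cref{thm:SR} already delivers a \emph{deadlock-free} (hence safe) $\Gamma_n$.
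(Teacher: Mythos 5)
Your proposal follows exactly the same route as the paper's proof: iterate Subject Reduction along the path to obtain $\Gamma \reds \Gamma_n$ with $\Gamma_n \vdash \N_n$, invoke the contrapositive of Session Fidelity to get $\Gamma_n \nred$, conclude $\fend(\Gamma_n)$ from clause \ref{item:type:path:deadlock-free}, and finish by inversion on the typing rules to show $\N_n$ is terminated. The only difference is that you spell out the final inversion step (empty queues, $\tend$-typed processes being $\equivv$-equivalent to $\inact$) in more detail than the paper, which simply states it; your argument is correct.
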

\vspace{-3mm}
\begin{restatable}[Liveness]{theorem}{ThmLiveness}
  \label{thm:liveness}
If $\Gamma\vdash\N$ and $\Gamma$ is live, then $\N$ is live.
\end{restatable}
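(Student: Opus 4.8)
The plan is to prove Liveness (\Cref{thm:liveness}) by transferring liveness from the typing environment $\Gamma$ to the session $\N$ along an arbitrary fair session path, using Subject Reduction (\Cref{thm:SR}) as the main bridge between the two levels. Concretely, suppose $\Gamma \vdash \N$ with $\Gamma$ live, and let $(\N_i)_{i\in I}$ be a fair session path with $\N_0 = \N$. First I would iterate Subject Reduction: starting from $\Gamma_0 = \Gamma$, for each reduction $\N_i \red \N_{i+1}$ we obtain a live typing environment $\Gamma_{i+1}$ with $\Gamma_i \reds \Gamma_{i+1}$ and $\Gamma_{i+1} \vdash \N_{i+1}$. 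Collapsing the intermediate $\reds$ steps, this yields a typing environment path $(\Gamma_j)_{j\in J}$ (reindexed) in which each $\Gamma_j$ is live (by \Cref{lem:move-preserves-safety}/\Cref{lem:subtyping-and-reduction} applied along the way inside Subject Reduction), together with a monotone map $j \mapsto i(j)$ relating environment-path positions to session-path positions such that $\Gamma_j \vdash \N_{i(j)}$, and every session reduction $\N_i \red \N_{i+1}$ corresponds to zero or one environment reductions.

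The second ingredient is to show that the environment path so constructed is \emph{fair} (in the sense of \Cref{item:fairness:send}/\Cref{item:fairness:recv}), so that the liveness of each $\Gamma_j$ actually applies to it. This follows because the session path is fair: by the correspondence established in Subject Reduction, a send/receive reduction of $\N$ at position $i$ is mirrored by an $\alpha$-labelled environment reduction; conjoining \Cref{item:session-fairness:send}, \Cref{item:session-fairness:recv} with the inversion of the typing rules (\rulename{t-sess}, \rulename{t-out}, \rulename{t-in}) — which forces an internal/external choice in $\N$ to be typed by an internal/external choice in $\Gamma$ up to subtyping — gives the matching environment-level fairness clauses. Here I would need a small lemma (essentially the contrapositive of one direction used in Session Fidelity) that whenever $\N_i$ offers an output/input or has a non-empty queue, then $\Gamma_{j}$ (for the appropriate $j$) does too, via the typing and subtyping relations; this is where rules \rulename{t-sub}, \rulename{s-out}, \rulename{s-in} are used, noting that subtyping keeps the participant set fixed and only shrinks outputs / grows inputs.

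Given a fair environment path with all $\Gamma_j$ live, each of the liveness clauses \Cref{item:liveness:send}, \Cref{item:liveness:recv} holds at every $\Gamma_j$, and I would translate each back to the corresponding session-liveness clause \Cref{item:session-liveness:if}, \Cref{item:session-liveness:out}, \Cref{item:session-liveness:send}, \Cref{item:session-liveness:recv}. For \Cref{item:session-liveness:out}: if $\N_i \equivv \pa\pp \sum_{j\in J}\procout{\pq_j}{\ell_j}{\val_j}{\PP_j} \pc \pa\pp\h \pc \M'$, then by inversion $\Gamma_{j}(\pp)$ is (up to subtyping) an internal choice, so... wait, internal choices are not directly covered by \Cref{item:liveness:send}/\Cref{item:liveness:recv}; I would instead use \Cref{item:fairness:send}-style reasoning together with the fact that the \emph{session} itself can always fire an enabled send (by rule \rulename{r-send} an internal choice is never stuck), deferring to fairness clause \Cref{item:session-fairness:send} to find a $k\geq i$ where $\pp$ actually sends. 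Similarly \Cref{item:session-liveness:if} follows from \Cref{item:session-fairness:if} since a conditional always reduces. For \Cref{item:session-liveness:send} (a queued message is eventually received) and \Cref{item:session-liveness:recv} (a pending input eventually fires), I would use the environment-level clauses \Cref{item:liveness:send}, \Cref{item:liveness:recv}: an environment reduction $\Gamma_{j} \redRecv{\pq}{\pp}{\ell}\Gamma_{j+1}$ is matched, by Session Fidelity (\Cref{thm:session-fidelity}) reasoning applied within the path, to an actual session receive $\N_{i(j)} \redRecv{\pq}{\pp}{\ell}\N_{i(j)+1}$ — but I must be careful that Session Fidelity as stated only guarantees \emph{some} matching session reduction exists, not that it occurs at the \emph{right place along this particular path}; bridging that gap is exactly the delicate point.

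The main obstacle I expect is precisely this path-alignment issue: liveness is a property of a \emph{specific} fair path, so it is not enough to know that environment reductions can be mirrored by session reductions somewhere — I need the mirroring to happen along the \emph{given} fair session path $(\N_i)_{i\in I}$. The standard way around this (following \cite{GhilezanPPSY23}) is to establish a converse simulation / fidelity lemma strong enough to show that if $\Gamma_j \vdash \N_{i(j)}$ and $\N_{i(j)}$ performs some reduction that is \emph{not} the one the environment clause demands, then the obligation is merely postponed, not lost — i.e., a diamond/confluence-style argument showing that the required action remains enabled (its queue entry is still at the head, or its external choice is still unchanged) until it is eventually taken, which it must be because the path is fair and only finitely many conflicting actions can be interposed (a well-foundedness measure on the ``distance'' to the obliged action, using guardedness of recursion and the FIFO discipline on queues). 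Making that measure precise — especially across recursion unfolding allowed by $\equivv$ and across subtyping steps — is the technically heaviest part; once it is in place, the translation of each liveness clause is routine.
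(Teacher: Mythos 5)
Your overall skeleton matches the paper's proof: argue per liveness clause, dispatch \ref{item:session-liveness:if} and \ref{item:session-liveness:out} directly from session fairness (a typed conditional or internal choice is always enabled, since typing forces the guard to be a boolean and the payloads to be closed values, so a fair path must eventually schedule it), and for \ref{item:session-liveness:send} and \ref{item:session-liveness:recv} build, via iterated Subject Reduction, a typing environment path that mirrors the given fair session path with identical send/receive labels (and no step for conditionals), then transfer fairness from the session path to that environment path.

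However, your last step goes in the wrong direction. Having constructed the environment path \emph{from} the session path, you do not need Session Fidelity, nor any diamond/confluence argument or well-foundedness measure, to bring the environment's liveness obligation back to the session: every send/receive step of the constructed environment path already \emph{is} the image of a step of the given session path, with the same label and at the corresponding position. The argument therefore closes entirely at the environment level, by contradiction: if the session path violates, say, \ref{item:session-liveness:send} at position $i$ (the message $\msg{\pq}{\ell}{\val}$ at the head of $\pp$'s queue is never received along the path), then typing gives $\Gamma_i(\pp) \equiv (\tout{\pq}{\ell}{\ST}\cdot\tqueue,\T)$, and the mirrored environment path --- which is fair, since its trace coincides with the session path's trace modulo conditional steps --- never performs $\redRecv{\pq}{\pp}{\ell}$ from position $i$ onwards either. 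That is a fair-but-not-live path starting from $\Gamma$, contradicting the hypothesis that $\Gamma$ is live; clause \ref{item:session-liveness:recv} is handled the same way via \ref{item:liveness:recv}. The ``path-alignment obstacle'' you single out as the technically heaviest part is an artefact of running the correspondence backwards; as written, your proof leaves that step unresolved, and the machinery you propose to resolve it is unnecessary.
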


\section{Related and Future Work}\label{sec:related}

The original asynchronous multiparty session types \cite{HYC16} is ``top-down,'' in the sense that it begins by specifying a \emph{global type}, i.e., a choreographic formalisation of all the communications expected between the \emph{participants} in the protocol; then, the global type is \emph{projected} into a set of (local) session types, which can be used to type-check processes.  
There are many extensions to this ``top-down'' approach.  
\cite{HuY17,GheriLSTY22} extend multiparty sessions to support protocols with optional and dynamic participants, allowing sender-driven choices directed toward multiple participants. 
This is used by other work \cite{SimicPDSM21,SimicDSP24,SimicDSP25} to express communication protocols %
utilized in distributed cloud and edge computing. 
The models in \cite{MajumdarMSZ21,Stutz23,LiSWZ23,DBLP:journals/corr/abs-2411-05722} generalize the notion of projection in asynchronous multiparty session types. Building on the global types of \cite{HuY17}, they allow local types with internal and external choices directed at different participants, as we do in this paper.  Nevertheless, 
these approaches are constrained by their reliance on global types and a projection function: the global type couples both sending and receiving in a single construct and enforces a protocol structure where a single participant drives the interaction. This results in at least one projected local type beginning with a receive action --- unlike the local types for our decentralised FL protocol implementation (see \Cref{ex:decentralised-types}).

An alternative ``bottom-up'' approach to session typing \cite{ScalasY19} removes the requirement of global types:
instead, local session types are specified directly, and the properties of their composition (e.g. deadlock freedom and liveness) are checked and transferred to well-typed processes. The approach in \cite{PetersY24} extends the ``bottom-up'' synchronous multiparty session model of \cite{ScalasY19}, not only by introducing more flexible choices but also by supporting mixed choices --- i.e., choices that combine inputs and outputs. In contrast to \cite{PetersY24}, which assumes synchronous communication, our theory supports asynchronous communication, which is essential for our motivating scenarios and running examples; moreover, \cite{PetersY24} does not prove process liveness nor session fidelity.

The recent work \cite{StutzD25} presents an automata-based approach for checking multiparty protocols. Unlike ours, \cite{StutzD25} presents a top-down approach to session typing, providing a soundness and completeness result on its projection; moreover, \cite{StutzD25} shows that its typing system is also usable in ``bottom-up'' fashion, like ours. Their global types are specified as Protocol State Machines (PSMs), featuring decoupled send-receive operations, allowing for a wider range of protocols to be specified. Their local type specifications are specified as Communicating State Machines (CSMs), whereas ours are the extended session types. Their sessions are represented, as ours, as $\pi$-calculus processes (theirs with delegation), and the type system relates sessions to CSMs, and has the following distinguishing points w.r.t. our paper. First, we provide a coinductively defined subtyping relation with a standard subsumption rule \rulename{t-sub}, while \cite{StutzD25} has no subtyping relation: specifically, it embeds output subtyping (i.e., the selection of one among multiple possible outputs) within the typing rules, but does not support input subtyping. \cite{StutzD25} explicitly notes that ``there are subtleties for subtyping as one cannot simply add receives''. This makes the interplay of safety, deadlock-freedom/liveness, and subtyping we pointed out in our paper not reproducible in their setting. 
Second, \cite{StutzD25} proves a safety property (no label mismatches and terminated sessions not leaving orphan messages) and the progress (deadlock-freedom) property (if the type of the session is not final, the process can take a step). Their progress property is proven for the processes containing only one session (like ours) — but their results do not include a proof of liveness, unlike ours. 
Finally, \cite{StutzD25} specifies ``...unsafe communication: a process is stuck because all the queues it is waiting to receive from are not empty, but the labels of the first messages do not match any of the cases the process is expecting'' — while in our case it is sufficient for one queue to have an unmatching message. This is a reason why \cite{StutzD25} does not allow for subtyping, unlike our paper.

The correctness of the decentralized FL algorithm has been formally verified for deadlock-freedom and termination in \cite{ProkicGKPPK23,DjukicPPGPP25} by  using the Communicating Sequential Processes calculus (CSP) and the Process Analysis Toolkit (PAT) model checker. 
Our asynchronous multiparty session typing model abstracts protocol behavior at the type level, paving the way for more scalable and efficient techniques --- not only for model checking, but also for broader verification and analysis applications.

\emph{Conclusions and future work.} 
We present the first ``bottom-up'' asynchronous multiparty session typing theory 
that supports internal and external choices targeting multiple distinct participants. We introduce a process and type calculus, which we relate through a type system. We formally prove safety, deadlock-freedom, liveness, and session fidelity, highlighting interesting dependencies between these properties in the presence of a subtyping relation. Finally, we demonstrate how our model can represent a wide range of communication protocols, including those used in asynchronous decentralized federated learning. 

For future work, we identify two main research directions. The first focuses on the fundamental properties of our approach, including: verifying typing environment properties via model checking, in the style of \cite{ScalasY19}; 
investigating decidable approximations of deadlock freedom and liveness of typing contexts (see \Cref{remark:decidability}), e.g.~with the approach of 
\cite{LangeY19}; and investigating the expressiveness of our model based on the framework in \cite{PetersY24}.
The second direction explores the application of our model as a foundation for reasoning about federated-learning-specific properties, e.g.: handling crashes of arbitrary participants \cite{BarwellHY023}; 
 supporting optional  participation \cite{HuY17,GheriLSTY22}; ensuring that participants only receive data they are able to process; statically enforcing that only model parameters---not raw data---are exchanged to preserve privacy; guaranteeing sufficiently large server buffers to receive messages from all clients; and ensuring that all clients contribute equally to the algorithm.

\vspace{-4mm}
\subsubsection{\ackname} 
This work was supported by Horizon EU project 101093006 \emph{TaRDIS},  MSTDI (Contract No. 451-03-137/2025-03/200156) and FTS-UNS through the project \emph{Scientific and Artistic Research Work of Researchers in Teaching and Associate Positions at  FTS-UNS 2025} (No. 01-50/295), COST CA20111 \emph{EuroProofNet}, and  EPSRC grants EP/T006544/2, 
EP/N027833/2,EP/T014709/2, EP/V000462/1, EP/X015955/1, EP/Y005244/1 and ARIA.

\bibliographystyle{splncs04}
\bibliography{session}

\newpage 
\appendix

\section{Additional material of \Cref{sec:intro}}
\label{app:intro}

\begin{figure}
\centering
\includegraphics[scale=0.6]{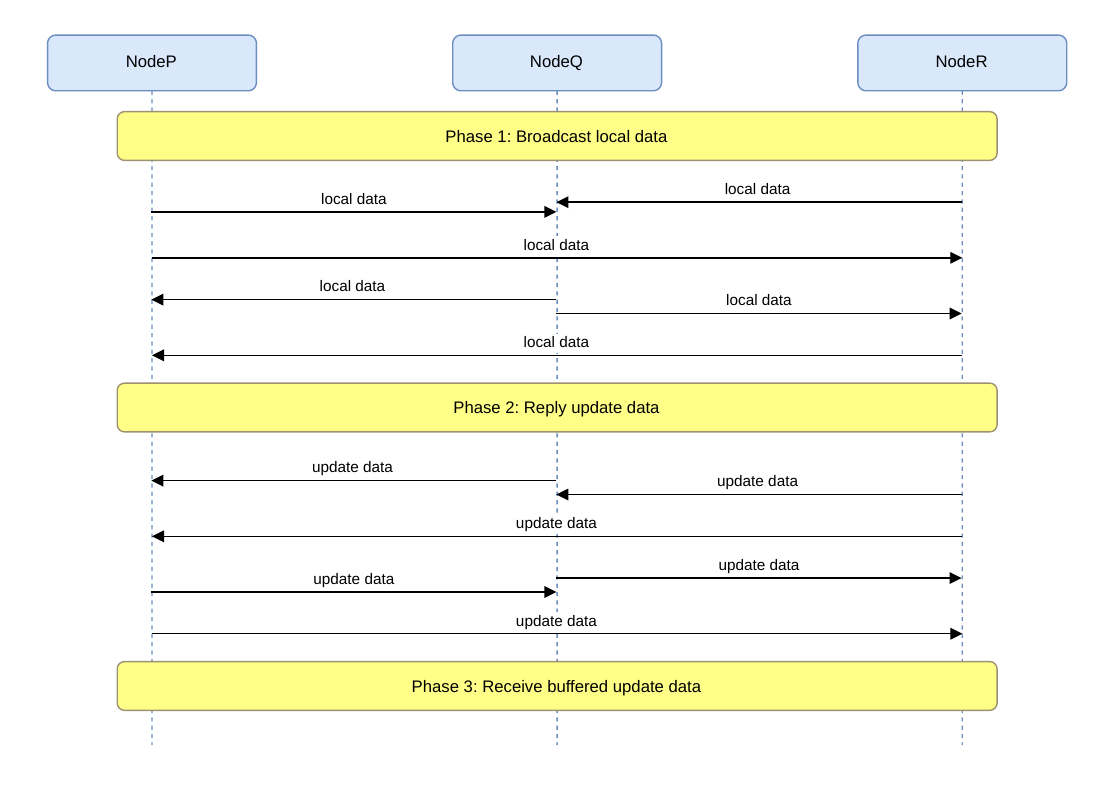}
\caption{An execution of the generic decentralised one-shot federated learning algorithm.}%
\label{fig:generic decentralised}
\end{figure}

\section{Structural (pre)congruence relation for processes}
\label{app:processes}

\begin{figure}[h]{}
\(%
\begin{array}{c}
\h_1\cdot \msg{\pq_1}{\ell_1}{\val_1} \cdot \msg{\q_2}{\ell_2}{\val_2}\cdot \h_2  \equiv  \h_1 \cdot \msg{\q_2}{\ell_2}{\val_2}\cdot \msg{\pq_1}{\ell_1}{\val_1}\cdot\h_2 \quad\text{(if $\q_1\neq \q_2$)}\\[1mm]
 \EmptyQueue \cdot\Queue \equiv  \Queue \qquad
 \Queue\cdot \EmptyQueue \equiv  \Queue \qquad
 \Queue_1\cdot (\Queue_2\cdot\Queue_3) \equiv (\Queue_1 \cdot \Queue_2)\cdot\Queue_3
\qquad
 \N_1\pc \N_2 \equiv  \N_2\pc \N_1 
 \\[1mm]
 \mu X.\PP \equivv \PP\subst{\mu X.\PP}{X}  
 \quad 
  \pa\pp\inact \pc \pa\pp\EmptyQueue \pc \N \equiv \N \quad 
 (\N_1\pc \N_2) \pc \N_3 \,\equiv\,  \N_1\pc (\N_2 \pc \N_3) 
 \\[1mm]
 \PP\equivv \Q \;\text{ and }\;\h_1 \equiv \h_2 \;\;\implies\;\; \pa\pp\PP\pc \pa\pp\h_1 \pc \N \,\equivv\, \pa\pp\Q\pc \pa\pp\h_2 \pc \N
\end{array}
\)\\
\caption{\label{tab:congruence}Structural (pre)congruence rules for queues, processes, and sessions.}
\end{figure}

The \textbf{structural precongruence} relation $\equivv$ and the \textbf{structural congruence} relation $\equiv$ (which is included in the precougruence $\equivv$)  are %
defined in a standard way in Table \ref{tab:congruence}. 
Notice that the congruence $\equiv$ for queues allows the reordering of messages with different receivers \cite{GhilezanPPSY23}:
this way, a participant $\pq_2$ can always receive the oldest queued message directed to $\pq_2$, even if that message is sent and queued after
another message directed to $\pq_1 \neq \pq_2$.
Also notice that the precongruence $\equivv$ for processes allows 
unfolding recursive processes, while folding is not allowed \cite{UdomsrirungruangY25}.  
The reason is that folding and unfolding are type-agnostic operations (types and sorts are introduced in \Cref{sec:tsu}); through folding, we may inadvertently identify two variables to which the type system (cf. \Cref{sec:tsy}) assigns different sorts.

\section{Structural congruence relation for types}
\label{app:types}

For the session types, we define $\equiv$ as the least congruence %
such that $\mu\ty. \T \equiv \T\subst{\mu\ty. \T}{\ty}$. For the queue types, we define $\equiv$ to allow reordering messages that have different recipients (similarly to the congruence for message queues in \Cref{tab:congruence}). More precisely, $\equiv$ is defined as the least congruence satisfying:

\smallskip%
\centerline{
$
\begin{array}{c}
\tqueue \cdot \temptyqueue \;\equiv\; \temptyqueue \cdot \tqueue \;\equiv\; \tqueue %
  \qquad\qquad
   \tqueue_1 \cdot (\tqueue_2 \cdot \tqueue_3) \;\equiv\; (\tqueue_1 \cdot \tqueue_2) \cdot \tqueue_3\\[1mm] 
{\tqueue \cdot \tout{\pp_1}{\ell_1}{\ST_1} \cdot \tout{\pp_2}{\ell_2}{\ST_2} \cdot \tqueue' \;\;\equiv\;\; \tqueue\cdot \tout{\pp_2}{\ell_2}{\ST_2} \cdot \tout{\pp_1}{\ell_1}{\ST_1} \cdot \tqueue'}
\quad\text{(if $\pp_1 \neq \pp_2$)}
\end{array}
$}%
\smallskip

\noindent%
Pairs of queue/session types are related via   structural congruence\; $(\tqueue_1,\T_1) \equiv (\tqueue_2, \T_2)$ \;iff\; $\tqueue_1 \equiv \tqueue_2$ and $\T_1 \equiv \T_2$. By extension, we also define congruence for typing environments (allowing additional entries being pairs of types of empty queues and terminated sessions):\; %
  $\Gamma\equiv\Gamma'$ 
  \;iff\; $\forall\pp\!\in\! \dom{\Gamma} \cap \dom{\Gamma'}{:}$ $\Gamma(\pp)\!\equiv\Gamma'(\pp)$  
and $\forall\pp\!\in\! \dom{\Gamma} \setminus \dom{\Gamma'}, \pq\!\in\!\dom{\Gamma'} \setminus \dom{\Gamma}{:}$ $\Gamma(\pp) \!\equiv\! (\temptyqueue,\tend) \!\equiv\! \Gamma'(\pq)$. %

\section{Proofs of \Cref{sec:tsu}}
\label{app:subtyping}

\begin{proposition}
  \label{lem:congruence-preserves-safety}%
  If\, $\Gamma$ is safe/deadlock-free/live and $\Gamma \equiv \Gamma'$, %
  then $\Gamma'$ is safe/deadlock-free/live.
\end{proposition}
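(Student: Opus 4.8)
The plan is to reduce the whole statement to one fact about rule \rulename{e-struct}: the typing-environment reduction already \emph{absorbs} structural congruence. Indeed, since $\equiv$ is an equivalence relation, whenever $\Gamma \equiv \Gamma'$ and $\Gamma' \redLabel{\alpha} \Delta$, instantiating \rulename{e-struct} with $\Gamma_1 := \Gamma'$, $\Gamma_1' := \Delta$ gives $\Gamma \redLabel{\alpha} \Delta$ with the \emph{same} label $\alpha$; symmetrically in the other direction, so in particular $\Gamma \red$ iff $\Gamma' \red$. I would first record this, together with a small auxiliary observation: every per-environment condition occurring in \Cref{def:env-path-properties} --- the safety clause, the predicate $\fend(\cdot)$, and the premises of the fairness and liveness clauses (``there is an available $\alpha$-step'' and ``$\Gamma_i(\pp) \equiv (\ldots)$'') --- is already phrased up to $\equiv$, hence invariant under replacing an environment by a $\equiv$-congruent one. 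The only point needing care is that $\equiv$ preserves the top-level shape of each participant entry, so that an entry $\equiv (\temptyqueue,\tend)$ can never match a premise mentioning an internal/external choice or a non-empty queue; this simultaneously handles the clause of $\equiv$ on environments that allows adding entries mapped to $(\temptyqueue,\tend)$, and shows $\fend$ is itself $\equiv$-invariant.

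Next, given an arbitrary path (resp.\ fair path) $\pi' = (\Gamma'_i)_{i\in I}$ with $\Gamma'_0 = \Gamma'$, I would build the companion sequence $\pi = (\Gamma_i)_{i\in I}$ by $\Gamma_0 := \Gamma$ and $\Gamma_i := \Gamma'_i$ for all $i\in I$ with $i\geq 1$. By absorption, $\Gamma_0 = \Gamma \equiv \Gamma' = \Gamma'_0 \red \Gamma'_1 = \Gamma_1$ yields $\Gamma_0 \red \Gamma_1$ with the label of the first step of $\pi'$, and all later steps are literally those of $\pi'$; hence $\pi$ is a path starting from $\Gamma$ with the same sequence of transition labels as $\pi'$ (when $I=\{0\}$, $\pi$ is the one-element path $(\Gamma)$, and $\Gamma\nred$ iff $\Gamma'\nred$ by $\equiv$-invariance of reducibility). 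If moreover $\pi'$ is fair, I would check $\pi$ is fair: conditions (TF1)/(TF2) at indices $i\geq 1$ are literally those of $\pi'$, and at $i=0$ any $\alpha$-step from $\Gamma$ is also an $\alpha$-step from $\Gamma'\equiv\Gamma$, so fairness of $\pi'$ supplies a witness index $k$; if $k\geq 1$ it transfers verbatim, and if $k=0$ the witnessing step $\Gamma'\redLabel{\alpha}\Gamma'_1$ holds from $\Gamma$ by absorption.

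Then I would invoke the hypothesis on $\Gamma$: if $\Gamma$ is safe, $\pi$ is safe; if $\Gamma$ is deadlock-free, $\pi$ is safe and deadlock-free; if $\Gamma$ is live, then $\Gamma$ is safe and $\pi$ (being a fair path) is live. Finally I would transfer these back to $\pi'$: at indices $i\geq 1$ the two paths coincide; at $i=0$ the safety clause and the liveness/fairness premises are $\equiv$-invariant (using $\Gamma_0=\Gamma\equiv\Gamma'=\Gamma'_0$), and any witnessing later step supplied for $\pi$ either sits at an index $\geq 1$ (already a step of $\pi'$) or at index $0$ (transfers to a step from $\Gamma'\equiv\Gamma$ by absorption); for deadlock-freedom, if $\pi'$ is finite with last environment $\nred$, then either its length is $\geq 2$ and that environment is one of $\pi$, or it is the one-element path $(\Gamma')$ in which case $\Gamma\nred$ and $\fend(\Gamma)$ give $\fend(\Gamma')$ by $\equiv$-invariance of $\fend$. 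Hence $\pi'$ is safe / deadlock-free / live; as $\pi'$ was arbitrary (and for liveness we already noted $\Gamma'$ is safe), $\Gamma'$ is safe / deadlock-free / live.

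I expect the main obstacle to be exactly the index-$0$ bookkeeping --- in particular making the ``$\equiv$-invariance of per-environment conditions'' precise, including the domain-extension clause of $\equiv$ on environments and the behaviour of $\fend$. I would therefore isolate that as an explicit sub-lemma before the main argument, after which the rest is routine.
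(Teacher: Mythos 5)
Your proposal is correct and follows essentially the same route as the paper, whose entire proof is the one-line observation that, by closure of reductions under $\equiv$ (rule \rulename{e-struct}), the paths from $\Gamma$ and from $\Gamma'$ coincide. You simply make that observation precise --- the absorption of $\equiv$ into $\redLabel{\alpha}$, the $\equiv$-invariance of the per-environment clauses and of $\fend$, and the index-$0$ bookkeeping --- which is a faithful (and more careful) elaboration of the paper's argument.
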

\begin{proof}
Since the typing environment reductions are closed under structural congruence, paths $(\Gamma_i)_{i \in I}$  starting with $\Gamma_0=\Gamma$ are exactly those starting with $\Gamma_0=\Gamma'$. 
\end{proof}

\PropRedPresevesProperties*

\begin{proof}
Consider all paths $(\Gamma_i)_{i \in I}$ starting with $\Gamma_0=\Gamma$ and $\Gamma_1=\Gamma'$. If any of the properties of \Cref{def:env-path-properties} holds for $i\geq 0$ (i.e. starting with $\Gamma$) for any of considered paths, then it also must hold for $i \geq 1$ (i.e. paths starting with $\Gamma'$). Hence, if $\Gamma$ is safe/deadlock-free/live and $\Gamma \red \Gamma'$, %
then $\Gamma'$ is safe/deadlock-free/live.
\end{proof}

\begin{lemma}
  \label{lem:subtyping-shape-inp-out}%
  Let $\T'\subt\T$.
  \begin{enumerate}
  \item $\T' \equiv \texternal_{i\in I\cup J}\tin{\pq_i}{\ell_i}{\ST_i}.{\T_i'}$\; iff \; $\T \equiv \texternal_{i\in I}\tin{\pq_i}{\ell_i}{\ST_i}.{\T_i}$, where for all $i\in I$ $\T_i'\subt \T_i$, and $\{\pq_i\}_{i \in I\cup J}=\{\pq_i\}_{i \in I}$.
  \item $\T' \equiv \tinternal_{i\in I}\tout{\pq_i}{\ell_i}{\ST_i}.{\T_i'}$ \; iff \; $\T \equiv \tinternal_{i\in I\cup J}\tout{\pq_i}{\ell_i}{\ST_i}.{\T_i}$, where for all $i\in I$ $\T_i'\subt \T_i$, and $\{\pq_i\}_{i \in I}=\{\pq_i\}_{i \in I\cup J}$.
  \end{enumerate}
\end{lemma}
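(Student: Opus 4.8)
The statement to prove is \Cref{lem:subtyping-shape-inp-out}, an inversion/shape lemma for the coinductive subtyping relation $\subt$ of \Cref{def:subtyping}: a type $\T'$ that is a subtype of $\T$ has an external-choice shape (up to structural congruence) iff $\T$ does, with the branch sets related as in rule \rulename{s-in}, and dually for internal choice. Let me sketch how I would prove it.

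\paragraph{Approach.}
The plan is to proceed by analysis of the coinductive derivation of $\T'\subt\T$, combined with the structural congruence $\equiv$ on types, which by \Cref{app:types} is generated purely by recursion unfolding $\mu\ty.\T\equiv\T\subst{\mu\ty.\T}{\ty}$. The key observation is that every subtyping derivation, read from the conclusion, must eventually reach one of the ``action'' rules \rulename{s-out}, \rulename{s-in}, or \rulename{s-end} after finitely many applications of the unfolding rules \rulename{s-muL}/\rulename{s-muR}; this is because recursion is guarded (\Cref{def:types}), so repeatedly unfolding a $\mu$-type eventually exposes an internal choice, external choice, or $\tend$. I would first establish this as an auxiliary fact: for any $\T_1\subt\T_2$, there is a finite sequence of \rulename{s-muL}/\rulename{s-muR} steps reducing the goal to an instance of \rulename{s-out}, \rulename{s-in}, or \rulename{s-end}, and that along this sequence the types stay $\equiv$-related to $\T_1$ and $\T_2$ respectively (since \rulename{s-muL}/\rulename{s-muR} perform exactly the unfoldings generating $\equiv$).

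\paragraph{Key steps.}
First I would prove both biconditionals by proving the two directions of each. For part~1, left-to-right: assume $\T'\equiv\texternal_{i\in I\cup J}\tin{\pq_i}{\ell_i}{\ST_i}.\T_i'$. Since $\equiv$ on types only unfolds recursion, $\T'$ is either literally this external choice or a $\mu$-type whose guarded unfolding is. Apply the auxiliary fact to $\T'\subt\T$: stripping the \rulename{s-muL}/\rulename{s-muR} steps, we reach some $\T'_0\subt\T_0$ with $\T'_0\equiv\T'$ and $\T_0\equiv\T$ derived by an action rule. Because $\T'_0\equiv\T'$ is an external choice, the only applicable action rule is \rulename{s-in} (it cannot be \rulename{s-out} — internal choices are not $\equiv$-congruent to external choices — nor \rulename{s-end}); hence $\T_0$ is an external choice over the smaller index set $I$ with the stated label/sort agreement and, for each $i\in I$, $\T_i'\subt\T_i$, and the participant-set side condition $\{\pq_i\}_{i\in I\cup J}=\{\pq_i\}_{i\in I}$. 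Transporting back along $\T_0\equiv\T$ gives $\T\equiv\texternal_{i\in I}\tin{\pq_i}{\ell_i}{\ST_i}.\T_i$. The right-to-left direction is symmetric, now fixing the shape of $\T$ and using that the action rule with an external-choice supertype must be \rulename{s-in}, forcing $\T'_0$ to be an external choice over a superset $I\cup J$ of indices. Part~2 is entirely dual, swapping the roles of sub/supertype in \rulename{s-out} (fewer branches in the subtype) and using that an internal-choice (super- or sub-)type forces the \rulename{s-out} rule.

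\paragraph{Main obstacle.}
The technical crux — and the step I expect to be fiddliest — is the interplay between $\equiv$ and the coinductive derivation: making precise that ``$\subt$ respects $\equiv$ up to inserting/removing \rulename{s-muL}/\rulename{s-muR} steps'' and that a subtyping derivation of a $\mu$-type is essentially a derivation of its unfolding. Since the paper's subtyping is coinductive and equi-recursive (noted after rule \rulename{t-sub}), one must argue that from any pair in the relation, unfolding the outermost $\mu$ on either side yields a pair still in the relation, and conversely; guardedness ensures this unfolding process terminates at an action node. Once that bridging lemma is in place, the shape analysis is a routine case distinction over the three action rules, and the participant-set equalities and componentwise subtyping of continuations come for free from the premises of \rulename{s-in}/\rulename{s-out}. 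I would isolate the bridging fact as its own sublemma so that parts 1 and 2 each reduce to a short argument.
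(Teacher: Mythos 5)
Your proposal is correct. Note that the paper itself states \Cref{lem:subtyping-shape-inp-out} without any proof, treating it as a routine inversion lemma, so there is no in-paper argument to compare against; your sketch supplies exactly the standard argument one would expect the authors to have in mind. The two ingredients you isolate are the right ones: (i) since $\subt$ is the greatest relation backward-closed under the rules, every related pair is the conclusion of some rule with premises again in $\subt$, and guardedness of recursion bounds the number of \rulename{s-muL}/\rulename{s-muR} applications before an action rule (\rulename{s-in}, \rulename{s-out}, or \rulename{s-end}) is reached; and (ii) the congruence $\equiv$ on session types is generated solely by $\mu$-unfolding, so it cannot identify an external choice with an internal choice or with $\tend$, which pins down the action rule from the shape of either side and yields both directions of the ``iff'' from the same stripping argument. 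Your suggestion to factor out the bridging sublemma (that \rulename{s-muL}/\rulename{s-muR} steps track $\equiv$ exactly) is sensible and would make the appendix more self-contained than the paper currently is.
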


\LemSubtypingAndReduction*

\begin{proof}
We distinguish two cases for $\alpha$.

\noindent
\underline{Case $\alpha=\recvLabel{\pq}{\pp_k}{\ell_{k}}$:}
In this case we have:
\begin{align}
&\Gamma' \equiv {\pp_k}{:}\,(\tout{\pq}{\ell_k}{\ST_k}{\cdot} \tqueue, \T_{\pp}'),\pq{:}\,(\tqueue_{\pq}, \texternal_{i\in I\cup J}\tin{\pp_i}{\ell_i}{\ST_i}.{\T_i'}), \Gamma_2' \\
& \Gamma'_1 \equiv {\pp_k}{:}\,(\tqueue, \T_\pp'), \pq{:}\,(\tqueue_{\pq}, \T_k'),\Gamma_2'
\end{align}
where $\exists i\!\in\! I\cup J: (\pp_i, \ell_i, \ST_i)=(\pp_k, \ell_k, \ST_k)$. %
Since $\Gamma' \subt \Gamma$ we have 
$\forall\pp\!\in\! \dom{\Gamma}= \dom{\Gamma'}{:}$
  $\Gamma'(\pp)\subt\Gamma(\pp)$. 
Hence, by \Cref{lem:subtyping-shape-inp-out} we obtain
\begin{align}
&\Gamma \equiv {\pp_k}{:}\,(\tout{\pq}{\ell_k}{\ST_k}{\cdot} \tqueue, \T_{\pp}),\pq{:}\,(\tqueue_{\pq}, \texternal_{i\in I}\tin{\pp_i}{\ell_i}{\ST_i}.{\T_i}), \Gamma_2 
\end{align}
where $\T_\pp'\subt \T_\pp$, $\forall i\in I: \T_i'\subt \T_i$, $\{\pp_i\}_{i\in I \cup J}=\{\pp_i\}_{i\in I}$, and $\Gamma_2' \subt \Gamma_2$. %
Since $\Gamma$ is safe, we have $\exists i\!\in\! I: (\pp_i, \ell_i, \ST_i)=(\pp_k, \ell_k, \ST_k)$, %
and hence $\Gamma \redLabel{\;\alpha\;} \Gamma_1$ %
and $\Gamma_1'\subt \Gamma_1$ for 
\begin{align}
& \Gamma_1 \equiv {\pp_k}{:}\,(\tqueue, \T_\pp), \pq{:}\,(\tqueue_{\pq}, \T_k),\Gamma_2
\end{align}

\noindent
\underline{Case $\alpha=\sendLabel{\pp}{\pq_k}{\ell_{k}}$:}
In this case we have: 
\begin{align}
& \Gamma' \equiv \pp:(\tqueue, \tinternal_{i\in I}\tout{\pq_i}{\ell_i}{\ST_i}.{\T_i'}),\Gamma_2'\\
& \Gamma_1' \equiv \pp:(\tqueue{\cdot} \tout{\pq_k}{\ell_k}{\ST_k}\phantom{},%
   \T_k'),\Gamma_2'
\end{align}
where $k\in I$. Since $\Gamma' \subt \Gamma$, by \Cref{lem:subtyping-shape-inp-out} and we have 
\begin{align}
& \Gamma \equiv \pp:(\tqueue, \tinternal_{i\in I\cup J}\tout{\pq_i}{\ell_i}{\ST_i}.{\T_i}),\Gamma_2
\end{align}
where $\forall i\in I: \T_i'\subt \T_i$, $\{\pq_i\}_{i\in I}=\{\pq_i\}_{i\in I \cup J}$, and $\Gamma_2' \subt \Gamma_2$. %
Hence, 
$\Gamma \redLabel{\;\alpha\;} \Gamma_1$ %
and $\Gamma_1'\subt \Gamma_1$ for 
\begin{align}
& \Gamma_1 \equiv \pp:(\tqueue{\cdot} \tout{\pq_k}{\ell_k}{\ST_k}\phantom{},%
   \T_k),\Gamma_2
\end{align}
\end{proof}

\begin{lemma}
\label{lem:supertyping-and-reduction}
Let $\Gamma'\subt \Gamma$. 
\begin{enumerate}
\item If $\Gamma \redRecv{\pq}{\pp_k}{\ell_{k}}  \Gamma_1$, then $\Gamma'\redRecv{\pq}{\pp_k}{\ell_{k}} \Gamma'_1$ and $\Gamma'_1 \subt \Gamma_1$.
\item If $\Gamma\redSend{\pq}{\pp_k}{\ell_{k}} \Gamma_k$, for $k\in I \cup J$, then $\Gamma'\redSend{\pq}{\pp_i}{\ell_{i}} \Gamma'_i$  and $\Gamma'_i \subt \Gamma_i$, for $i\in I$, where $\{\pp_k\}_{k\in I\cup J}=\{\pp_i\}_{i\in I}$.
\end{enumerate}
\end{lemma}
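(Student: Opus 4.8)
\textbf{Proof plan for \Cref{lem:supertyping-and-reduction}.}
The plan is to proceed by case analysis on which reduction rule derives the transition of $\Gamma$, exactly mirroring the structure of the proof of \Cref{lem:subtyping-and-reduction}, but now exploiting that a \emph{super}type has \emph{more} output branches and \emph{fewer} input branches than its subtype. The main tool throughout is the shape characterisation of \Cref{lem:subtyping-shape-inp-out}: whenever $\Gamma'\subt\Gamma$, knowing the top-level shape of some $\Gamma(\pp)$ forces a matching shape for $\Gamma'(\pp)$, and conversely. Since typing environment reductions are closed under structural congruence (rule \rulename{e-struct}), I may freely work up to $\equiv$ and assume the relevant entries are displayed explicitly.

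For part~(1), suppose $\Gamma\redRecv{\pq}{\pp_k}{\ell_k}\Gamma_1$ is obtained by \rulename{e-rcv}. Then $\Gamma\equiv \pp_k{:}(\tout{\pq}{\ell_k}{\ST_k}\cdot\tqueue,\T_\pp),\,\pq{:}(\tqueue_\pq,\texternal_{i\in I}\tin{\pp_i}{\ell_i}{\ST_i}.\T_i),\,\Gamma_2$ with $(\pp_k,\ell_k,\ST_k)=(\pp_j,\ell_j,\ST_j)$ for some $j\in I$, and $\Gamma_1\equiv \pp_k{:}(\tqueue,\T_\pp),\,\pq{:}(\tqueue_\pq,\T_j),\,\Gamma_2$. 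From $\Gamma'\subt\Gamma$ and the definition of subtyping on pairs ($\tqueue_1=\tqueue_2$), the queue of $\pp_k$ in $\Gamma'$ is \emph{identical}, so it also starts with $\tout{\pq}{\ell_k}{\ST_k}$; and by \Cref{lem:subtyping-shape-inp-out}(1), $\pq$'s type in $\Gamma'$ is $\texternal_{i\in I\cup J'}\tin{\pp_i}{\ell_i}{\ST_i}.\T_i'$ with $I\cup J'\supseteq I$ and $\T_i'\subt\T_i$ for $i\in I$. Hence the index $j$ still occurs among the branches of $\pq$'s external choice in $\Gamma'$, so \rulename{e-rcv} fires with the same label, giving $\Gamma'\redRecv{\pq}{\pp_k}{\ell_k}\Gamma'_1$ with $\Gamma'_1\equiv \pp_k{:}(\tqueue,\T_\pp'),\,\pq{:}(\tqueue_\pq,\T_j'),\,\Gamma'_2$; and $\Gamma'_1\subt\Gamma_1$ follows from $\T_\pp'\subt\T_\pp$, $\T_j'\subt\T_j$, and $\Gamma'_2\subt\Gamma_2$ (equal queue parts are preserved since subtyping keeps queues fixed). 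Note that, unlike in \Cref{lem:subtyping-and-reduction}, no safety hypothesis on $\Gamma$ is needed here, because the supertype's input choice has \emph{more} branches, so a branch present in $\Gamma$ is automatically present in $\Gamma'$.

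For part~(2), suppose $\Gamma\redSend{\pq}{\pp_k}{\ell_k}\Gamma_k$ for some $k\in I\cup J$, derived by \rulename{e-send}: $\Gamma\equiv \pq{:}(\tqueue,\tinternal_{i\in I\cup J}\tout{\pp_i}{\ell_i}{\ST_i}.\T_i),\,\Gamma_2$ and $\Gamma_k\equiv \pq{:}(\tqueue\cdot\tout{\pp_k}{\ell_k}{\ST_k},\T_k),\,\Gamma_2$. By \Cref{lem:subtyping-shape-inp-out}(2), $\Gamma'$ has $\pq{:}(\tqueue,\tinternal_{i\in I}\tout{\pp_i}{\ell_i}{\ST_i}.\T_i'),\,\Gamma'_2$ with $\T_i'\subt\T_i$ for $i\in I$, $\Gamma'_2\subt\Gamma_2$, and $\{\pp_i\}_{i\in I}=\{\pp_i\}_{i\in I\cup J}$ --- this last equality is exactly the participant-set side condition that gives the claimed relation between the index sets. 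So for each $i\in I$, \rulename{e-send} fires in $\Gamma'$ on branch $i$, yielding $\Gamma'\redSend{\pq}{\pp_i}{\ell_i}\Gamma'_i$ with $\Gamma'_i\equiv \pq{:}(\tqueue\cdot\tout{\pp_i}{\ell_i}{\ST_i},\T_i'),\,\Gamma'_2\subt\Gamma_i$. The statement quantifies the conclusion over $i\in I$ (the subtype's own, smaller branch set), which is the natural direction: the subtype can still perform every output the supertype can, up to renaming of the branch index, because the participant sets coincide even though the label sets may differ.

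The only mildly delicate point --- the ``main obstacle'' --- is bookkeeping around structural congruence and the participant-set side conditions: one must be careful that when $\Gamma$'s displayed entry is only congruent (not syntactically equal) to the canonical form, the same canonicalisation can be applied to $\Gamma'$, which is justified because $\equiv$ acts only on queue-message reordering and recursion unfolding, both of which commute with $\subt$ by \Cref{lem:subtyping-shape-inp-out}. Everything else is a direct unfolding of the definitions.
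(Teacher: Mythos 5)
Your proposal is correct and takes essentially the same route as the paper, which proves this lemma simply by noting it is analogous to the preceding simulation lemma for the other direction of subtyping; your write-up is a faithful expansion of that analogy via the shape lemma and rule inversion. Your observation that no safety hypothesis is needed here (because the subtype has more input branches and fewer output branches, so every transition of the supertype's redex is matched) is exactly the point that makes the two directions asymmetric, and it is consistent with the statement as given.
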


\begin{proof}
The proof is analogous to the proof of \Cref{lem:subtyping-and-reduction}.
\end{proof}

\begin{lemma}
\label{lem:subtyping-preserves-initial-safe}
  If $\Gamma$ is safe and $\Gamma'\subt \Gamma$, then $\Gamma'$ satisfies clause \ref{item:type:path:safe} of \Cref{def:env-path-properties}.
\end{lemma}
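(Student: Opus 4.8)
The plan is to prove \Cref{lem:subtyping-preserves-initial-safe} directly by unfolding the definition of clause~\ref{item:type:path:safe} for $\Gamma'$ and reducing it to the corresponding clause for $\Gamma$. Concretely, assume the antecedent of \ref{item:type:path:safe} for $\Gamma'$: there is a participant $\pp$ with $\Gamma'(\pp) \equiv (\tqueue_\pp, \tinternal_{j\in J}\tin{\pq_j}{\ell_j}{\ST_j}.\T_j')$ (an external choice, since the clause is about receptions matching queued messages --- note the internal/external symbol is used loosely here, following the paper), and a participant $\pq_k \in \{\pq_j\}_{j\in J}$ with $\Gamma'(\pq_k) \equiv (\tout{\pp}{\ell_k}{\ST_k} \cdot \tqueue_\pq, \T_{\pq}')$. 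I must show $\exists j \in J: (\pq_j,\ell_j,\ST_j) = (\pq_k, \ell_k, \ST_k)$.

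The key step is to transport both hypotheses along the subtyping $\Gamma' \subt \Gamma$. Since $\dom\Gamma = \dom{\Gamma'}$ and $\Gamma'(\pp) \subt \Gamma(\pp)$ for every $\pp$, I apply \Cref{lem:subtyping-shape-inp-out}(1) to $\pp$: from $\Gamma'(\pp)$ being (the pair of $\tqueue_\pp$ with) an external choice over index set $J$, I get that $\Gamma(\pp) \equiv (\tqueue_\pp, \texternal_{j\in J'}\tin{\pq_j}{\ell_j}{\ST_j}.\T_j)$ for some $J' \subseteq J$ with $\{\pq_j\}_{j\in J} = \{\pq_j\}_{j\in J'}$ --- i.e., the supertype has possibly fewer labels but the \emph{same set of participants}, and the queue component is unchanged (since pair-subtyping requires equal queue types). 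For $\pq_k$, the queue component of $\Gamma'(\pq_k)$ is $\tout{\pp}{\ell_k}{\ST_k}\cdot\tqueue_\pq$, and again by the definition of pair-subtyping the queue component of $\Gamma(\pq_k)$ is identical, so $\Gamma(\pq_k) \equiv (\tout{\pp}{\ell_k}{\ST_k}\cdot\tqueue_\pq, \T_\pq)$ for some $\T_\pq$. Crucially $\pq_k \in \{\pq_j\}_{j\in J} = \{\pq_j\}_{j\in J'}$, so the antecedent of clause~\ref{item:type:path:safe} is satisfied for $\Gamma$ at the same pair $(\pp, \pq_k)$.

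Since $\Gamma$ is safe, it satisfies \ref{item:type:path:safe}, so $\exists j \in J': (\pq_j,\ell_j,\ST_j) = (\pq_k,\ell_k,\ST_k)$. Because $J' \subseteq J$, this $j$ also witnesses the conclusion for $\Gamma'$. Care is needed on one point: the index set $J'$ of $\Gamma$'s external choice indexes a \emph{subset} of the labels of $\Gamma'$'s choice but with matching participants; I should check that the specific label/sort triple found in $\Gamma$ really appears with the same index meaning in $\Gamma'$. This is immediate from \Cref{lem:subtyping-shape-inp-out}(1), which (in the paper's formulation, with $I$ the supertype index and $I\cup J$ the subtype index) identifies the branches on the common index set; so the triple $(\pq_j,\ell_j,\ST_j)$ for $j \in J'$ is literally one of the subtype's branches. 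Handling structural congruence $\equiv$ transparently throughout (reductions and the clause are stated up to $\equiv$) is a minor bookkeeping matter.

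The main obstacle, such as it is, is not deep: it is making sure the bookkeeping of index sets in \Cref{lem:subtyping-shape-inp-out} lines up with the quantifier in clause~\ref{item:type:path:safe}, and noting that subtyping on queue/session pairs forces \emph{equality} of queue types (so the "head message for $\pp$" hypothesis transfers verbatim to $\Gamma$), while only the label set of the receiving participant's external choice can shrink --- and shrinking the label set is exactly what could break safety, which is why the hypothesis that $\Gamma$ is safe is used at the very end rather than earlier. Everything else is a direct substitution.
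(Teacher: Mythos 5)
Your proof is correct and follows essentially the same route as the paper's: transport the antecedent of clause~\ref{item:type:path:safe} from $\Gamma'$ to $\Gamma$ via \Cref{lem:subtyping-shape-inp-out} (same participant set, possibly fewer branches, identical queue components), invoke safety of $\Gamma$ to obtain a witness index in the smaller branch set, and observe that this witness also serves for $\Gamma'$. The only difference is notational (your $J' \subseteq J$ versus the paper's $J$ versus $J \cup K$), and you correctly handle the paper's loose use of the choice symbols in the statement of the clause.
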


\begin{proof}
Let $\Gamma'(\pp)\equiv (\tqueue_\pp, 
\tinternal_{j\in J\cup K}\tin{\pq_j}{\ell_j}{\ST_j}.{\T_j'})$ and $\Gamma'(\pq_k)\equiv (\tout{\pp}{\ell_k}{\ST_k}{\cdot} \tqueue_\pq, 
\T_{\pq}')$, %
   with $\pq_k\in\{\pq_j\}_{j\in J\cup K}$. %
We need to prove that then  
   $\exists j\in J\cup K: (\pq_j, \ell_j, \ST_j)=(\pq_k, \ell_k, \ST_k)$. %
If $\Gamma'(\pp)=(\tqueue_\pp, 
\T_\pp')$, then we have 
\begin{align}
\label{eq:sub-preserves-initial-safe-1}
\T_\pp' \equiv \tinternal_{j\in J\cup K}\tin{\pq_j}{\ell_j}{\ST_j}.{\T_j'}
\end{align}
Let $\Gamma(\pp)=(\tqueue_\pp, \T_\pp)$ and $\Gamma(\pq_k)=(\tout{\pp}{\ell_k}{\ST_k}{\cdot} \tqueue_\pq, \T_\pq)$. Since $\Gamma' \subt \Gamma$, we have 
$\T_\pp'\subt\T_\pp$, and $\T_\pq'\subt\T_\pq$. 
By \Cref{lem:subtyping-shape-inp-out} and \eqref{eq:sub-preserves-initial-safe-1}, we have 
\begin{align}
\label{eq:sub-preserves-initial-safe-2}
\T_\pp \equiv \tinternal_{j\in J}\tin{\pq_j}{\ell_j}{\ST_j}.{\T_j}  \quad \forall j\in J \quad \T_j'\subt\T_j \quad \text{and} \quad \{\pq_j\}_{j \in J\cup K}=\{\pq_j\}_{j \in J}
\end{align}
Now we have 
\begin{align}
\label{eq:sub-preserves-initial-safe-3}
\Gamma(\pp)\equiv (\tqueue_\pp, \tinternal_{j\in J}\tin{\pq_j}{\ell_j}{\ST_j}.{\T_j}) \;\; \text{and} \;\; \Gamma(\pq_k)\equiv (\tout{\pp}{\ell_k}{\ST_k}{\cdot} \tqueue_\pq, \T_{\pq})
\end{align}
 with $\pq_k\in\{\pq_j\}_{j \in J\cup K}=\{\pq_j\}_{j \in J}$. %
 Since $\Gamma$ is safe, we have 
 $\exists j\in J\,(\subseteq J\cup K): (\pq_j, \ell_j, \ST_j)=(\pq_k, \ell_k, \ST_k)$. %
\end{proof}

\lemSubtypingPreservesSafety*

\begin{proof} 
Let us first assume $\Gamma$ is safe, $\Gamma' \subt \Gamma$ and that $\Gamma'$ is safe. %
Now let $(\Gamma'_i)_{i\in I}$ be a  path starting with $\Gamma'$, i.e., $\Gamma'=\Gamma_0'$ and 
\begin{align}
\label{eq:sub:safe:1}
 \Gamma_0'\redLabel{\;\alpha_1\;}\Gamma_1'\redLabel{\;\alpha_2\;}\ldots
 \qquad \text{or}\qquad
 \Gamma_0'\redLabel{\;\alpha_1\;}\Gamma_1'\redLabel{\;\alpha_2\;}\ldots 
\redLabel{\;\alpha_{n}\;}\Gamma'_n
\end{align}
Let $\Gamma=\Gamma_0$. %
Since $\Gamma_0'\subt \Gamma_0$, by application of \Cref{lem:subtyping-and-reduction} we have %
$\exists \Gamma_1$ such that 
$\Gamma_1'\subt \Gamma_1$ and 
$\Gamma_0 \red \Gamma_1$. %
Since $\Gamma=\Gamma_0$ is safe, by \Cref{lem:move-preserves-safety}, $\Gamma_1$ is also safe. %
By consecutive application of the above arguments, we obtain that for all $i\in I$ there is 
$\Gamma_i$ such that 
$\Gamma_i'\subt \Gamma_i$, 
$\Gamma_i$ is safe, and 
\begin{align}
\label{eq:sub:safe:2}
 \Gamma_0\redLabel{\;\alpha_1\;}\Gamma_1\redLabel{\;\alpha_2\;}\ldots
 \qquad \text{or}\qquad
 \Gamma_0\redLabel{\;\alpha_1\;}\Gamma_1\redLabel{\;\alpha_2\;}\ldots 
\redLabel{\;\alpha_{n}\;}\Gamma_n
\end{align}
Now we can prove the main statement. 

\begin{enumerate}
\item \underline{Safety:}
Since $\Gamma_i$ is safe and $\Gamma'_i\subt \Gamma_i$, by \Cref{lem:subtyping-preserves-initial-safe}, %
$\Gamma_i'$ satisfies the clause \ref{item:type:path:safe} of \Cref{def:env-path-properties}, 
for all $i\in I$. %
Hence, by \Cref{def:env-path-properties}, path $(\Gamma'_i)_{i\in I}$ is safe. %
This implies $\Gamma'$ is safe.
\item \underline{Deadlock-freedom:} 
Assume $\Gamma$ is deadlock-free (which subsumes safe). 
We need to prove that if $I=\{0,1,\ldots,n\}$ and $\Gamma_n'\nred$, %
then $\fend(\Gamma_n')$. 
Assume $I=\{0,1,\ldots,n\}$ and $\Gamma_n'\nred$. 
Then, since $\Gamma_n'\subt\Gamma_n$, 
by \Cref{lem:supertyping-and-reduction}, we obtain $\Gamma_n\nred$.  
Since $\Gamma$ is deadlock-free, this implies $\fend(\Gamma_n)$. 
From $\Gamma_n'\subt\Gamma_n$ and $\fend(\Gamma_n)$ we can show $\fend(\Gamma_n')$. 
Hence, by \Cref{def:env-path-properties}, $\Gamma'$ is deadlock-free.
\item \underline{Liveness:} 
Assume $\Gamma$ is live (which subsumes safe). 
Assume $(\Gamma'_i)_{i\in I}$ from \eqref{eq:sub:safe:1} is fair. Let us prove that it is also live.  

Now we prove $(\Gamma_i)_{i\in I}$ from \eqref{eq:sub:safe:2} is fair. 
Assume 
$
\Gamma_i \redSend{\pp}{\pq}{\ell}$ or  $\Gamma_i \redRecv{\pp}{\pq}{\ell}
$. %
Since $\Gamma_i'\subt \Gamma_i$, by \Cref{lem:supertyping-and-reduction},
    we have 
$
\Gamma'_i \redSend{\pp}{\pq_1}{\ell_1}$, for some $\pq_1$ and $\ell_1$, or $\Gamma'_i \redRecv{\pp}{\pq}{\ell}
$. 
From $\Gamma'$ being fair we conclude 
that $\exists k, \pq', \ell'$ such that $I\ni k\geq  i$ %
and $\Gamma'_k \redSend{\pp}{\pq'}{\ell'} \Gamma'_{k+1}$ or $\Gamma'_k \redRecv{\pp}{\pq'}{\ell'} \Gamma'_{k+1}$. 
Therefore, 
$\alpha_{k+1}= \sendLabel{\pp}{\pq'}{\ell'}$ or 
$\alpha_{k+1}= \recvLabel{\pp}{\pq'}{\ell'}$ in both  
\eqref{eq:sub:safe:1} and \eqref{eq:sub:safe:2}, i.e., we obtain 
$\Gamma_k \redSend{\pp}{\pq'}{\ell'} \Gamma_{k+1}$ or $\Gamma_k \redRecv{\pp}{\pq'}{\ell'} \Gamma_{k+1}$.  
Hence, $(\Gamma_i)_{i\in I}$ from \eqref{eq:sub:safe:2} is fair.

Since $(\Gamma_i)_{i\in I}$ is fair and $\Gamma$ is live, $(\Gamma_i)_{i\in I}$ is also live. 

We now show that $(\Gamma'_i)_{i\in I}$ is live, which completes the proof. We have two cases.
\begin{itemize}
\item \ref{item:liveness:send}: $\Gamma'_i(\pp) \equiv \left(\tout\pq{\ell}\ST \cdot \tqueue \,,\, \T'\right)$. 
Since $\Gamma'_i\subt \Gamma_i$, we have 
$\Gamma_i(\pp) \equiv \left(\tout\pq{\ell}\ST \cdot \tqueue \,,\, \T\right)$, where $\T'\subt \T$. 
Since $\Gamma$ is live we have there 
    $\exists k$ such that $I\ni k\geq  i$ %
    \;and\; %
    $\Gamma_k \redRecv{\pq}{\pp}{\ell} \Gamma_{k+1}$. 
    Hence, $\alpha_{k+1}= \sendLabel{\pq}{\pp}{\ell}$ in both  
\eqref{eq:sub:safe:2} and \eqref{eq:sub:safe:1}, i.e., we obtain  $\Gamma'_k \redRecv{\pq}{\pp}{\ell} \Gamma'_{k+1}$. 
Thus, path $(\Gamma'_i)_{i\in I}$ satisfies clause \ref{item:liveness:send} of definition of live path.
\item \ref{item:liveness:recv}: $\Gamma'_i(\pp) \equiv \left(\tqueue_\pp \,,\, \texternal_{j \in J\cup L}{\tin{\pq_j}{\ell_j}{\ST_j}.{\T'_j}}\right)$. 
Since $\Gamma'_i\subt \Gamma_i$, by \Cref{lem:subtyping-shape-inp-out},  
we have  
$\Gamma_i(\pp) \equiv \left(\tqueue_\pp \,,\, \texternal_{j \in J}{\tin{\pq_j}{\ell_j}{\ST_j}.{\T_j}}\right)$, 
where $\T'_j \subt \T_j$, for $j\in J$, and 
$\{\pq_j\}_{j\in J \cup L}=\{\pq_j\}_{j\in J}$. 
From $\Gamma$ being live, we obtain there $\exists k, \pq, \ell$ such that $I\ni k\geq  i$, 
    $(\pq,\ell)\in \{(\pq_j,\ell_j)\}_{j\in J}$,%
    \;and\; %
    $\Gamma_k \redRecv{\pp}{\pq}{\ell} \Gamma_{k+1}$.
 Hence, $\alpha_{k+1}= \recvLabel{\pp}{\pq}{\ell}$ in both  
\eqref{eq:sub:safe:2} and \eqref{eq:sub:safe:1}, i.e., we obtain  there $\exists k, \pq, \ell$ such that $I\ni k\geq  i$, 
    $(\pq,\ell)\in \{(\pq_j,\ell_j)\}_{j\in J\cup L}$,%
    \;and\; %
    $\Gamma'_k \redRecv{\pp}{\pq}{\ell} \Gamma'_{k+1}$. 
Thus, path $(\Gamma'_i)_{i\in I}$ satisfies clause \ref{item:liveness:recv} of definition of live path.
\end{itemize}
\end{enumerate}
\end{proof}

\section{Proofs of Section \ref{sec:type-system}}
\label{app:type-system}

\begin{lemma}[Typing congruence]
\label{lemma:type_congruence}
\begin{enumerate}
\item
  If $\Theta \vdash \PP: \T$ and $\PP\equivv\PQ$, then $\Theta\vdash \PQ:\T.$ 
\item 
  If $\vdash \h_1:\tqueue_1$ and $\h_1\equiv \h_2$, then there is $\tqueue_2$ such that $\tqueue_1\equiv\tqueue_2$ and $ \vdash\h_2:\tqueue_2.$
\item
  If $\Gamma \vdash \N$ and $\N\equivv \N'$, then there is $\Gamma'$ such that $\Gamma \equiv\Gamma'$ and $\Gamma' \vdash \N'$.
\end{enumerate}
\end{lemma}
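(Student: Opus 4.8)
I would prove the three items in the order stated, by induction on the derivation of the relevant (pre)congruence judgement; the inductive step of item~3 invokes items~1 and~2. Throughout I would use inversion (``generation'') lemmas for the typing rules stated \emph{modulo} the subsumption rule \rulename{t-sub}, together with the standard substitution lemma for recursion variables --- if $\Theta, X{:}\T \vdash \PP:\T$ and $\Theta \vdash \PQ:\T$ then $\Theta \vdash \PP\subst{\PQ}{X}:\T$ --- which is needed in any case for Subject Reduction (\Cref{thm:SR}).

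For item~1 the induction is on $\PP \equivv \PQ$, regarded as the least precongruence containing the axioms of \Cref{app:processes}. The only non-routine case is the unfolding axiom $\mu X.\PP \equivv \PP\subst{\mu X.\PP}{X}$: from $\Theta \vdash \mu X.\PP:\T$, inversion of \rulename{t-rec} up to \rulename{t-sub} yields $\Theta, X{:}\T' \vdash \PP:\T'$ with $\T' \subt \T$; since $\mu X.\PP$ then also has type $\T'$ by \rulename{t-rec}, the substitution lemma gives $\Theta \vdash \PP\subst{\mu X.\PP}{X}:\T'$, and \rulename{t-sub} recovers $\T$. Reflexivity is trivial, transitivity composes two instances of the induction hypothesis, and each congruence-closure case (under a choice branch, a conditional branch, or $\mu X.{-}$) follows by inversion on the corresponding typing rule, the induction hypothesis on the affected subterm, and re-applying that rule, the binder cases relying only on standard $\alpha$-renaming conventions.

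For item~2 the induction is on $\h_1 \equiv \h_2$. In the swap axiom $\h \cdot \msg{\pq_1}{\ell_1}{\val_1} \cdot \msg{\pq_2}{\ell_2}{\val_2} \cdot \h' \equiv \h \cdot \msg{\pq_2}{\ell_2}{\val_2} \cdot \msg{\pq_1}{\ell_1}{\val_1} \cdot \h'$ (with $\pq_1 \neq \pq_2$), inverting \rulename{t-queue}, \rulename{t-elm}, \rulename{t-nul} shows $\tqueue_1 \equiv \tqueue \cdot \tout{\pq_1}{\ell_1}{\ST_1} \cdot \tout{\pq_2}{\ell_2}{\ST_2} \cdot \tqueue'$; the right-hand queue is typed by $\tqueue_2 = \tqueue \cdot \tout{\pq_2}{\ell_2}{\ST_2} \cdot \tout{\pq_1}{\ell_1}{\ST_1} \cdot \tqueue'$, and $\tqueue_1 \equiv \tqueue_2$ by the reordering rule for queue types, whose side condition $\pq_1 \neq \pq_2$ is exactly the one we have. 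The empty-queue unit and associativity axioms use the corresponding queue-type congruences; symmetry, transitivity, and closure under $\cdot$ are routine. For item~3 the induction is on $\N \equivv \N'$: parallel commutativity and associativity only permute or re-bracket the index set of \rulename{t-sess}, so $\Gamma' = \Gamma$; for $\pa\pp\inact \pc \pa\pp\emptyqueue \pc \N \equiv \N$, inversion (noting $\tend \subt \tend$ is the only subtyping available to $\inact$, and $\emptyqueue$ must receive type $\temptyqueue$) gives $\Gamma = \Gamma'', \pp{:}(\temptyqueue,\tend)$ with $\Gamma'' \vdash \N$, and $\Gamma'' \equiv \Gamma$ holds precisely because the typing-environment congruence of \Cref{app:types} absorbs spurious $(\temptyqueue,\tend)$ entries (the reverse direction re-adds one, using that well-formedness forces $\pp \notin \dom{\Gamma}$); for the lifting axiom, inversion of \rulename{t-sess} isolates $\pp{:}(\tqueue_1,\T)$ with $\vdash\PP:\T$, $\vdash\h_1:\tqueue_1$, $\Gamma'' \vdash \N$, and items~1 and~2 retype the process and queue so that the new entry $\pp{:}(\tqueue_2,\T)$ is congruent to the old since $\tqueue_1 \equiv \tqueue_2$. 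Symmetry, transitivity, and closure under $\pc$ are routine.

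The main obstacle is the recursion-unfolding case of item~1: it is where a substitution lemma for process variables must already be available, and where the interaction with \rulename{t-sub} --- inverting up to subtyping and then recovering the original type --- must be handled with care. A secondary point, already visible in the statement, is that items~2 and~3 deliver a type / environment only \emph{up to} $\equiv$ rather than syntactically, since the garbage-collection and lifting axioms genuinely change the environment; hence these lemmas must be used together with the fact that the environment properties are closed under $\equiv$ (\Cref{lem:congruence-preserves-safety}) when they are later invoked in Subject Reduction.
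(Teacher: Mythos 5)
Your proposal is correct and takes essentially the same approach as the paper, whose own proof consists only of the remark that the lemma holds ``by case analysis'' on the (pre)congruence rules; your write-up simply makes that case analysis explicit, correctly isolating the recursion-unfolding case (with its reliance on a process-variable substitution lemma and inversion modulo \rulename{t-sub}) as the only non-routine step.
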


\begin{proof}
  The proof is by case analysis.
\end{proof}

\begin{lemma}[Substitution]
  \label{lem:substitution}%
  If $\Theta, x:\S \vdash \PP: \T$ %
  and\; $\Theta\vdash \val: \S$, %
  \;then\; %
  $\Theta \vdash \PP\sub{\val}{\x}: \T$. %
\end{lemma}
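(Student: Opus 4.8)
The plan is to prove the Substitution Lemma by structural induction on the typing derivation of $\Theta, x:\S \vdash \PP:\T$, following the usual pattern for substitution lemmas in type systems. The statement to establish is that if $\Theta, x:\S \vdash \PP:\T$ and $\Theta \vdash \val:\S$, then $\Theta \vdash \PP\sub{\val}{\x}:\T$. Throughout, I would rely on the fact that $\PP\sub{\val}{\x}$ replaces free occurrences of $\x$ by $\val$, and that $\val$ is a value (so its typing is stable under any environment containing $\Theta$).

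First I would handle the base cases. For \rulename{t-$\inact$}, we have $\PP = \inact$ with $\T = \tend$, and $\inact\sub{\val}{\x} = \inact$, so the result is immediate. For \rulename{t-var} (the process variable rule), $\PP = X$ with $X:\T \in \Theta, x:\S$; since $X$ is a process variable and $\x$ is an expression variable, $X\sub{\val}{\x} = X$ and $X:\T \in \Theta$, so we reapply \rulename{t-var}. The interesting base-ish case is when $\PP$ appears inside a value position: strictly, values are typed by a separate judgement, but the relevant point arises in \rulename{t-out} and \rulename{t-cond}, where we must show that if $\Theta, x:\S \vdash \val':\S'$ then $\Theta \vdash \val'\sub{\val}{\x}:\S'$. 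This sub-claim follows by a trivial case analysis on the value typing rules \rulename{t-nat}, \rulename{t-Bool}, \rulename{t-var}: constants are unaffected by substitution and typed in any $\Theta$; a variable $\y \neq \x$ is unaffected and still in $\Theta$; and $\x$ itself is replaced by $\val$, which has sort $\S$ by hypothesis.

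Next the inductive cases. For \rulename{t-out} with $\PP = \sum_{i\in I}\procout{\pp_i}{\ell_i}{\val_i}{\PP_i}$ and $\T = \tinternal_{i\in I}\tout{\pp_i}{\ell_i}{\ST_i}.\T_i$: for each $i$ we have $\Theta, x:\S \vdash \val_i:\ST_i$ and $\Theta, x:\S \vdash \PP_i:\T_i$; apply the value sub-claim to the former and the induction hypothesis to the latter, then observe $\PP\sub{\val}{\x} = \sum_{i\in I}\procout{\pp_i}{\ell_i}{\val_i\sub{\val}{\x}}{\PP_i\sub{\val}{\x}}$, and reapply \rulename{t-out}. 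For \rulename{t-in} with $\PP = \sum_{i\in I}\procin{\pp_i}{\ell_i(\x_i)}{\PP_i}$: here each $\PP_i$ is typed under $\Theta, x:\S, x_i:\ST_i$; by the Barendregt-style convention on bound names we may assume each $\x_i \neq \x$ and $\x_i$ not free in $\val$, so $\Theta, x:\S, x_i:\ST_i = (\Theta, x_i:\ST_i), x:\S$ and $\Theta\vdash\val:\S$ gives $\Theta, x_i:\ST_i \vdash \val:\S$ (weakening, which holds silently since $\val$ is a closed-under-$\Theta$ value); then the induction hypothesis yields $\Theta, x_i:\ST_i \vdash \PP_i\sub{\val}{\x}:\T_i$, and since substitution commutes with the input prefix ($\procin{\pp_i}{\ell_i(\x_i)}{\PP_i})\sub{\val}{\x} = \procin{\pp_i}{\ell_i(\x_i)}{(\PP_i\sub{\val}{\x})}$ as $\x_i \neq \x$), we reapply \rulename{t-in}. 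The cases \rulename{t-cond} (use the value sub-claim on the guard and the IH on both branches, with the same type $\T$), \rulename{t-rec} (analogous to \rulename{t-in}: $X \neq$ any expression variable, so the IH applies under $\Theta, X:\T$), and \rulename{t-sub} (apply the IH to the premise $\Theta, x:\S \vdash \PP:\T$ and reapply \rulename{t-sub} with $\T \subt \T'$) are all straightforward.

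The main obstacle I expect is purely bookkeeping: handling the interaction between substitution and binders in the \rulename{t-in} and \rulename{t-rec} cases, i.e.\ making the side conditions on bound-variable freshness explicit (so that $\x_i \neq \x$ and $\x_i \notin \mathit{fv}(\val)$) and silently invoking a weakening property of the value and typing judgements that is not separately stated in the excerpt. Since the paper already assumes a Barendregt-style convention for recursion and the $\pi$-calculus syntax uses standard binders, I would simply note these alpha-renaming conventions at the start of the proof and proceed; there is no genuine mathematical difficulty, only the need to be careful that the environment manipulations (exchange, weakening) are legitimate, which they are because typing environments $\Theta$ are treated as unordered assignments and value typing does not depend on process-variable bindings.
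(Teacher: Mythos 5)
Your proof is correct and follows essentially the same approach as the paper, which simply states ``by structural induction on $\PP$'' without spelling out the cases. Your version inducts on the typing derivation rather than on $\PP$ directly (which is the cleaner choice given the non-syntax-directed rule \rulename{t-sub}), and the binder/weakening bookkeeping you flag is handled exactly as you describe.
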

\begin{proof}
By structural induction on $\PP$.
\end{proof}

\ThmSubjectReduction*
\begin{proof}
  Assume:
  \begin{align}
    \label{lem:sr:hyp:typing}%
    &\textstyle%
    \Gamma \vdash \N%
    \quad\text{where}\quad%
    \N = \prod_{j\in J} (\pa{\pp_j}{\PP_j} \pc \pa{\pp_j}{\h_j})
    \;\;\text{(for some $J$)}
    &\text{(by hypothesis)}%
    \\%
    \label{lem:sr:hyp:gamma-safe}%
    &\text{$\Gamma$ is safe/deadlock-free/live}%
    &\text{(by hypothesis)}%
  \end{align}
 
 \noindent
    From \eqref{lem:sr:hyp:typing}, we know that there is a typing derivation
    for $\N$ starting with rule \rulename{t-sess}:
    \begin{align}
      \label{eq:sr:typing-deriv}
      \inferruleR[\rulename{t-sess}]{\Gamma=\{\pp_j{:}\,(\tqueue_j, \T_j) \;|\; j\in J\} \qquad %
      \forall j\in J \qquad \vdash \PP_j:\T_j \qquad  \vdash \h_j:\tqueue_j }{\Gamma \vdash \N}
    \end{align}

 \noindent
  We proceed by induction on the derivation of $\N \red \N'.$ 

  Base cases:
  \\
\underline{\rulename{r-send}:} We have:
    \begin{align}
      \label{eqv:proof0-0}
      &\N = \pa{\pp_n}\sum_{i\in I}\procout{\pq_i}{\ell_i}{\val_i}{\PP_i}\pc \pa{\pp_n}\h \pc  \N_1
      \\%
      \label{eqv:proof0}%
      &\N'=\pa{\pp_n}{\PP_k}\pc\pa{\pp_n}\h\cdot ({\pq_k},{\ell_k}(\val_k))\pc \N_1%
      \\%
      &\N_1 = \prod_{j\in J\setminus\{n\}} (\pa{\pp_j}{\PP_j} \pc \pa{\pp_j}{\h_j})
    \end{align}
  
   \noindent
  By inversion on the typing rules we have:
  \begin{align}
    &  \vdash \sum_{i\in I}\procout{\pq_i}{\ell_i}{\val_i}{\PP_i}:\T_{\pp_n} \qquad \text{where} \quad
    \tinternal_{i\in I}\tout{\pq_i}{\ell_i}{\S_i}.{\T_i} \subt \T_{\pp_n} \label{eqv:proof1}\\
    &  \forall i\in I \quad \vdash \val_i: \S_i \label{eqv:proof1-1}\\
    & \forall i\in I \quad  \vdash  \PP_i: \T_i \label{eqv:proof1-2}\\
    &  \vdash \h:\tqueue  \label{eqv:proof2} \\
    &  \forall j\in J\setminus\{n\} \;\;\; \vdash \PP_j:\T_j \label{eqv:proof3}\\
    &  \forall j\in J\setminus\{n\} \;\;\; \vdash \h_j:\tqueue_j \label{eqv:proof4}
    \\
        &  \Gamma_1 = \{{\pp_n}:(\tqueue, \tinternal_{i\in I}\tout{\pq_i}{\ell_i}{\S_i}.{\T_i} )\} \cup \{\pp_j:(\tqueue_j,\T_j):j\in J\setminus\{n\}\} \label{eqv:proof5} \\
        & \Gamma_1 \subt \Gamma \qquad \text{and} \quad \Gamma_1 \quad \text{is safe/deadlock-free/live by \Cref{lem:subtyping-preserves-safety}} \label{eqv:proof5-1}
  \end{align}
  
   \noindent
    Now, let:
    \begin{align}
      \label{eqv:proof11-2}%
      \Gamma_1' = \{{\pp_n}:(\tqueue \cdot  \tout{\pq_k}{\ell_k}{\S_k}, \T_k)\} \cup \{\pp_j:(\tqueue_j,\T_j):j\in J\setminus\{n\}\}
    \end{align}

 \noindent
    Then, we conclude:
    \begin{align} 
      \label{eqv:proof12}%
      & \Gamma_1 \red \Gamma_1'%
      &\text{(by \eqref{eqv:proof5}, \eqref{eqv:proof11-2}, %
        and \rulename{e-send} of Def.~\ref{def:typing-env-reductions})}%
      \\%
      &\text{$\Gamma_1'$ is safe/deadlock-free/live}%
      &\hspace{-10mm}\text{%
        (by \eqref{eqv:proof5-1}, \eqref{eqv:proof12}, and %
        Proposition~\ref{lem:move-preserves-safety})%
      }%
      \\%
      \label{eqv:proof13}
      &  \Gamma_1' \vdash \N'%
      &\text{%
        (by \eqref{eqv:proof1}--\eqref{eqv:proof4}, and
        Def.~\ref{def:type-system})%
      } \\
      \label{eqv:proof14}
      & \exists \Gamma': \quad \Gamma_1'\subt \Gamma' \quad \Gamma\red \Gamma'
      &\text{
      (by \eqref{lem:sr:hyp:typing}, \eqref{eqv:proof5-1}, \eqref{eqv:proof12}, and \Cref{lem:subtyping-and-reduction})}\\
      &\Gamma' \vdash \N' 
      &\text{
      (by \eqref{eqv:proof13}, \eqref{eqv:proof14},  and \Cref{lem:supertype-session})}\\
      & \text{$\Gamma'$ is safe/deadlock-free/live}
      & \text{
      (by \eqref{lem:sr:hyp:typing}, \eqref{eqv:proof14} and \Cref{lem:move-preserves-safety})
      }
    \end{align}
\underline{\rulename{r-rcv}:} We have:
    \begin{align}
    \label{eq:sr-receive-sess}
      &\N = \pa{\pp_n}\sum\limits_{i\in I} \procin{\pq_i}{\ell_i(\x_i)}\PP_i \pc \pa{\pp_n}\h_\pp \pc \pa{\pp_m}\PP_\pq \pc \pa{\pp_m} ({\pp_n},\ell(\val))\!\cdot \!\h \pc \N_1 \nonumber\\
     & \quad(\exists k\in I:  (\pq_k,\ell_k)=(\pp_m, \ell))%
      \\%
      \label{eq:sr-receive-cont-sess}%
      &\N' = \pa{\pp_n} \PP_k\sub{\val}{\x_k} \pc \pa{\pp_n}\h_\pp \pc  \pa{\pp_m}\PP_\pq \pc  \pa{\pp_m} \h \pc \N_1
      \\%
      &\N_1 = \prod_{j\in J\setminus\{n,m\}} (\pa{\pp_j}{\PP_j} \pc \pa{\pp_j}{\h_j})
    \end{align}

  \noindent  
  By inversion on the typing rules we have:  
  \begin{align}
    &   \vdash \sum\limits_{i\in I} \procin{\pq_i}{\ell_i(\x_i)}\PP_i:\T_{\pp_n} \qquad \text{where} \quad \texternal_{i\in I}\tin{\pq_i}{\ell_i}{\S_i}.{\T_i}\subt \T_{\pp_n} \label{eq:sr-receive-1}\\
     &   \vdash \h_\pp:\tqueue_\pp \label{eq:sr-receive-1-queue}\\
      &  \vdash \PP_\pq:\T_\pq \label{eq:sr-receive-1-1}\\
    &  \vdash ({\pp_n},\ell(\val))\!\cdot\!\h:\tout{\pp_n}{\ell}{\S} \!\cdot\! \tqueue  \label{eq:sr-receive-w} \\
    &  \forall j\in J\setminus\{n,m\}:\quad  \vdash \PP_j:\T_j \label{eq:sr-receive-3}\\
    &  \forall j\in J\setminus\{n,m\}:\quad  \vdash \h_j:\tqueue_j \label{eq:sr-receive-4}\\
    &  \Gamma_1 = \{
        {\pp_n}:(\tqueue_\pp,\texternal_{i\in I}\tin{\pq_i}{\ell_i}{\S_i}.{\T_i}),
        {\pp_m}:(\tout{\pp_n}{\ell}{\S} \!\cdot\! \tqueue ,\T_\pq)
      \}
      \cup \{\pp_j:(\tqueue_j,\T_j):j \in J\setminus\{n,m\}\}  \label{eq:sr-receive-8}\\
            & \Gamma_1 \subt \Gamma \qquad \text{and} \quad \Gamma_1 \quad \text{is safe/deadlock-free/live by \Cref{lem:subtyping-preserves-safety}}\label{eq:sr-receive-8-1}
  \end{align}

 \noindent
From \eqref{eq:sr-receive-1} and \eqref{eq:sr-receive-w} by inversion on the typing rules we have
    \begin{align}
      & \forall i \in I:\quad x_i:\S_i\vdash \PP_i: \T_i \label{eq:sr-receive-cont-t}\\
      &\vdash \h:\tqueue \;\;\text{ and } \;\;\vdash \val:\S \label{eq:sr-receive-7}
    \end{align}
 \noindent
By \eqref{eq:sr-receive-sess} 
  and $\Gamma_1$ is safe (by \eqref{eq:sr-receive-8-1}), we obtain 
   \begin{align}
   \label{eq:sr-receive-labels}
   (\pq_k,\ell_k, \ST_k)=(\pp_m, \ell, \ST)
   \end{align}
By \Cref{lem:substitution}, \eqref{eq:sr-receive-cont-t}, \eqref{eq:sr-receive-7}, and  \eqref{eq:sr-receive-labels} we have 
    \begin{align}
   \label{eq:sr-receive-subst}
   \vdash \PP_k\sub{\val}{\x_k}: \T_k
   \end{align}
    Now let:
    \begin{align}
      \label{eq:sr-receive-11}%
      &\Gamma_1' = \{
        {\pp_n}:(\tqueue_\pp,{\T_k}),
        {\pp_m}:( \tqueue ,\T_\pq)
      \}
      \cup \{\pp_j:(\tqueue_j,\T_j):j \in J\setminus\{n,m\}\} 
    \end{align}

 \noindent
    And we conclude: 
    \begin{align}
      \label{eq:sr-receive-12}%
      & \Gamma_1 \red \Gamma_1'%
      &\text{(by \eqref{eq:sr-receive-8}, \eqref{eq:sr-receive-labels}, %
        \eqref{eq:sr-receive-11}, %
        and rule \rulename{e-rcv} of Def.~\ref{def:typing-env-reductions})}
      \\%
      \label{eq:sr-receive-13}
      & \text{$\Gamma_1'$ is safe/deadlock-free/live}
      &\text{(by \eqref{eq:sr-receive-8-1}, \eqref{eq:sr-receive-12}, and \Cref{lem:move-preserves-safety})}\\
      \label{eq:sr-receive-14}
      &\Gamma_1' \vdash \N' 
      &\text{(by \eqref{eq:sr-receive-1}-\eqref{eq:sr-receive-4}, \eqref{eq:sr-receive-subst}, and \Cref{def:type-system}) }\\
      \label{eq:sr-receive-15}
      & \exists \Gamma': \quad \Gamma_1'\subt \Gamma' \quad \Gamma \red \Gamma'
      & \text{(by \eqref{eq:sr-receive-8-1}, \eqref{eq:sr-receive-12}, and \Cref{lem:subtyping-and-reduction})}\\
      &  \Gamma' \vdash \N'%
      &\text{%
        (by \eqref{eq:sr-receive-14}, \eqref{eq:sr-receive-15},  and \Cref{lem:supertype-session}))%
      }\\
      &\text{$\Gamma'$ is safe/deadlock-free/live}%
      &\hspace{-20mm}\text{%
        (by \eqref{lem:sr:hyp:gamma-safe}, \eqref{eq:sr-receive-12}, and %
        Proposition~\ref{lem:move-preserves-safety})%
      }%
    \end{align}
\\        
\underline{\rulename{r-cond-T} (\rulename{r-cond-F})}: We have:
    \begin{align}
      &\N = \pa{\pp_n}\cond\val\PP\PQ \pc \pa{\pp_n}\h \pc  \N_1
      \label{eq:sr-cond-cont-sess} \\%
      &\N' = \pa{\pp_n} \PP  \pc  \pa{\pp_n} \h \pc \N_1 \quad (\N' = \pa{\pp_n} \PQ  \pc  \pa{\pp_n} \h \pc \N_1)
      \label{eq:sr-cond-sess} \\%
      &\N_1 = \prod_{j\in J\setminus\{n\}} (\pa{\pp_j}{\PP_j} \pc \pa{\pp_j}{\h_j})
    \end{align}
  By inversion on the typing rules we have:
  \begin{align}
    &   \vdash \cond\val\PP\PQ :\T \label{eq:sr-cond-1}\\
    &  \vdash \h:\tqueue  \label{eq:sr-cond-2} \\
    &  \forall j\in J\setminus\{n\}:\quad  \vdash \PP_j:\T_j \label{eq:sr-cond-3}\\
    &  \forall j\in J\setminus\{n\}:\quad  \vdash \h_j:\tqueue_j \label{eq:sr-cond-4}\\
    &  \Gamma = \{{\pp_n}:(\tqueue,\T)\}  \cup \{\pp_j:(\tqueue_j,\T_j):j\in J\setminus\{n\}\} \label{eq:sr-cond-5}%
  \end{align}
  By inversion on the typing rules we have $\exists \T': \T'\subt\T$ such that:
    \begin{align}
      & \vdash \PP: \T'       \label{eq:sr-cond-6}\\
      & \vdash \PQ: \T'       \label{eq:sr-cond-7}\\
      & \vdash \val:\tbool       \label{eq:sr-cond-8}
    \end{align}
    But then, by rule \rulename{t-sub} we have: 
    \begin{align}
      & \vdash \PP: \T       \label{eq:sr-cond-6-1}\\
      & \vdash \PQ: \T       \label{eq:sr-cond-7-1}
    \end{align}
    Then, letting $\Gamma' = \Gamma$, we have:
    \begin{align}
      &\text{$\Gamma'$ is safe/deadlock-free/live}%
      &\hspace{-20mm}\text{%
        (by \eqref{lem:sr:hyp:gamma-safe})%
      }%
      \\%
      &  \Gamma' \vdash \N'%
      &\text{%
        (by \eqref{eq:sr-cond-sess}, \eqref{eq:sr-cond-5},  and 
        \eqref{eq:sr-cond-6-1} (or \eqref{eq:sr-cond-7-1}))%
      }%
    \end{align}
Inductive step: 
\\
 \underline{\rulename{r-struct}} Assume that $\N \red \N'$ is derived from:
    \begin{align}
      &\N \equivv \N_1 \label{lem:sr:struct1} \\
      &\N_1 \red \N_1' \\
      &\N_1' \equivv \N' \label{lem:sr:struct3}
    \end{align}
    
By \eqref{lem:sr:struct1}, \eqref{lem:sr:hyp:typing}, and \Cref{lemma:type_congruence},  there is $\Gamma_1$ such that
   \begin{align}
      & \Gamma_1 \equiv \Gamma  \label{lem:sr:struct4} \\
      &  \Gamma_1 \vdash \N_1  \label{lem:sr:struct5} 
    \end{align}
  
 \noindent 
 Since $\Gamma$ is safe/deadlock-free/live and $ \Gamma \equiv \Gamma_1$, by \Cref{lem:congruence-preserves-safety}, $\Gamma_1$ is also safe/deadlock-free/live.
  Since $\N_1 \red \N_1'$, by induction hypothesis there is a safe/deadlock-free/live type environment %
  $\Gamma_1'$ 
  such that:
  \begin{align}
      & \Gamma_1 \red \Gamma_1' \text{ or } \Gamma_1\equiv\Gamma_1'  \label{lem:sr:struct6} \\
      & \Gamma_1'\vdash\N_1'  \label{lem:sr:struct7}  
  \end{align}

 \noindent
 Since $\Gamma_1$ is safe/deadlock-free/live, by \Cref{lem:move-preserves-safety} or \Cref{lem:congruence-preserves-safety}, and \eqref{lem:sr:struct6}, we have $\Gamma_1'$ is safe/deadlock-free/live. 
 Now, by \eqref{lem:sr:struct3}, \eqref{lem:sr:struct7}, and \Cref{lemma:type_congruence}, 
 there is a environment $\Gamma'$ such that
    \begin{align}
      & \Gamma' \equiv \Gamma'_1  \label{lem:sr:struct8-0}\\
      & \Gamma' \vdash \N' \label{lem:sr:struct8}
    \end{align}
    
\noindent
 Since $\Gamma_1'$ is safe/deadlock-free/live and $ \Gamma' \equiv \Gamma_1'$, by \Cref{lem:congruence-preserves-safety}, $\Gamma'$ is also safe/deadlock-free/live.
We now may conclude with:
    \begin{align}
      &  \Gamma \red \Gamma' \text{ or } \Gamma\equiv\Gamma' & \text{(by \eqref{lem:sr:struct4}, \eqref{lem:sr:struct6}, \eqref{lem:sr:struct8-0} and rule \rulename{e-struct} of Def.~\ref{def:typing-env-reductions})}
    \end{align}
\end{proof}

\ThmTypeSafety*
\begin{proof}
Assume $\N$ is not safe. Then, there is a session path $(\N_i)_{i\in I}$, starting with $\N_0=\N$, such that for some $i\in I$
\begin{align}
&\M_i \equivv \pa\pp\sum_{j\in J} \procin{\pq_j}{\ell_j(\x_j)}\PP_j  \pc \pa\pp\h \pc \pa\pq{\PQ} \pc \pa\pq\msg{\pp}{\ell}{\val}\cdot\h_\pq \pc \M'
\nonumber\\ 
&\text{with} \quad \pq\in\{\pq_j\}_{j\in J} 
\quad \text{and} \quad \forall j\in J\;\;(\pq, \ell)\not=(\pq_j, \ell_j)
    \label{eq:type-safety-1}
\end{align}

Since $\N=\N_0\red\ldots\red\N_i$ and $\Gamma$ is safe, 
by consecutive application of \Cref{thm:SR}, we obtain $\Gamma=\Gamma_0\reds\ldots\reds \Gamma_i$, 
where $\Gamma_i$ is safe and $\Gamma_i\vdash \N_i$. 

By \Cref{lemma:type_congruence} there is 
$\Gamma_i'$ such that $\Gamma_i\equiv\Gamma'_i$ and 
\[
\Gamma'_i \vdash \pa\pp\sum_{j\in J} \procin{\pq_j}{\ell_j(\x_j)}\PP_j  \pc \pa\pp\h \pc \pa\pq{\PQ} \pc \pa\pq\msg{\pp}{\ell}{\val}\cdot\h_\pq \pc \M'
\]
Since $\Gamma_i$ is safe, by \Cref{lem:congruence-preserves-safety}, we have that $\Gamma'_i$ is safe. 
Same as in the proof of \Cref{thm:SR} (see \eqref{eq:sr-receive-8} and \eqref{eq:sr-receive-8-1}) we may conclude there is $\Gamma''_i$, such that $\Gamma''_i\subt \Gamma'_i$ and 
\[
\Gamma''_i = \{
        {\pp}:(\tqueue_\pp,\texternal_{j\in J}\tin{\pq_j}{\ell_j}{\S_j}.{\T_j}),
        {\pq}:(\tout{\pp}{\ell}{\S} \!\cdot\! \tqueue ,\T_\pq)
      \}
      \cup \Gamma'''_i  
\]
with $\pq\in\{\pq_j\}_{j\in J}$, and $\Gamma''_i$ is safe by \Cref{lem:subtyping-preserves-safety}. 
Since $\Gamma''_i$ is safe, by \ref{item:type:path:safe}, %
   $\exists j\in J: (\pq_j, \ell_j, \ST_j)=(\pq_k, \ell_k, \ST_k)$, 
   which contradicts \eqref{eq:type-safety-1}.
\end{proof}

\ThmSessionFidelity*
\begin{proof}
We distinguish two cases for deriving $\Gamma\red$.
 
\noindent
\underline{Case $1$:} $\Gamma 
\;\redRecv{\pq}{\pp_k}{\ell_{k}}\;$. 
Then: 
\[
\Gamma\equiv {\pp_k}{:}\,(\tout{\pq}{\ell_k}{\ST_k}{\cdot} \tqueue, \T_{\pp}),\pq{:}\,(\tqueue_{\pq}, \texternal_{i\in I}\tin{\pp_i}{\ell_i}{\ST_i}.{\T_i}), \Gamma_1 
\;\redRecv{\pq}{\pp_k}{\ell_{k}}\; 
{\pp_k}{:}\,(\tqueue, \T_\pp), \pq{:}\,(\tqueue_{\pq}, \T_k),\Gamma_1= \Gamma' %
\]
where $k\in I$, and where in structural congruence (starting with $\Gamma$) we can assume folding of types is not used (since only the unfolding is necessary). By inversion on the typing rules we derive  
\[
\N \equivv \pa{\pp_k}{\PP} \pc \pa{\pp_k}\msg{\pq}{\ell_k}{\val}\cdot\h \pc\pa\pq\sum_{i\in I\cup J} \procin{\pp_i}{\ell_i(\x_i)}\PP_i  \pc \pa\pp{\h_\pq} \pc  \M_1 
\]
for some $\PP, \val, \h, \PP_i$ (for $i\in I\cup J)$, $\h_\pq,$ and $\M_1$, such that 
\begin{align*}
\vdash \PP: \T_\pp \quad 
\vdash \val: \S_k \quad
\vdash \h: \tqueue \quad
\x_i:\ST_i \vdash \PP_i : \T_i \quad
\vdash \h_q: \tqueue_q \quad 
\Gamma_1 \vdash \N_1
\end{align*}
Since $k\in I(\subseteq I\cup J)$, we can show that for   
\[
\N' = \pa{\pp_k}{\PP} \pc \pa{\pp_k}\h \pc\pa\pq\PP_k\subst{\val}{\x_k}  \pc \pa\pq{\h_\pq} \pc  \M_1 
\]
we have $\N\red \N'$ and $\Gamma'\vdash \N'$.

\noindent 
\underline{Case $2$:} 
$\Gamma  
\;\;\redSend{\pp}{\pq_k}{\ell_{k}}$. We have: 
\[
\Gamma \equiv \pp:(\tqueue, \tinternal_{i\in I\cup J}\tout{\pq_i}{\ell_i}{\ST_i}.{\T_i}),\Gamma_1 
\;\;\redSend{\pp}{\pq_k}{\ell_{k}}
\]
where $k\in I\cup J$, and where in structural congruence (starting with $\Gamma$) we can assume folding of types is not used (since only the unfolding is necessary).  
By inversion on the typing rules, we have
\[
\N \equivv \pa{\pp}\sum_{i\in I}\procout{\pq_i}{\ell_i}{\val_i}{\PP_i}\pc \pa{\pp}\h \pc  \N_1
\]
for some $\val_i, \PP_i$ (for $i\in I$), $\h,$ and $\M_1$, such that 
\begin{align*} 
\vdash \val_i: \S_i \quad
\vdash \PP_i : \T_i \quad
\vdash \h: \tqueue \quad 
\Gamma_1 \vdash \N_1
\end{align*}
Now for $k\in I$ defining 
\[
\Gamma_k' = \pp:(\tqueue{\cdot} \tout{\pq_k}{\ell_k}{\ST_k}\phantom{},
   \T_k),\Gamma_1 
   \quad \text{and}  \quad
   \N_k' = \pa{\pp}{\PP_k}\pc\pa{\pp}\h\cdot ({\pq_k},{\ell_k}(\val_k))\pc \N_1
   \] 
we obtain $\Gamma\red \Gamma'_k$,\; $\N \red \N_k'$ and $\Gamma'_k \vdash \N'_k$.
\end{proof}

\ThmSessionDeadlockFreedom*
\begin{proof}
Take a path $(\N_i)_{i\in I}$, 
starting with $\N_0=\N$, 
such that $I=\{0,1,\ldots, n\}$ and $\N_n\nred$.  
Since $\N=\N_0\red\ldots\red\N_n$ and $\Gamma$ is deadlock-free, 
by consecutive application of \Cref{thm:SR}, we obtain $\Gamma=\Gamma_0\reds\ldots\reds \Gamma_n$, 
where 
$\Gamma_i\vdash \N_i$. 
Since $\N_n\nred$, by contrapositive of  \Cref{thm:session-fidelity}, we obtain $\Gamma_n\nred$. 
Since $\Gamma$ is deadlock-free, we have $\fend(\Gamma_n)$, 
and with $\Gamma_n\vdash \N_n$ and by  inversion on the typing rules we conclude $\M_n \equivv  \pa\pp{\inact} \pc \pa\pp{\emptyqueue}$. 
This proves $\N$ is deadlock-free. 
\end{proof}

\begin{remark}\label{remark:peters-df-2}
The proof of deadlock-freedom \cite[Theorem 5.2]{PetersY24} 
uses reasoning that by 
 subject reduction \cite[Theorem 5.1]{PetersY24}
 from $\Gamma'\vdash\N'$ and $\N'\nred$ one can conclude $\Gamma'\nred$. Such conclusion can be a consequence of session fidelity, not subject reduction. 
\end{remark}

\ThmLiveness*
\begin{proof}
Assume $\N$ is not live. Then, there is a fair session path $(\N_i)_{i\in I}$, starting with $\N_0=\N$, that violate one of the clauses:
\begin{itemize}
\item \ref{item:session-liveness:if}:  There is $i\in I$ such that $\M_i \equivv \pa\pp{\cond{\val}{\PP}{\PQ}} \pc \pa\pp\h \pc \M'$, %
    and 
      $\forall k$ such that $I\ni k\geq  i$ %
   we have that 
    $\M_k \equivv \pa\pp{\PP} \pc \pa\pp\h \pc \M''$
    and %
    $\M_k \equivv \pa\pp{\PQ} \pc \pa\pp\h \pc \M''$ do not hold. 
    By \Cref{thm:SR} we obtain that $\M_i$ is also typable, which by inspection of the typing rules implies that $\val$ is a boolean ($\true$ or $\false$). Hence, we have that $\M_i$ can reduce the $\pp$'s conditional process, but along path $(\N_i)_{i\in I}$ this is never scheduled. 
    This implies path $(\N_i)_{i\in I}$ also violets clause \ref{item:session-fairness:if}, and it is not fair - which is a contradiction. 
\item \ref{item:session-liveness:out}:  There is $i\in I$ such that $\M_i \equivv \pa\pp\sum_{j\in J}{\procout{\pq_j}{\ell_j}{\val_j}{\PP_j}} \pc \pa\pp\h \pc \M'$, %
    and
      $\forall k$ such that $I\ni k\geq  i$ %
    we have that %
    $\M_k   \redSend{\pp}{\pq_j}{\ell_j} \M_{k+1}$, for any $j\in J$ does not hold. 
    By \Cref{thm:SR} we obtain that $\M_i$ is also typable, which by inspection of the typing rules implies that $\val_j$ is a value (a number or a boolean) and not a variable, for all $j\in J$. This is a consequence of the fact that we can only type closed processes (by \rulename{t-sess}). Hence, we have that $\M_i$ can reduce the $\pp$'s internal choice process, but along path $(\N_i)_{i\in I}$ this is never scheduled. 
    This implies path $(\N_i)_{i\in I}$ also violets clause \ref{item:session-fairness:send}, and it is not fair - which is a contradiction. 
\item \ref{item:session-liveness:send}: 
There is $i\in I$ such that $\M_i \equivv \pa\pp\PP \pc \pa\pp\msg{\pq}{\ell}{\val}\cdot\h \pc \M'$, and %
    we have that   
      $\forall k$ such that $I\ni k\geq  i$, %
    $\M_k \redRecv{\pq}{\pp}{\ell} \M_{k+1}$ does not hold. 
    Since $\Gamma\vdash \N$ and $\Gamma$ is live, following the proof of \Cref{thm:SR},  we can construct typing environment path 
    $(\Gamma_j)_{j\in J}$, with
    $\Gamma_0=\Gamma$, where for $i\in I$, if 
    \begin{align}
    &\N_i\redSend{\pp}{\pq}{\ell}\N_{i+1} 
    \quad \text{or} \quad
    \N_i\redRecv{\pp}{\pq}{\ell}\N_{i+1} 
    \quad \text{or} \quad
    \N_i\redIf{\pp}\N_{i+1}  \quad \text{then}
    \nonumber\\
    &\Gamma_i\redSend{\pp}{\pq}{\ell}\Gamma_{i+1} 
    \quad \text{or} \quad
    \Gamma_i\redRecv{\pp}{\pq}{\ell}\Gamma_{i+1} 
    \quad \text{or} \quad
    \Gamma_i=\Gamma_{i+1} \quad \text{respecively} 
    \label{eq:thm:liveness-1}
    \end{align}
Notice that trace of $(\Gamma_j)_{j\in J}$ is exactly the same as $(\N_i)_{i\in I}$ excluding the if reductions. 
Since $(\N_i)_{i\in I}$ is fair 
(satisfies clauses \ref{item:session-fairness:send}, \ref{item:session-fairness:recv}, 
and \ref{item:session-fairness:if}), 
we conclude $(\Gamma_j)_{j\in J}$ is also fair (satisfies clauses \ref{item:fairness:send} 
and  \ref{item:fairness:recv}). 
From $\Gamma_i\vdash \N_i$, we may conclude 
$\Gamma_i(\pp) \equiv \left(\tout\pq{\ell}\ST \cdot \tqueue \,,\, \T\right)$, 
for some $\tqueue$ and $\T$. 
But then, $\forall k$ such that $J\ni k\geq  i$ %
    we have that %
    $\Gamma_k \redRecv{\pq}{\pp}{\ell} \Gamma_{k+1}$ does not hold. 
This implies $(\Gamma_j)_{j\in J}$ is fair but not live - a contradiction with assumption that $\Gamma$ is live.
\item\ref{item:session-liveness:recv}: %
There is $i\in I$ such that $\M_i \equivv \pa\pp\sum_{s\in S\cup N} \procin{\pq_s}{\ell_s(\x_s)}\PP_s  \pc \pa\pp\h \pc \M'$, %
   and  
     $\forall k$ such that $I\ni k\geq  i$,  
    we have that %
    $\M_k \redRecv{\pp}{\pq}{\ell} \M_{k+1}$, where $(\pq, \ell)\in \{(\pq_s, \ell_s)\}_{s\in S\cup N}$, does not hold.  
As in the previous case in \eqref{eq:thm:liveness-1}, 
for fair path $(\N_i)_{i\in I}$ we may construct fair path $(\Gamma_j)_{j\in J}$. 
From $\Gamma_i\vdash\N_i$ we may conclude 
 $\Gamma_i(\pp) \equiv \left(\tqueue_\pp \,,\, \texternal_{s \in S}{\tin{\pq_s}{\ell_s}{\ST_s}.{\T_s}}\right)$, for some $\tqueue_\pp, \ST_s$, and $\T_s$, for $s\in S$. 
But then, 
$\forall k, \pq, \ell$ such that $J\ni k\geq  i$ and 
$(\pq,\ell)\in \{(\pq_s,\ell_s)\}_{s\in S}$, %
we have that  %
$\Gamma_k \redRecv{\pp}{\pq}{\ell} \Gamma_{k+1}$ does not hold. 
This implies $(\Gamma_j)_{j\in J}$ is fair but not live - a contradiction with assumption that $\Gamma$ is live.
\end{itemize}
\end{proof}

\end{document}